\title{How local constraints influence network diameter and applications to LCL generalizations}
\author{Nicolas Bousquet}{CNRS, INSA Lyon, UCBL, LIRIS, UMR5205, F-69622 Villeurbanne, France }{}{}{}{}
\author{Laurent Feuilloley}{CNRS, INSA Lyon, UCBL, LIRIS, UMR5205, F-69622 Villeurbanne, France }{}{}{}{}
\author{Théo Pierron}{CNRS, INSA Lyon, UCBL, LIRIS, UMR5205, F-69622 Villeurbanne, France }{}{}{}{}
\authorrunning{N. Bousquet, L. Feuilloley, T. Pierron}
\newtheorem{question}{Question}
\DeclareMathOperator{\diam}{\mathrm{diam}}
\newcommand{\midd}{\mathcal{M}}
\keywords{Locally checkable labelings, network diameter, local checkers, LOCAL model, unbounded-degree graphs}
\begin{document}

\maketitle

\begin{abstract}
In this paper, we investigate how local rules enforced at every node can influence the topology of a network. 
More precisely, we establish several results on the diameter of trees as a function of the number of nodes, as listed below. 
These results have important consequences on the landscape of locally checkable labelings (LCL) on \emph{unbounded} degree graphs, a case in which our lack of knowledge is in striking contrast with that of \emph{bounded degree graphs}, that has been intensively studied recently. 

First, we show that the diameter of a tree can be controlled very precisely by a local checker (that is, a distributed decision algorithm that accepts a graph iff all nodes accept locally), granted that its checkability radius is allowed to be at least $2$ (and that the target diameter is not too close to $n$). 
As a corollary, we prove that the gaps in the landscape of LCLs (in bounded-degree graphs) basically disappear in unbounded degree graphs.

Second, we prove that for checkers at distance 1, the maximum diameter can only be trivial (constant or linear), while the minimum diameter can in addition be $\Theta(\log n)$ and $\Theta(n^{1/k})$ for $k\in \mathbb{N}$. These functions interestingly coincide with the known regimes for LCLs. 

Third, we explore computational restrictions of local checkers. 
In particular, we introduce a class of checkers, that we call \emph{degree-myopic}, that cannot distinguish perfectly the degrees of their neighbors. 
With these checkers, we show that the maximum diameter can only be  $O(1)$, $\Theta(\sqrt{n})$, $\Theta(\log{n}/\log \log n)$, $\Theta(\log{n})$, or $\Omega(n)$. 
Since gaps do appear in the maximum diameter, one can hope that an interesting LCL landscape exists for restricted local checkers. 

In addition to the LCL motivation, we hope that our distributed lenses can help give a new point of view on how global structures, such as living beings, can be maintained by local phenomena; understanding the trade-off between the power of the checking and the possible resulting shapes. 

\end{abstract}




\section{Introduction}

\subsection{Questions and motivations}

A general question in distributed computing is: how well can we control global parameters of a system if we can only check it partially and in a distributed manner?
Here, we are interested in the following graph-oriented version of this question: How do local constraints influence the shape of a network? 
More precisely, if we define a set of rules to be satisfied at every node of the network, what are the networks that satisfy these rules? What properties do they have? And conversely, if we want to ensure a given global property, can we obtain it by only enforcing local constraints?

In this paper, we focus on a concrete version of the question, where the networks are (colored) graphs, and in particular \emph{trees}, and the global parameter studied is the \emph{diameter}. Before we discuss our motivations and how our perspective differs from previous work, let us introduce some vocabulary. A \emph{local checker} is a local algorithm run at every node of the network, that outputs a binary decision accept/reject, only based on a neighborhood at constant distance around itself. A network (or equivalently a graph) is globally accepted if every node locally accepts. For example, a local checker that checks if the number of neighbors of the node at hand is at most $\Delta$ accepts exactly all the graphs of maximum degree at most $\Delta$.

\paragraph*{Main motivation: Going beyond bounded-degree for LCLs}

The most popular problems in the study of locality, in the sense of the LOCAL model, are called \emph{locally checkable labelings} or LCLs for short, introduced in the 90s~\cite{NaorS93}. 
The main characteristic of these problems is that their outputs can be checked locally. For example, to verify that a coloring is correct, every node simply checks that the color it has received as output is different from the ones of its neighbors. 
There are also three finiteness requirements: the number of possible outputs, of possible inputs, and the maximum degree $\Delta$ must be bounded by a constant (\emph{i.e.} independent of the number of nodes $n$).  
Therefore, an LCL can be described by a finite list of correct neighborhoods. In the case of coloring, this list is made of the following neighborhoods: a node of color $c$ with at most $\Delta$ neighbors, each of them being of a color different from $c$.  

After a decade of intense research, we now have a good understanding of the complexity of computing a solution of an LCL problem in the LOCAL model.
More precisely, we know the landscape of complexities for these problems: roughly, we know what are the functions $f$ for which there exists an LCL whose optimal algorithm has complexity~$f(n, \Delta)$.
For example, we know that there is no LCL whose complexity in the LOCAL model lies between $\omega(\log^*n)$ and $o(\log n)$~\cite{ChangKP19}. 

A natural question is: can we generalize this theory by relaxing the finiteness requirements? 
The case of bounded degree and unbounded number of labels has been explored in~\cite{HasemannHRS16}, with the fractional coloring problem\footnote{There are actually two definitions of fractional coloring, the second one enforcing a bounded number of labels, see e.g.~\cite{BousquetEP21, BalliuKO21}.}. Here we are interested in the other direction: going beyond bounded degree. (This question was very recently tackled by \cite{LievonenPS24} as we discuss later.)

Let us now illustrate why the diameter of the network plays a key role in the context of LCL complexities. 
In the bounded-degree case, two key complexities for deterministic algorithms are $\Theta(n)$ and $\Theta(\log n)$. The problems of complexity $\Theta(n)$ are called \emph{global problems}; they are generally the ones where, in a path, the two endpoints have to coordinate their outputs, which requires them to run for $\Omega(n)$ rounds (and $O(n)$ is enough for any problem in the LOCAL model). The problems of complexity $\Theta(\log n)$ are typically the ones where the hardest instances locally look like complete regular trees, and intuitively each node needs to see a leaf to be able to decide its output. 
Therefore, what dictates the complexity for these two cases is, respectively, the maximum and the minimum diameter of trees of bounded degree, which are respectively order of~$n$ and~$\log n$.
Note that for these observations, restricting to trees is not problematic, since everything holds in trees. 
In general, trees are essential in the LCL theory, which justifies why we focus on these graphs; for example, the key technique of round elimination (see~\cite{Suomela20} and references therein) basically works only in trees.\footnote{Another motivations for restricting to trees is that, under some restrictions, if some graph accepted by our generalization of LCL's contains cycles, then the resulting diameters are trivial, as proved at the end of the paper.}

By understanding the diameters of trees accepted by local checkers that are not simply checking that the degree is below $\Delta$, we want to pave the way of a complexity landscape beyond bounded degree. Hence, the first type of questions we want to investigate is the following.

\begin{question}
    For a given local checker, what is the minimum and maximum diameter of the accepted graphs, as a function of $n$?
\end{question}

Also, since it seems too optimistic to hope for a nice landscape in general graphs, we want to explore which restrictions are worth studying in the future. For example, restrictions for which there are gaps in the possible maximum and minimum diameter.

\begin{question}
    What are natural classes of local checkers such that there are gaps in the landscape of maximum and/or the minimum diameter of graphs accepted by such checkers (\emph{e.g.} diameters that are impossible to obtain)?
\end{question}

Let us now quickly summarize the approach and results of  \cite{LievonenPS24}, on this question of understanding complexity landscape beyond bounded-degree. First, the authors sketch how one can create problems of arbitrary complexity between constant and $\Theta(\log n)$, when the degree is unbounded, by using a problem of complexity $O(\log_{\Delta}n)$ and then ignoring a specific number of adjacent edges at each node to artificially reduce the degree~$\Delta$. Second, they restrict the scope in two ways: (1) they study only binary labeling problems (that consist in selecting edges, like in the matching problem), where (2) the constraints are \emph{structurally simple} (roughly,  in a proper labeling, the number of selected (resp. non-selected) edges around every node has to be either smaller than a constant, or polynomially large in the degree). In this setting, they can characterize precisely what happens in the logarithmic regime, thanks to a degree-aware variant of the rake-and-compress technique. 
Our approach is quite orthogonal, as we study the structure of the graphs and do not discuss what are the problems that are solved or the generic algorithms that could be used. Nevertheless, the general story has a similar flavor: we will show in a different, more general and precise way how arbitrary complexities can be obtained, and then restrict to a setting where gaps reappear.

\paragraph*{Second motivation: Maintaining global structure locally}

Our second motivation is a more exploratory one. We would like to propose a new perspective on how structures can be maintained without centralized monitoring. In other words, we want to explore how simple local rules on small-scale entities shape larger-scale objects. One can think of a crude modeling of living beings, or of self-organizing swarm of robots.

From this perspective, our motivation for studying the diameter is that it is maybe the most basic characteristic of a shape, and the one for studying trees is that these are simple shape that appear in Nature. 

Now, instead of starting from a checker and analyzing the possible diameters, we want to start from some target diameter (as a function of $n$), e.g. one that would have benefits for the global entity, and ask whether we can have a local checker that maintains this diameter.

\begin{question}\label{q:f-given}
    Given some function $f$, can we design a local checker such that the accepted $n$-vertex graphs have diameter $f(n)$?
\end{question}

Also, since it is unrealistic in Nature or in robots to have unlimited local computation, we refine Question~\ref{q:f-given}, by considering the complexity of the local checking.

\begin{question}
    Given some function $f$, can we design a local checker with limited computational capabilities, such that the accepted $n$-vertex graphs have diameter $f(n)$?
\end{question}

The rest of the introduction is organized as follows: in Subsection~\ref{subsec:model} we introduce the precise definitions, in Subsections~\ref{subsec:results-general-checkers} and~\ref{subsec:results-restricted-checkers} we review our results and techniques for general and restricted checkers, respectively.  In Subsection~\ref{app:further-related-work}, we review additional related work.

\subsection{Model and definitions}
\label{subsec:model}

The graphs/trees considered in this paper are simple and loopless. They can be vertex colored, with a constant number of colors. Remember that the diameter of a graph is the length of the largest shortest path between two nodes, in terms of number of edges. 
Inspired by the definition of LCLs and by our biological motivation, we consider anonymous networks, \emph{e.g.} the nodes do not have identifiers. 

\begin{definition}[View of a node]
    The \emph{view} at distance $d$ of a node $v$ in a graph $G$ is the subgraph of $G$, that contains all nodes at distance at most $d$ from $v$, and all the edges with at least one endpoint at distance at most $d-1$ from $v$, and where $v$ is marked as the center. 
\end{definition}

\begin{definition}[Local checker ; $\mathcal{L}_{c,d}$ ; checkability radius]
    A local checker at distance $d$ and with $c$ colors is a local algorithm that is used on a $c$-colored graph, such that, applied to a vertex $v$, it takes the neighborhood at distance $d$ around $v$ and chooses an output: accept or reject. We denote by $\mathcal{L}_{c,d}$ the set of all local checkers at distance $d$ with $c$ colors. The distance $d$ is called the \emph{checkability radius}.
\end{definition}

For short, we sometimes simply use \emph{checker} instead of \emph{local checker}. The definition of a degree-myopic local checker is given in Subsection~\ref{subsec:results-restricted-checkers}, along with the discussion of its origin. 

\begin{definition}[Class accepted/recognized by a local checker]
    Given a local checker $L$, the class of (colored) trees \emph{accepted} (or equivalentaly \emph{recognized}) by this checker, denoted by $\mathcal{C}(L)$, is the set of trees such that on every node the checker $L$ accepts. 
\end{definition}

\begin{definition}[Generalized-LCL]
    Let a \emph{generalized-LCL} be a problem where each node has to choose an output from a finite set, such that the correct output configurations can be recognized by a local checker.
\end{definition}

Intuitively, a checker has maximum (resp. minimum) diameter $f(n)$ if all trees recognized by this checker have diameter at most (resp. at least) $f(n)$, when $n$ is the number of nodes, and this bound is tight. The proper definitions given below, use infimum/supremum instead of maximum/minimum, but we keep the names maximum/minimum, since they are more intuitive.

\begin{definition}[Exact/minimum/maximum diameter of a checker]
    Let $L$ be a local checker. 
    \begin{itemize}
        \item $L$ has \emph{exact diameter} $f$ if for all $n$, $\forall G\in \mathcal{C}(L), |V(G)|=n, Diam(G)=f(n)$.
        \item $L$ has \emph{maximum diameter} $f$ if, for all $n$, $\sup_{k \ge n}\{Diam(G)/f(k)$ for all $G$ such that $|V(G)|=k$ and $G \in L\}= 1$.
        \item $L$ has \emph{minimum diameter} $f$ if, for all $n$, $\inf_{k \ge n}\{Diam(G)/f(k)$ for all $G$ such that $|V(G)|=k$ and $G \in L\}= 1$.     
    \end{itemize}
\end{definition}

\subsection{Discussion, results and techniques for general checkers}
\label{subsec:results-general-checkers}

Let us first review our results about local checkers without restriction on the computation power of the nodes.

\paragraph*{Warm up for maximum and exact diameter}

Let us start by introducing some basic intuitions for maximum and exact diameter.
First, let us consider a local checker $L$ at distance $1$, without colors. Such a checker is basically of the following form: if the degree of the node belongs to some specific set of integers $S$, then accept, otherwise reject. Necessarily, $1$ belongs to $S$ because we need to allow leaves, if we want to accept finite trees.
Let us informally prove that the maximum diameter of $L$ is $\Omega(n)$. Consider a tree $T$ accepted by~$L$, and two of its edges $uv$ and $wz$. The deletion of these edges leaves three connected components: $T_{middle}$ which is the part of the tree between $uv$ and $wz$, and $T_u$ and $T_{z}$, the connected components of $u$ and $z$, respectively. Then, we can define new trees by replacing $T_{middle}$ by an arbitrary number of copies of $T_{middle}$ organized into a chain (identifying copies of $wz$ in one copy with $uv$ in the next one). The  set of degrees in the new tree is the same
as in the original tree. Hence, all these new trees are also accepted, and they have asymptotic linear diameter. 
The key point is that we can use the fact that the neighborhoods were indistinguishable to make the graph path-like. We refer to this "pumping" technique as \emph{grafting} and will be formally defined in full generality in Section~\ref{sec:maxdiam} for any possible distance and number of colors.

Now, for a positive result, consider a checker at distance $2$ (again without colors). It basically manipulates the degree of its neighbors. Then it is possible to have a checker that recognizes exactly the graphs of the following form: take path of length $i$, choose one endpoint $u$, and attach $i$ leaves to the node of the path at distance $i$ from $u$. Indeed, the nodes can check that they are either leaves or that their non-leaf neighbors follow the increasing degree sequence. This leads to a local checker that accept trees of maximum diameter $\Omega(\sqrt{n})$.  

An important aspect of the previous argument is that all nodes had different degrees. This actually works with a weaker condition, namely, there is no repeated pair of degrees of adjacent vertices. Indeed, if a pair is repeated, then we can use pumping to prove that the checker accepts trees of linear diameter. By using a longer degree sequence without such repetitions (inspired by de Bruijn sequences), one can actually improve the previous bound to $n^{2/3}$. Moreover, if we want to target a diameter that is much larger than $n^{2/3}$, we would necessarily get some repeated chunks, yielding linear diameter. In some sense, carrying the information about the distance from an extremity uses some quantity of nodes, and at some point we run out of nodes. If we want to go further, then there is some repetition and the maximum diameter moves to the boring linear regime.

But one can also wonder if we can obtain smaller diameters. Suppose that we want to target a diameter that is smaller than $n^{2/3}$. One can show that with the same kind of construction, except that we truncate the main path and attach enough leaves to the last node to get exactly the diameter we want, we can obtain any function of~$n$ which is $O(n^{2/3})$ as a diameter. This can be recognized by a local checker, because the last node can check that the number of leaves it has is consistent with the degree of the second-to-last node, which measures the diameter. We refer to this technique as \emph{padding}.  

Our results on maximum and exact diameters, stated in the next paragraph, are generalizations of these constructions.


\paragraph*{Formal results and more techniques on maximum and exact diameter}

Let us now review our formal results.
First, we prove that, for each possible distance $d$ and every number of colors, there is a threshold function $S_{c,d}(n)$ such that local checkers accept trees of linear diameter or trees of diameter at most $O(S_{c,d}(n))$. In other words, not all the maximum diameters can be obtained with local checkers. More formally, we prove in Section~\ref{sec:maxdiam} that the following holds:

\begin{restatable}{theorem}{ThmGapResult}
\label{thm:gap_result}
Every local checker in $\mathcal{L}_{c,d}$ has maximum diameter at most $(4d^2+4d+1)\cdot S_{c,d}(n)$ or $\Theta(n)$, where $S_{c,1}(n)= c^2/9$, $S_{c,2}(n)=(cn^c)^{2/(2c+1)}$, $S_{c,3}(n)=36n/\log^2n$ and $S_{c,d}(n)=4n/g_d(\log n)$ if $d>3$.
\end{restatable}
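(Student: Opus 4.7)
The plan is a pumping lemma: if a checker $L \in \mathcal{L}_{c,d}$ accepts some tree $T$ on $n$ vertices with diameter $D > (2d+1)^2 S_{c,d}(n)$, then by grafting we build a family of accepted trees of linearly growing diameter, so $L$ must have maximum diameter $\Theta(n)$. This extends the warm-up's grafting from degree-matching to depth-$d$ view matching.

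Fix a diametrical path $P = v_0 v_1 \cdots v_D$ in $T$, and for each edge $e_i = (v_i, v_{i+1})$ of $P$ define its \emph{type}: the ordered pair of rooted colored subtrees of depth $d$ obtained by deleting $e_i$ and rooting the two components at $v_i$ and $v_{i+1}$. If two edges $e_i$ and $e_j$ with $j - i \geq 2d+1$ share a type, the two sides match and one can graft: delete $e_i$ and $e_j$ to split $T$ into $T_u, T_{\mathrm{middle}}, T_z$, then insert any number $k \geq 1$ of copies of $T_{\mathrm{middle}}$ in series between $T_u$ and $T_z$, reconnecting via edges prescribed by the matching type at each joint. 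Because the type was preserved at each joint, every vertex of the new tree has a depth-$d$ view isomorphic to some view appearing in $T$, hence accepted by $L$. With modest care so that the path segment of $T_{\mathrm{middle}}$ constitutes a constant fraction of its size, each extra copy contributes $\Theta(n)$ new vertices and a linear increase in diameter, giving maximum diameter $\Theta(n)$.

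The substance of the proof is bounding the number of distinct edge types appearing on $P$ by $S_{c,d}(n)$. For $d=1$ a type reduces to the ordered pair of endpoint colors, giving $c^2$ possibilities and hence, after absorbing the $(2d+1)^2 = 9$ factor from the statement, $S_{c,1} = c^2/9$. For $d=2$ a type is a pair of rooted depth-$1$ colored stars whose sizes together are bounded against $n$; bounding the number of stars of size $s$ by a degree-$c$ polynomial in $s$ and optimizing how the budget is split yields the exponent $2/(2c+1)$ in $(cn^c)^{2/(2c+1)}$. For $d \geq 3$ the number of rooted colored trees of depth $d$ with at most $s$ vertices grows like an iterated exponential $g_d(s)$, and inverting yields $36n/\log^2 n$ for $d=3$ and $4n/g_d(\log n)$ for $d > 3$. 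A pigeonhole over the $(2d+1) S_{c,d}(n)$ non-overlapping blocks of size $2d+1$ along $P$ then forces two matching types at distance $\geq 2d+1$, which the grafting step exploits.

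The main obstacle is the counting for $d=2$: since degrees are unbounded, the set of possible types is a priori infinite, and one must exploit the fact that the side subtrees share a global budget of $n$ vertices along $P$. The correct exponent emerges from a combinatorial count of depth-$1$ stars coupled with an optimization over how the budget is distributed across edges of $P$. A secondary technical point is ensuring the grafted copies contain enough of the diametrical path so that diameter genuinely grows linearly in the new tree's size; this motivates selecting matching types at non-overlapping blocks of size exactly $2d+1$, producing the $(2d+1)^2$ factor in the statement.
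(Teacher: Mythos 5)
Your proposal follows essentially the same route as the paper: locate two edges of a long path with identical depth-$d$ views (after discarding the few edges whose views meet a large hanging subtree, so the remaining views have $O(n/S_{c,d}(n))$ vertices and can be counted via the asymptotics for the number of bounded-height trees), then pump by grafting to obtain accepted trees of linear diameter. The only cosmetic differences are that the paper's grafting lemma needs no spacing between the two matched edges (each graft replaces a whole component, so the $(2d+1)^2$ factor is spent entirely on discarding the at most $2d\cdot S_{c,d}(n)$ large hanging subtrees, each blocking at most $2d+2$ edges of the path), and that $g_d$ is the inverse of $x\mapsto x/\log^{(d-3)}x$ rather than the tree-counting function itself, which is what makes the final inversion give $4n/g_d(\log n)$.
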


For large values of $d$, the threshold function depends on the inverse $g_d$ of the function $x\mapsto x/\log^{(d-3)}x$, where the log is iterated $d-3$ times. Note that one can think of the growth of $g_d$ as slightly super-linear, so the last case of the following theorem is roughly $\Theta(n/\log n)$. 

We moreover provide constructions reaching these upper bounds. To do so, we generalize the idea of paths with a prescribed degree sequence by defining a class of checkers that accept only trees of specific shapes. This allows to get a correspondence between the accepted trees and the strings over a given alphabet. In that correspondence, each checker is associated with a language, where the membership of a string in the language depends only on bounded-size infixes. We then use this machinery on languages to provide tight examples for Theorem~\ref{thm:gap_result}.

We finally provide a generic construction that ensures that it is possible to build a local checker that accept trees whose diameters is precisely any function which is a $O(S_{c,d}(n))$. In other words, there is no gap in the set of diameters that can be recognized below $S_{c,d}(n)$ by local checkers since all the possible diameters can be recognized.  More formally, we prove in Section~\ref{sec:prescribed} that the following holds:

\begin{restatable}{theorem}{ThmNoGap}
\label{thm:no_gap}
For every function $f(n)=O(S_{c,d}(n))$, there exists a local checker in $\mathcal{L}_{c,d}$ of exact diameter $f(n)$.
\end{restatable}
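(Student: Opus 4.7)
The plan is to combine the tight constructions behind Theorem~\ref{thm:gap_result} with the \emph{padding} technique sketched in the warm-up. Recall that those tight constructions realize diameter $\Theta(S_{c,d}(n))$ via a distinguished path $P$ whose vertices carry a de~Bruijn-like degree/color sequence: every window of $2d+1$ consecutive vertices on $P$ has a distinct distance-$d$ view, so that a checker in $\mathcal{L}_{c,d}$ can locally verify that the path follows the prescribed sequence and, crucially, read off the index of any vertex from its neighbors' views. My strategy is to apply this construction with a shorter path, of length exactly $D=f(n)$, and use padding to fix the total vertex count to $n$ without altering the diameter.

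For a target diameter $D$, I would define $T_D$ to be a distinguished path of length $D$ together with a bag of $h(D)$ pendant leaves attached at the second-to-last vertex of the path; placing the leaves one edge away from the endpoint (rather than at the endpoint itself) keeps the diameter of $T_D$ equal to $D$. The function $h$ is chosen in advance so that $T_D$ has exactly the right number of vertices: since we want $\diam(T_D)=f(|V(T_D)|)$ and $|V(T_D)|=D+1+h(D)$, this forces $h(D)=f^{-1}(D)-D-1$. Because $f(n)=O(S_{c,d}(n))=o(n)$, this is a well-defined non-negative integer for $D$ large enough; for smaller values, or if $f$ is not strictly monotone, I replace $f$ by a non-decreasing surrogate with the same asymptotics, which is harmless because exact diameter only constrains those $n$ that are realized by $\mathcal{C}(L)$.

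The checker I would design accepts a vertex $v$ iff one of the following holds: (a) $v$'s distance-$d$ view matches an internal or boundary pattern of the distinguished path $P$; (b) $v$ is a leaf whose neighbor matches the "second-to-last" pattern; or (c) $v$ matches the "second-to-last" pattern and sees exactly $h(D)$ pendant leaves, where $D$ is the path length decoded from the distinguishing sequence present in $v$'s distance-$d$ view. Correctness on $T_D$ is immediate by construction. The main obstacle is the converse: showing that every tree accepted by this checker is isomorphic to some $T_D$. This rigidity is essentially the content of the uniqueness arguments already used to prove Theorem~\ref{thm:gap_result}, and the extra pendant bag does not reintroduce gluing freedom because the second-to-last vertex has a unique signature in the distinguishing sequence and the padding leaves are pinned by clause~(b). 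Once rigidity is established, every accepted tree on $n$ vertices is the tree $T_{f(n)}$, whose diameter is exactly $f(n)$, so the checker has exact diameter $f$.
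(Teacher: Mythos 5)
Your overall strategy is the same as the paper's: take the extremal de~Bruijn-type caterpillar construction, truncate the encoded word so the backbone realizes the target diameter $D$, attach a bag of pendant leaves near the end of the backbone, and let the end vertex check that the size of the bag matches the length it decodes from the last letters. However, there is a concrete gap at exactly the point the paper flags as the difficulty. In the rigid construction, each backbone vertex decodes its own letter and its neighbours' letters by partitioning its pendant subtrees into three identical copies of $f(a_i)$ plus $i \bmod 3$ extra leaves (the mod-$3$ count is also what fixes the reading direction). If you dump an unbounded bag of $h(D)$ leaves onto one backbone vertex, its backbone neighbour can no longer perform this decomposition and would reject; and if you simply relax that neighbour's test, you must re-establish that no unintended trees are accepted --- which is precisely the ``uniqueness argument'' you invoke, and precisely what the padding disturbs. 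The paper states that the naive modification ``does not work directly'' and repairs it by doubling to six copies per letter and attaching the padding as three identical stars, so that the degree-mod conditions and the letter-decoding of the penultimate vertex survive (with an extra subtlety at $d=2$, where the neighbour cannot see the whole padded subtree). Your clause~(c) plus the assertion that ``the extra pendant bag does not reintroduce gluing freedom'' skips this repair entirely.

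A second, smaller issue is the vertex accounting. A bare (even colored) path of length $D$ with only $D+1$ vertices cannot carry pairwise-distinct distance-$d$ edge views once $D$ exceeds a constant --- this is exactly why $S_{c,d}(n)$ is sublinear --- so the distinguishing sequence necessarily consumes $\Theta(D\cdot|f(a_{\mathrm{last}})|)$ vertices, not $D+1$. Hence $h(D)=f^{-1}(D)-D-1$ is not the right formula; you need $h$ to be $n$ minus the size of the entire backbone gadget realizing diameter $D$, and the hypothesis $f(n)=O(S_{c,d}(n))$ is what guarantees this is non-negative. Relatedly, the paper avoids inverting $f$ (and your monotone-surrogate step, which as written would change the realized diameter function) by having the last vertex check the forward condition $f(r+k)=\delta$, where $r$ is its leaf count, $k$ the size of the rest of the tree, and $\delta$ the decoded diameter.
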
 

Note that Theorems~\ref{thm:gap_result} and~\ref{thm:no_gap} give fine-grained versions of the intuitions given in the warm-up for distance 1 and 2. For distance 1, because of the colors, pumping works above some constant depending on the colors, while for distance 2, the threshold is polynomial in $n$. Then for larger distance we get nearly linear thresholds, which means that we can target almost any diameter.

From this theorem, we can derive the following corollary, which proves that the landscape for generalized-LCLs (as defined earlier) has basically no gaps. 

\begin{restatable}{corollary}{CoroLCL}
\label{coro:LCL}
    For every function $f(n)=O(S_{c,d}(n))$, there exists a generalized-LCL at distance $d+4$ of complexity~$f(n)$. 
\end{restatable}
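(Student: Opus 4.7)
The plan is to take the local checker $L\in\mathcal{L}_{c,d}$ provided by Theorem~\ref{thm:no_gap}, whose class of accepted trees has exact diameter $f(n)$, and couple it with an output task whose round complexity in the LOCAL model is tight at $f(n)$. Concretely, I would first invoke Theorem~\ref{thm:no_gap} to obtain such a checker $L$. The accepted trees have a canonical ``backbone'' path of length $f(n)$, whose two endpoints are distinguishable from their bounded-radius neighborhoods thanks to the explicit padding used in the construction.

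The generalized-LCL is then defined as follows. Every node outputs a label in a small finite alphabet. Backbone nodes additionally output an arrow, encoding a consistent orientation of the backbone (pointing toward a chosen canonical endpoint), while off-backbone nodes output a neutral symbol. Local correctness asks that (i) $L$ accepts at distance $d$, (ii) backbone membership and endpoint identification agree with the output labels, and (iii) any two adjacent backbone nodes output the same arrow. The latter checks fit within distance $4$ of the construction's internal structure, yielding a generalized-LCL at distance $d+4$.

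The complexity bound $\Theta(f(n))$ then follows from two arguments. The upper bound is immediate: since every accepted tree has diameter $f(n)$, $f(n)$ rounds suffice for each node to learn the whole tree and pick the canonical orientation (for instance, pointing away from the endpoint with smallest identifier). For the lower bound I would run a standard indistinguishability argument: pick a backbone node $v$ near the center and build two accepted trees that agree on $v$'s view at radius $o(f(n))$ but have their endpoint identifiers swapped, so that the canonical orientations point in opposite directions at $v$; then no $o(f(n))$-round algorithm can be correct on both instances.

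The main obstacle I anticipate is arranging that Theorem~\ref{thm:no_gap}'s construction really produces trees supporting this indistinguishability argument, i.e., trees whose backbone is locally symmetric enough that the correct orientation cannot be read off from purely structural cues. This should be achievable because, among the many checkers reaching exact diameter $f(n)$, one can choose one whose backbone looks the same locally everywhere except near its two endpoints, so that the asymmetry carrying the orientation is concentrated at the extremities and can only reach a central node through $\Omega(f(n))$ rounds of identifier propagation.
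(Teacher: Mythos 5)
Your overall architecture (take the checker of Theorem~\ref{thm:no_gap} and superpose a global output task on the backbone) matches the paper's, but the two steps you flag as "details" are exactly where the argument breaks, and your proposed fixes cannot work. First, the task you choose --- a consistent orientation of the backbone --- is \emph{not} a global problem on the trees produced by Theorem~\ref{thm:no_gap}. Those constructions deliberately encode position and direction locally: backbone vertices carry strictly increasing letters (trees $f(a_1), f(a_2),\dots$ of growing size) and even an explicit $i \bmod 3$ marker whose purpose is to fix the reading direction. Any node can therefore determine in $O(d)$ rounds which neighbor is "forward" and output that arrow; since your correctness condition only asks that adjacent arrows agree (it cannot legally reference identifiers --- a generalized-LCL's correctness must be checkable from outputs and structure alone), the problem is solvable in constant time. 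Your lower bound via "swapping endpoint identifiers" does not apply, because the two orientations are distinguished structurally, not by identifiers.

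Second, your anticipated repair --- choosing a checker "whose backbone looks the same locally everywhere except near its two endpoints" --- is impossible for any checker of exact diameter $f(n)=o(n)$: if two backbone edges had the same view at distance $d$, Lemma~\ref{lem:pumping_lemma} would let you pump the tree to linear diameter, contradicting Theorem~\ref{thm:gap_result}. So the backbone \emph{must} be locally heterogeneous, and the difficulty is to inject just enough ambiguity without destroying local checkability. The paper does this by passing to $2$-subdivisions of the languages $L_1,L_2$ (each letter may be repeated up to twice), which costs the extra $+4$ in the checking radius via Claim~\ref{clm:generalizedLCL}, and by choosing $2$-coloring of the backbone as the task: the nodes still know which letter they sit on, but no longer the parity of their distance to the endpoint, and inserting or deleting one repetition in the middle of the word flips that parity while leaving the endpoints' views unchanged. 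Parity is a global invariant that survives the local position-encoding; orientation is not. You would need to replace your task and add the subdivision mechanism (or an equivalent source of local ambiguity) for the proof to go through.
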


We prove this corollary in Appendix~\ref{app:coroLCL}. The main idea consists in mixing the checkers of Theorem~\ref{thm:no_gap} with a global problem, namely $2$-coloring, forcing the complexity to be of the same order as the  diameter. This is more complicated than it looks at first sight, because the very specific constructions we use for the theorem are such that the nodes can know exactly where they are in the path. In particular, they know the parity of their distance to the first node of the path, hence 2-coloring is \emph{not} a global problem in these graphs. We modify slightly the construction by subdividing some edges to introduce some uncertainty, which fixes this issue (which is possible only by increasing a bit the checking radius and with a precise understanding on the local checkers of Theorem~\ref{thm:no_gap}).

To finish this part on maximum diameter, let us mention that in Appendix~\ref{sec:cycles}, we show that if we were to allow cycles, then at distance 1, the behavior would be the same as for trees: the maximum diameter is either constant or linear.

\paragraph*{Results on minimum diameter}

We now turn our attention to minimum diameter.
Note that all the results obtained for exact diameter also hold in the case of minimum diameter. In particular, Theorem~\ref{thm:no_gap} ensures that for every $d\ge 2$, we can construct a local checker accepting trees whose minimum diameter is almost everything. Therefore, these results leave as open the behavior of the minimum diameter in the range $\omega(S_{c,d}(n))$ to $o(n)$. We devote Section~\ref{sec:min-diameter} to study the particular case $d=1$ where the dichotomy we obtained for maximum diameter leaves a wide range of possibilities for minimum diameter. In this case, we show that the landscape is quite different since the minimum diameter is either constant, logarithmic or $\Theta(n^{1/k})$ for some values of $k$ which are at most $c^2$.

\begin{restatable}{theorem}{ThmRadiusOneMin}
\label{thm:radius-1-MIN}
Let $c$ be an integer. The minimum diameter of any $L\in\mathcal{L}_{c,1}$ is either constant, logarithmic or $\Theta(n^{1/k})$ for some values of $k\leqslant c^2$.
\end{restatable}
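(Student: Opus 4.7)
The plan is to analyze $\mathcal{C}(L)$ through a rooted-subtree grammar. Rooting a tree at an arbitrary node, each subtree is characterized by its type $\tau = (a,p)$: the color $a$ of its root and the color $p$ of its parent (with $p=\bot$ if the root is the global root). There are at most $c(c+1)$ such types. Since $L$ is at distance $1$, a subtree of type $(a,p)$ is \emph{valid} iff the multiset $\{p\}\cup\{\text{children colors of the root}\}$ lies in the accepted configuration set for color $a$, and each child is itself the root of a valid subtree of its own type. The question thus reduces to an analysis of a finite combinatorial system on types.

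First I would define $N_\tau(r)$ as the maximum number of nodes in a valid subtree of type $\tau$ with depth at most $r$, governed by the max-plus recurrence $N_\tau(r) = \max_\pi \bigl(1+\sum_i N_{\tau_i}(r-1)\bigr)$ over productions $\pi$ valid at $\tau$. By rooting an extremal tree at its center, the minimum diameter of $n$-vertex trees in $\mathcal{C}(L)$ is, up to a factor $2$, the functional inverse of $\max_\tau N_\tau(r)$. Classifying the growth of $N_\tau$ therefore classifies the minimum diameter.

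Next, I would sort each type into one of four regimes: \emph{unbounded} (some accepting configuration at $\tau$ admits arbitrarily many children of "terminal" types), \emph{bounded} ($N_\tau = O(1)$), \emph{polynomial of degree $k$} ($N_\tau(r)=\Theta(r^k)$), or \emph{exponential} ($N_\tau(r)=2^{\Omega(r)}$). Unbounded types give constant minimum diameter, since many bounded-size children attached at a single node produce arbitrarily large trees of bounded diameter. Exponential growth arises precisely when some reachable type admits a production with at least two non-bounded children that can recurse back to the same type, yielding doubling and hence $\Theta(\log n)$ minimum diameter. Otherwise every production has at most one non-bounded recursive child and the growth is polynomial.

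Finally, to bound the polynomial degree by $c^2$, I would stratify the polynomial types into a hierarchy: $\tau$ has degree $k+1$ iff some production at $\tau$ contains exactly one child of degree $k$ and all other children of strictly smaller degree. Solving $N_\tau(r)\geq N_{\tau'}(r-1)+\Theta(r^k)$ inductively gives $N_\tau(r)=\Theta(r^{k+1})$, while the types appearing along this strictly increasing chain are pairwise distinct, so the chain has length at most the number of types, i.e.\ at most $c^2$. The main obstacle is the clean case analysis of the max-plus recurrence: proving rigorously that two non-bounded children on a recursive cycle force exponential growth (so the polynomial/exponential dichotomy is sharp and matches the $\Theta(\log n)$ regime), and conversely that each polynomial degree level really does consume a fresh type in the hierarchy, which is what yields the $k\leq c^2$ bound rather than a weaker bound in $c$ or $c^3$.
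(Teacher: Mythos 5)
Your framing — a rooted tree grammar on types $(\text{own color},\text{parent color})$ with a max-plus size recurrence — is a genuinely different presentation from the paper's, which never roots globally but instead defines a relation $<_L$ on \emph{useful} pairs of colors via grafting: $(c_1,c_2)<_L(d_1,d_2)$ holds when some accepted tree contains, below a $(c_1,c_2)$-arc, two incomparable arcs, one again of colors $(c_1,c_2)$ and one of colors $(d_1,d_2)$. The paper first disposes of unbounded degree (grafting minimal trees onto every edge at a high-degree vertex gives constant minimum diameter), then shows $<_L$ is transitive, that failure of antisymmetry forces a doubling construction and hence $\Theta(\log n)$, and that the longest $<_L$-chain equals the polynomial exponent $k$ (Lemmas~\ref{lem:chaintomindiam} and~\ref{lem:mindiamtochain}), with $k\le c^2$ because chains live on pairs of colors. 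Your unbounded/exponential/polynomial trichotomy is the same combinatorics in grammar language, and the correspondence between $N_\tau(r)$ and minimum diameter via rooting at a center is sound once the degree is bounded.

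However, there are genuine gaps exactly where the work lies. First, your system has infinitely many productions (children multisets are unbounded) until the unbounded-degree case is eliminated; your ``unbounded type'' condition (``arbitrarily many children of terminal types'') is not the right one — you need arbitrarily many children each admitting \emph{some} finite valid subtree, and you need an argument (the paper's is a grafting construction) that this yields accepted trees of every large size with bounded diameter. Second, and more seriously, your stratification rule for the polynomial degree is wrong as stated: a type $\tau$ whose production contains exactly one child of degree $k$ and otherwise smaller children satisfies only $N_\tau(r)=1+N_{\tau'}(r-1)+O(r^{k-1})=\Theta(r^k)$, not $\Theta(r^{k+1})$. To gain a degree you need, in addition, self-recursion at $\tau$ (a child on the cycle back to $\tau$) \emph{alongside} a sibling of degree $k$, so that $N_\tau(r)\ge N_\tau(r-m)+\Theta(r^k)$ can be iterated $\Theta(r)$ times. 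This conjunction of ``repeat the same pair'' and ``branch off to the next level, incomparably'' is precisely what the paper's definition of $<_L$ encodes, and proving both directions (chain of length $k$ gives rake-like trees of diameter $O(n^{1/k})$; small diameter forces a chain of length $k+1$) is the substance of the two appendix lemmas. As written, your proposal identifies the right trichotomy but does not establish it, and the one rule it does commit to would misclassify the degree.
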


We note that these are well-known regimes for (bounded-degree) LCLs. 
In particular,  the $n^{1/k}$ regime has been identified precisely in~\cite{ChangP19, Chang20} and the graphs at the core of both their proof and our proof are the same. For example, for $k=2$, these are the graphs made of a path of length $\sqrt{n}$, where we attach a path of length $\sqrt{n}$ to every node. For $k=3$, we take the same construction but with $n^{1/3}$ instead of $n^{1/2}$, and add one more level to the recursion etc.
In a nutshell, the graphs of the construction are recursive rake-like graphs, where all the levels are balanced.
Now the two proofs are pretty different. In ~\cite{ChangP19, Chang20} the arguments are algorithmic: they show that via rake-and-compress, that every tree can be framed as such a rake-like graph, and that the hardest instances are the balanced ones. In our proof, the spirit is that there is some underlying hierarchy of colors, that leads to the hierarchical structure of the graph, and in order to get a minimum diameter we want to balance all the levels, and we show this is always possible, by doing some subtle pumping argument (based on a non-trivial partial order on the pair of colors).

Intriguingly, we do not know whether the results on LCLs and our minimum diameter result are related, in the sense that one would imply the other. Clarifying this relation would be very interesting.

\subsection{Discussion, results, and techniques on restricted checkers}
\label{subsec:results-restricted-checkers}

From the viewpoint of our original motivations, the results we get for general checkers are not very satisfying: they give a pessimistic answer for LCL generalization (no hope to find gaps in the general unbounded-degree case) and the constructions are too contrived to give insights about natural structures, such as living beings. 
Therefore, we want to consider a restricted model, that will hopefully reveal some gaps for the maximum diameter (hence avoid statements such as Corollary~\ref{coro:LCL}) and forbid unnatural constructions. 

We start by discussing what conditions such a restricted model should satisfy and propose our model, \emph{degree-myopic checkers}. Then we discuss the maximum diameter landscape for these checkers, and our proof techniques. 

\paragraph*{Discussions of the flaws of general checkers}

Three elements in our general checker construction feel unnatural.  
First, in many proofs we use padding: the last node of a path-like structure can observe the neighborhood of the preceding node to deduce the diameter-size ratio of the rest of the graph, and then check that it has been given the right number of leaves to modify this ratio in the right way. One expects that in a more natural construction, the graph is more homogeneous, and no node should compute an explicit complicated function. 
Second, we map integers to neighborhoods in some arbitrary way, using a generalization of De Bruijn words. To get the full power of this construction, we need that adjacent nodes can have arbitrarily different degrees, and that they can identify these degree perfectly. In some sense, the nodes should be able to manipulate arbitrarily large and different numbers. 
Third, also needed for the integer to neighborhood mapping: the checking of a node $v$ depends on the full view of the nodes, and cannot be decomposed as a series of simple checks between adjacent nodes. In some sense, we are abusing the fact that we allowed constant checkability radius, instead of the radius 1. 

\paragraph*{Degree-myopic checkers}

We propose a restricted setting for checkers, called \emph{degree-myopic}, which boils down to two ideas.

First, we enforce that a node cannot distinguish perfectly between all its possible neighborhoods, by making some kind of equivalence classes. More precisely, since we will use checkers at distance two (without colors), the crux is that a node is not able to distinguish arbitrarily many possible degrees for its neighbors.
We make this appear concretely by saying that a node will first put its neighbors' degree in a few categories, and later will only be allowed to use these categories, not the actual degree.
Here there is some freedom in how we choose these categories. 
We decide to consider that a node can only manage degrees that are close to its degree and leaves, hence the name of \emph{degree-myopic}. 
Other meaningful variants could be considered.

Formally, let $d$ be the degree of the node at hand, and $d'$ the degree of the neighbor considered. The \emph{types} of the neighbors are:

\begin{itemize}
    \item \textbf{Leaf}: $d'=1$
    \item \textbf{Degree-1}: $d'=d-1$
    \item \textbf{Equal-degree}: $d'=d$
    \item \textbf{Degree+1}: $d'=d+1$
\end{itemize}

Hence, at that point, the knowledge that a node has of its neighborhood is a 4-tuple of integers: how many neighbors are of each type. Note that the degree difference between neighbors is very limited: graphs with two non-leaf neighbors of very different degree will be rejected from the start. 

Second, we decide to restrict the model further, to avoid complex manipulations of the integers of these 4-tuples. For example, we want to avoid checking that the number of leaves is related to the number of neighbors of equal degree by an arbitrarily complicated function.
More precisely, the non-leaf nodes checks that a property $C$ of the following form is satisfied: for every type $i$, we are given two quantities $a_i\in \mathbb{N}$ and $b_i \in \mathbb{N} \cup \{\infty\}$ such that $a_i\leq b_i$, and the number of neighbors of type $i$ is in $[a_i,b_i]$.
To avoid issues around the leaves of the tree, the nodes of degree $a_{Degree-1}$ or less are not required to have at least $a_{Degree-1}$ neighbors of type \emph{Degree-1}. 
Beyond the technical reasons, this is justified by the fact that we are interested in what happens in the core of the trees, where the degrees will be large.

By construction, this model avoids the three issues mentioned earlier. Indeed, thanks to the type equivalence, we cannot use padding, only similar degrees can be compared, and thanks to the shapes of the property $C$, the checking is very constrained. (We will see that it is not useful to have arbitrarily large $a_i, b_i$, hence there is no tricks on this side either.)   

\paragraph*{Landscape result and technique}

Our main result about degree-myopic checkers is the following.

\begin{restatable}{theorem}{ThmClassificationMyopic}
    \label{thm:classification-myopic}
    The possible maximum diameter function for degree-myopic local checkers are $O(1)$, $\Theta(\sqrt{n})$, $\Theta(\log{n}/\log \log n)$, $\Theta(\log{n})$ or $\Omega(n)$. 
\end{restatable}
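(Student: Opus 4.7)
The plan is to classify the maximum diameter by a case analysis on the four intervals $[a_i, b_i]$ of the checker, for the four types Leaf, Degree-1, Equal-degree, and Degree+1. The central structural observation is that in any accepted tree, two adjacent non-leaf vertices have degrees differing by at most one, since the only non-leaf types are Degree-1, Equal-degree, and Degree+1. Consequently, the degree sequence along any path is a walk with steps in $\{-1,0,+1\}$ that begins and ends at $1$, and this rigidity is the main lever for both bounding and constructing accepted trees.

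For the realizability direction, I would exhibit explicit checkers matching each regime. The $O(1)$ regime is achieved by a checker that forbids all non-leaf neighbors, so only stars are accepted. The $\Omega(n)$ regime appears whenever the checker permits two Equal-degree neighbors, since arbitrarily long caterpillars are then accepted. The $\Theta(\sqrt{n})$ regime is obtained by setting $b_= = 0$, $b_- = b_+ = 1$, $b_L = \infty$, and all $a_i = 0$: accepted trees consist of a non-leaf chain with strictly monotone degrees $2, 3, \dots, m$ together with extra leaves, giving $n = \Theta(m^2)$ and diameter $\Theta(m) = \Theta(\sqrt{n})$. The $\Theta(\log n)$ regime arises from $a_- = b_- = 2$, $b_+ = 1$, $b_= = 0$, $b_L = \infty$: every non-leaf is forced to have two Degree-1 children, producing binary-tree-like growth $n \approx 2^{d^*}$ and diameter $\Theta(\log n)$. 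The $\Theta(\log n/\log\log n)$ regime comes from $b_L = 0$, $b_= = 0$, $b_+ = 1$, $b_- = \infty$: a non-leaf of degree $d$ is then forced to have exactly $d-1$ Degree-1 children, yielding the recurrence $N(d) = 1 + (d-1)\,N(d-1) \sim (d-1)!$, which by Stirling inverts to diameter $\Theta(\log n/\log\log n)$.

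For the matching upper bound, I would argue contrapositively: given a checker accepting arbitrarily large trees, consider a max-diameter accepted tree $T$ and a diameter-realizing path $P$. First, if $b_= \geq 2$, arbitrarily long Equal-degree chains yield diameter $\Omega(n)$. If $b_+ = 0$ (or symmetrically $b_- = 0$), a short argument shows that all non-leaf degrees must coincide and the non-leaf subgraph is a $k$-regular subtree with $k \in \{0,1,2\}$, yielding $O(1)$ or $\Omega(n)$. In the remaining case $b_-, b_+ \geq 1$ and $b_= \leq 1$, degrees are nearly monotone on long stretches of $P$, and letting $d^*$ be the maximum degree reached on $P$, one gets $D = O(d^*)$ since each ascent/descent costs one and plateaus have length at most $b_= + 1$. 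Bounding $n$ in terms of $d^*$ then uses the recurrence $N(d) \geq 1 + \mu(d)\,N(d-1)$, where $\mu(d)$ is the minimum number of Degree-1 neighbors forced at degree $d$. If $b_L$ is finite, then $\mu(d) \geq d - O(1)$ (since $n_L + n_= + n_+$ is bounded by a constant), giving $N(d) = \Theta(d!)$ and so diameter $\Theta(\log n/\log\log n)$. If $b_L = \infty$, then $\mu(d) = a_-$ is constant, and we split on its value: $a_- \leq 1$ produces a chain with $N(d) = \Theta(d)$ and diameter $\Theta(\sqrt{n})$, while $a_- \geq 2$ produces branching with $N(d) = \Theta(a_-^d)$ and diameter $\Theta(\log n)$.

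The main obstacle I foresee is handling the many admissible type profiles at each degree. Since the intervals $[a_i, b_i]$ admit several values of $(n_L, n_-, n_=, n_+)$ summing to $d$, the diameter-maximizing tree may mix profiles at different vertices; one must argue, by an exchange or surgery argument, that up to constant factors one may assume the diameter-maximizing profile at every vertex, so that the recurrence on $N(d)$ above becomes tight. The boundary between the $\Theta(\log n/\log\log n)$ and $\Theta(\log n)$ regimes --- depending on whether $b_L$ is finite or infinite --- will require the most careful bookkeeping, as it determines whether the forced branching grows with $d$ or stays bounded.
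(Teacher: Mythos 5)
Your case split and your witness constructions coincide with the paper's (the same three non-trivial regimes, obtained from the increasing-degree caterpillar for $\Theta(\sqrt n)$, the $a_{Degree-1}$-ary tree padded with leaves so that degrees decrease towards the leaves for $\Theta(\log n)$, and the factorial tree when $b_{leaves}<\infty$ for $\Theta(\log n/\log\log n)$), but the upper-bound direction has a genuine gap at its central step. You assert that along a diametral path the degree sequence is ``nearly monotone'', so that $D=O(d^*)$. This does not follow from the local constraints alone: whenever $b_{Degree-1}\ge 2$ and $b_{Degree+1}\ge 2$, an oscillating sequence $\ldots,k,k+1,k,k+1,\ldots$ is locally consistent (each local maximum has two Degree$-1$ neighbors, each local minimum two Degree$+1$ neighbors), and such a path has length unbounded in terms of $d^*$. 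The paper closes exactly this hole with a \emph{no-zigzag} lemma: if an accepted tree contains a path whose degree sequence has a local maximum followed later by a local minimum, then grafting the tree into itself produces two edges with the same view lying on a common path, and Lemma~\ref{lem:pumping_lemma} then forces linear maximum diameter. Only after zigzags are excluded (for checkers of sublinear maximum diameter) do diametral paths become unimodal, giving $D=O(d^*)$ and making your recurrences on $N(d)$ legitimate. Your proposal never invokes the grafting/pumping machinery, and the ``exchange or surgery argument'' you flag as the main obstacle is not where the real difficulty lies.

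A related, smaller flaw is that you branch on the parameter values of the checker rather than on the trees it accepts. For instance, ``if $b_{Equal-degree}\ge 2$ then the maximum diameter is $\Omega(n)$'' is unsound: a checker may permit two equal-degree neighbors yet accept no long constant-degree chain because of its other constraints (e.g.\ $a_{leaves}\ge 1$ forbids internal vertices of degree $2$), and such a checker must still be classified into one of the intermediate regimes rather than discarded as linear. The correct dichotomy, again obtained by pumping, is on accepted trees: if some accepted tree contains a path passing through two equal-degree edges, the maximum diameter is linear; otherwise each path uses the equal-degree type at most once, and one cuts at that edge to reduce to a checker with $a_{Equal-degree}=b_{Equal-degree}=0$. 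The same tree-based (not parameter-based) reasoning is what lets the paper also normalize $a_{Degree+1}=0$, $a_{leaves}=0$ and $b_{Degree+1}=1$ before the final case analysis.
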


This result is nice from the viewpoint of our LCL motivation: it shows that for restricted checkers, we do have an interesting landscape. In particular, we get $\Theta(\log n/\log \log n)$ which does not appear in the LCL landscape for trees, since there is a gap between $O(\log^* n)$ and $\Omega(\log n)$~\cite{ChangKP19}. This is also exciting from the viewpoint of our second motivation: we get that simple maximum diameter functions can be obtained by relatively natural checkers. 

The proof (deferred to Appendix~\ref{sec:myopic}) consists in first showing that if a degree-myopic checker does not accept trees of linear diameter, then the accepted trees must be very well-behaved: roughly speaking, we can root the tree, in such a way that all paths from leaves to root are monotone in terms of degrees. Then we do a short case analysis, where it appears that basically the non-trivial extremal behaviors for trees can only be of the following types: 
\begin{itemize}
    \item A caterpillar, where the $i$-th node of the path has degree $i$, which leads to $\Theta(\sqrt{n})$.
    \item A complete $k$-ary tree (for constant $k$), with some pending leaves, which leads to $\Theta(\log n)$. 
    \item A tree, where if the depth is $d$, then the nodes at depth $i$ have degree $d-i+1$, which leads to $\Theta(\log n/\log \log n)$. 
\end{itemize}

\paragraph*{Further related work}

Relating local and global structures of graphs is obviously not a new topic. In Appendix~\ref{app:further-related-work}, we review various research areas touching on this topic (network science, graph theory, and distributed computing), and highlight the differences and similarities with our perspective.

\section{Gap results for maximum diameter}
\label{sec:maxdiam}

The goal of this section is to prove that Theorem~\ref{thm:gap_result} holds. 

\ThmGapResult*

Intuitively, we show that if a checker accepts a tree containing a sufficiently large path, we can find very similar nodes, and fool them by ``pumping'' on the path to obtain more trees that are still accepted by the checker, but with various diameters. This pumping relies on an operation called \emph{grafting} (illustrated on Figure~\ref{fig:grafting}). 

\begin{figure}[!ht]
    \centering
    \includegraphics[width=0.7\linewidth]{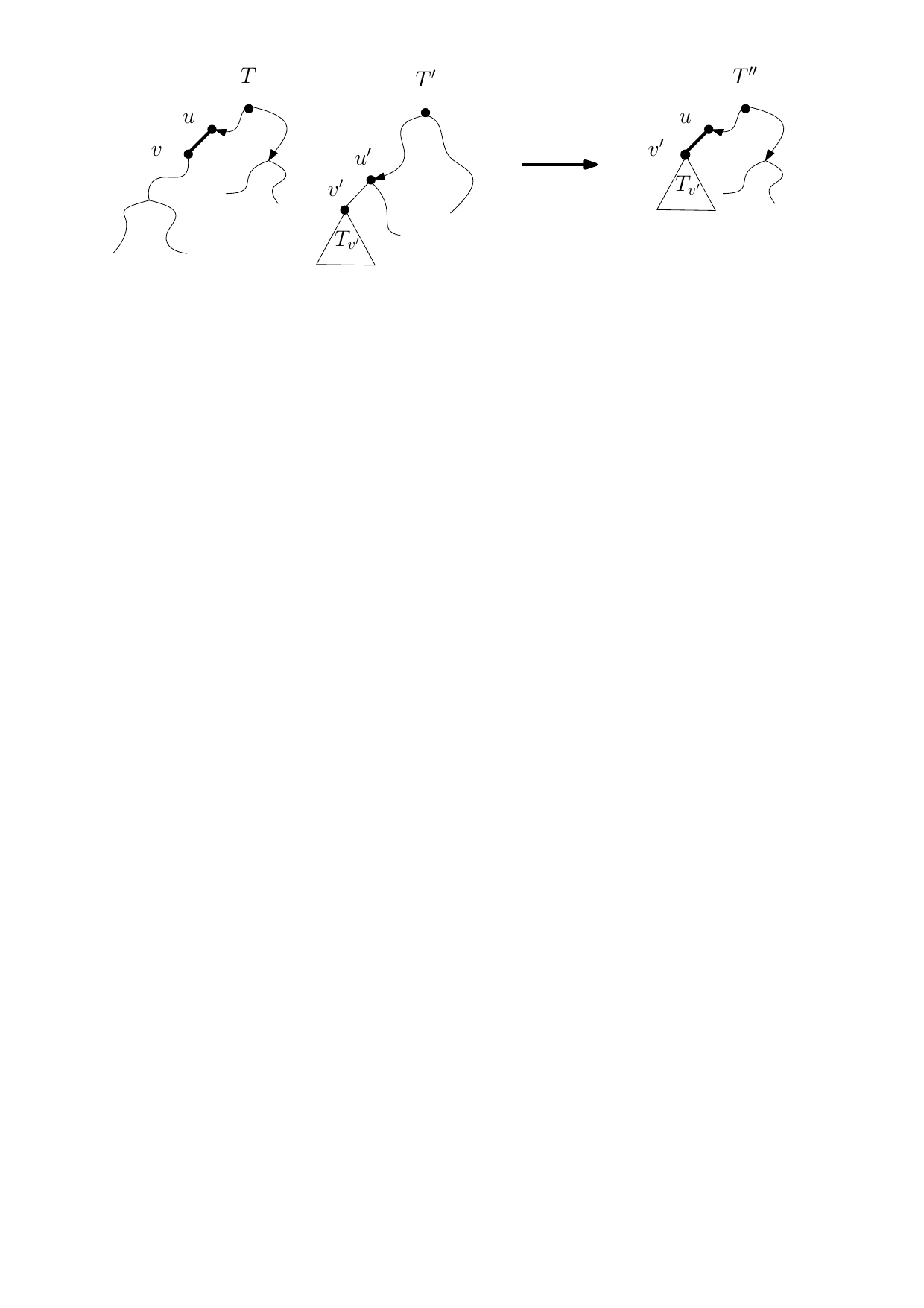}
    \caption[scale=0.7]{The tree $T''$ is the \emph{graft} of $T'$ in $T$ at $(uv,u'v')$.}
    \label{fig:grafting}
\end{figure}

\begin{definition}[Grafting]
\label{def:grafting}
Let $T$ be a (colored) tree and $uv$ be an edge of $T$. We denote by $T_u^{uv}$ (or $T_u$ when $u$ is clear from context) the connected component of $u$ in $T \setminus uv$.
Consider two trees $T$ and $T'$, with edges $uv$ and $u'v'$ respectively. The tree $T''$ which is the \emph{graft} of $T'$ in $T$ at $(uv,u'v')$ is the tree obtained from $T$ by replacing the $T_v$ by $T_v'$. In other words, we remove all the vertices of $T_v$ and replace them by the vertices of $T_v'$ and the edge $uv'$.
\end{definition}

Note that, given a tree $T$ and two edges $uv, u'v'$ such that $u'v'$ belongs to $T_v^{uv}$, we can graft $T$ into itself in two ways: we can graft $T$ in $T$ at $(uv,u'v')$ or in $(u'v',uv)$. Note that both operations are not the same: one of them will reduce the size and the diameter of the tree while the other will increase them.


Let $d\ge 2$. We first show that the grafting operation preserves acceptance by a local checker provided that the graft happens on similar edges. More precisely, define the \emph{view} at distance $d$ from an edge $uv$ in a graph $G$ as the subgraph of $G$, that contains all nodes at distance at most $d-1$ from at least one of the two vertices $u$ or $v$ and all the edges with at least one endpoint at distance at most $d-2$ from $u$ or $v$, and where $uv$ is marked as the center. One can easily check that views are not modified when grafting on edges with the same view. More formally, we prove the following in appendix.

\begin{lemma}\label{lem:grafting}
    Let $L\in \mathcal{L}_{c,d}$ accepting two trees $T,T'$, each containing an edge $uv\in E(T)$, $u'v'\in E(T)$ with the same view at distance $d$. Then $L$ also accepts the graft of $T'$ in $T$ at $(uv,u'v')$. 
\end{lemma}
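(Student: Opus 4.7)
The plan is to show that for every node $w \in V(T'')$, its view at distance $d$ in $T''$ is isomorphic to the view at distance $d$ of some node in $T$ or $T'$. Since $L \in \mathcal{L}_{c,d}$ accepts every node of $T$ and of $T'$, and a local checker's decision depends only on the view at distance $d$, this immediately implies that $L$ accepts $T''$.

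I would partition $V(T'')$ into the ``old side'' $V(T_u^{uv})$ (unchanged from $T$, attached to $u$) and the ``new side'' $V((T')_{v'}^{u'v'})$ (transplanted from $T'$ and re-attached via the new edge $uv'$). Fix $w$ in the old side and let $k = d_{T''}(w,u) = d_T(w,u)$. If $k > d$, the whole ball of radius $d$ around $w$ lies in $T_u^{uv}$, so the view of $w$ in $T''$ and in $T$ coincide trivially. Otherwise $0 \leq k \leq d$, and the view of $w$ in $T''$ decomposes as: (i) the portion inside $T_u^{uv}$, which is identical in both $T$ and $T''$; (ii) the crossing edge $uv'$ in $T''$, matched by $uv$ in $T$; and (iii) the portion inside $(T')_{v'}^{u'v'}$ reachable from $w$ within the remaining distance budget.

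The crux is to match portion (iii) between $T''$ and $T$. Vertices of this portion are at distance at most $d - k - 1$ from $v'$ in $T''$ (and symmetrically from $v$ in $T$), and edges of this portion have at least one endpoint at distance at most $d - k - 2$ from $v'$ (resp.\ $v$). Since $k \geq 0$, these quantities are at most $d-1$ and $d-2$, so by the definition of the view at distance $d$ from an edge, this portion is \emph{contained} in the $v$-side of the view of $uv$ in $T$ and in the $v'$-side of the view of $u'v'$ in $T'$. By hypothesis these two edge-views are isomorphic under an isomorphism sending $u \mapsto u'$ and $v \mapsto v'$, and this isomorphism restricts to an isomorphism of the corresponding sides. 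Gluing the identity on the $u$-side with this isomorphism across the edge yields the desired isomorphism of $w$'s views in $T$ and $T''$. The symmetric argument treats $w$ on the new side by mapping it to its counterpart in $T'$.

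The one point requiring care is the asymmetry between the node-view (edges with an endpoint at distance $\leq d-1$) and the edge-view (edges with an endpoint at distance $\leq d-2$ from $\{u,v\}$): one must verify that every edge appearing in $w$'s view and crossing into the transplanted part is in fact covered by the edge-view hypothesis, which is precisely what the bound $d - k - 2 \leq d - 2$ (for $k \geq 0$) ensures. Once this bookkeeping is in place, the lemma reduces to a routine gluing of isomorphisms along a common boundary, so no deeper technique is required.
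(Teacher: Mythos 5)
Your proposal is correct and follows essentially the same route as the paper: show that every node of the graft has the same view at distance $d$ as its copy in $T$ or $T'$, by splitting that view at the replaced edge and matching the transplanted portion via the common edge-view. Your version is just a more explicit bookkeeping of the distance budgets ($d-k-1$ for nodes, $d-k-2$ for edges), which the paper handles more tersely by observing that the views of $v$ and $v'$ at distance $d-1$ inside $T_v$ and $T'_{v'}$ coincide.
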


If a local checker accepts a tree $T$ containing two edges with the same view at distance $d$, one can then graft $T$ into itself at these edges and get a new tree, still accepted by $L$, and with a third edge with the same view. Iterating this argument yields the following (the formal proof being postponed to the appendix).

\begin{lemma}
    \label{lem:pumping_lemma}
    Let $c,d$ be integers and $L$ be a local checker in $\mathcal{L}_{c,d}$. If $L$ accepts a $c$-colored tree $T$ that contains a path going through two edges $uv,xy$ (in this order) such that $uv$ and $xy$ have the same view at distance $d$. Then, $L$ has linear maximum diameter.
\end{lemma}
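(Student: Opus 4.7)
The plan is to iteratively graft copies of $T$ into itself along the path between the two marked edges, so that both the number of vertices and the diameter grow by additive constants at each step. After $k$ iterations, this yields an accepted tree with $|V(T)| + k \cdot \Delta_n$ vertices and diameter at least $\diam(T) + k \cdot \Delta_D$ for some constants $\Delta_n, \Delta_D > 0$, proving that the maximum diameter is $\Theta(n)$.

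Since $uv$ and $xy$ lie on a path in this order, $xy \in T_v^{uv}$ and therefore $T_y^{xy} \subsetneq T_v^{uv}$; in particular $\Delta_n := |T_v^{uv}| - |T_y^{xy}| \geq 1$. Applying Lemma~\ref{lem:grafting} to graft $T$ into itself at $(xy, uv)$ (here we reverse the roles of the two edges in order to enlarge rather than shrink the tree) produces an accepted tree $T_1$: we have removed $T_y^{xy}$ and plugged in a fresh copy of $T_v^{uv}$ via a new edge between $x$ and a copy $v'$ of $v$. The tree $T_1$ then contains, along the extended path, both the original edge $uv$ and, inside the inserted copy, a vertex-disjoint copy $x_1y_1$ of $xy$; by construction, both edges have the same view at distance $d$ as $uv$ had in $T$.

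Iterating, given an accepted tree $T_k$ containing two edges $uv$ and $x_k y_k$ on a path, both with the same view as $uv$ in $T$, I would graft $T$ into $T_k$ at $(x_k y_k, uv)$ via Lemma~\ref{lem:grafting} to obtain $T_{k+1}$. The bookkeeping gives $|V(T_{k+1})| = |V(T_k)| + \Delta_n$, and the path gains $\Delta_D := d_T(v, x) + 1$ additional edges before reaching the fresh copy $x_{k+1} y_{k+1}$. Hence $|V(T_k)| = |V(T)| + k \Delta_n$ and $\diam(T_k) \geq \diam(T) + k \Delta_D$, so $\diam(T_k) = \Theta(|V(T_k)|)$, which is exactly what is needed.

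The main point to verify carefully is that the iteration is well-defined: after each graft, the two designated edges still have the same view at distance $d$ as $uv$ in $T$, so that Lemma~\ref{lem:grafting} applies again. This holds essentially by design, since the inserted copy of $T_v^{uv}$ is structurally identical to the portion of $T$ around $uv$, and the untouched portion around the original $uv$ keeps its view. A clean inductive statement tracking simultaneously the preserved views at the two reference edges and the linear growth of the path should make this bookkeeping routine.
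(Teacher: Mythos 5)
Your proposal is correct and matches the paper's own argument essentially step for step: both iteratively graft $T$ into the current tree at the freshest copy of $xy$ (with the roles of the edges reversed so the tree grows), gaining $|T_v^{uv}|-|T_y^{xy}|\ge 1$ vertices and $d_T(v,x)+1$ units of path length per iteration, which yields linear maximum diameter. The one point you flag as needing care --- that the designated edges retain the same view after each graft so that Lemma~\ref{lem:grafting} can be reapplied --- is also left implicit in the paper, so your treatment is at the same (in fact slightly more explicit) level of rigor.
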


Observe that at distance $1$, the view of an edge is intuitively just the color of its endpoints. In other words, given an edge $uv$, the color of the neighbors of $v$ has no impact on the fact that $u$ accepts or not. So if we can find a path with two edges colored alike, we will be able to increase the diameter via grafting using Lemma~\ref{lem:pumping_lemma}.

\begin{corollary}
\label{cor:radius-1-MAX}
Let $c$ be an integer. The maximum diameter of any $L\in\mathcal{L}_{c,1}$ is either at most $c^2$ or linear.
\end{corollary}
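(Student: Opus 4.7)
The plan is to combine a simple pigeonhole argument with the pumping lemma just established. The key observation is that at distance $d=1$, the view of an edge $uv$ defined earlier in the section reduces to a very small object: the set of nodes at distance at most $0$ from $u$ or $v$ is just $\{u,v\}$, so the view is simply the marked edge $uv$ together with the colors of its two endpoints. In particular, there are at most $c^2$ possible views at distance $1$ for an edge in a $c$-colored tree.

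First, I would assume that $L$ accepts some tree $T$ whose diameter is strictly larger than $c^2$, since otherwise the maximum diameter is bounded by $c^2$ and we are done. Pick a diametral path $P$ in $T$; it contains strictly more than $c^2$ edges. By pigeonhole, two distinct edges $uv$ and $xy$ of $P$ must have the same view at distance $1$. Because both edges lie on the path $P$, they appear in some order along $P$, so the hypothesis of Lemma~\ref{lem:pumping_lemma} is satisfied.

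Applying Lemma~\ref{lem:pumping_lemma} immediately yields that $L$ has linear maximum diameter, completing the dichotomy: either $L$ accepts no tree of diameter exceeding $c^2$, or it has linear maximum diameter. No real obstacle arises here; the only tiny subtlety is whether to count ordered or unordered pairs of colors for the view, but $c^2$ comfortably covers both cases, so the stated bound holds regardless of that convention.
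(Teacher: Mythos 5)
Your proof is correct and follows essentially the same route as the paper: at distance $1$ the view of an edge is just the ordered pair of colors of its endpoints, so a path of length greater than $c^2$ forces two edges with the same view by pigeonhole, and Lemma~\ref{lem:pumping_lemma} then gives linear maximum diameter. Nothing further is needed.
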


\begin{proof}
Let $L\in\mathcal{L}_{c,1}$ be a local checker accepting a tree $T$ whose diameter is larger than $c^2$. Let $P$ be a path shortest path of $T$ of length $c^2+1$. By pigeonhole principle, $P$ has two edges colored similarly and we can apply Lemma~\ref{lem:pumping_lemma} to conclude.
\end{proof}

At distance at least $2$, the view of each edge consists in a pair of $c$-colored rooted trees of height $d$. These may contain arbitrarily many vertices, leading to an unbounded number of possible views. To overcome this issue, we adapt the proof technique by first finding a lot of edges (on a path) whose views contain only few vertices. Denote by $t(d,k)$ the number of $k$-vertices trees of height $d$.

\begin{theorem}[\cite{pach2013number}]
\label{thm:equiv}
The following holds:
\begin{itemize}
    \item $\log t(3,k)\sim \pi\sqrt{2k/3}$
    \item for $d>3$, $\log t(d,k) \sim \frac{\pi^2}{6}\cdot \frac{k}{\log^{(d-3)}(k)}$.
\end{itemize}
\end{theorem}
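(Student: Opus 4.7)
The plan is to reduce the problem to a generating function analysis and then apply partition-style asymptotics. Let $f_d(k)$ denote the number of unordered rooted trees of height at most $d$ on $k$ vertices, so that $t(d,k) = f_d(k) - f_{d-1}(k)$. A tree of height at most $d$ is a root together with a multiset of rooted subtrees of height at most $d-1$, which via P\'olya enumeration yields the functional equation $F_d(x) = x \cdot \prod_{k \geq 1}(1-x^k)^{-f_{d-1}(k)}$, where $F_d(x) = \sum_k f_d(k)\, x^k$. This recursive structure is the engine that drives the entire argument.

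For the base case $d = 2$ we have $f_1(k) = 1$, so $F_2(x) = x \cdot \prod_{k \geq 1}(1-x^k)^{-1}$ is the partition generating function up to a factor of $x$, and Hardy--Ramanujan gives $\log f_2(k) \sim \pi\sqrt{2k/3}$. For $d = 3$, the exponents $f_2(k)$ in the product grow subexponentially, so the dominant singularity of $F_3(x)$ at $x \to 1^-$ is governed by the same mechanism; a Meinardus-type saddle-point analysis then yields $\log f_3(k) \sim \pi\sqrt{2k/3}$. Since for large $k$ the difference $f_3(k) - f_2(k) = t(3,k)$ is a positive fraction of $f_3(k)$ (most trees of height $\leq 3$ are not of height $\leq 2$, which is trivial past two vertices), the asymptotic transfers to $t(3,k)$.

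For $d > 3$, I would proceed by induction on $d$. Assuming $\log f_{d-1}(k) \sim \tfrac{\pi^2}{6} \cdot k / \log^{(d-4)}(k)$, one studies $F_d(x)$ near its singularity at $x = 1$. Writing $x = e^{-s}$ and using $-\log(1-e^{-sk}) \approx e^{-sk}$ for small $s$, the logarithm of the product reduces roughly to $\sum_{k \geq 1} f_{d-1}(k)\, e^{-sk}$. Substituting the inductive asymptotic, applying a Mellin transform to evaluate the Laplace-type sum, and optimizing the saddle $s$ so as to extract the coefficient of $x^k$, one finds that the optimal $s$ sits at a scale whose logarithm is one less than at the previous level. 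This is precisely what introduces the extra iterated logarithm in the denominator, and the constant $\pi^2/6 = \zeta(2)$ emerges naturally as the residue of the zeta factor picked up by the Mellin transform.

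The main obstacle is the rigorous execution of the saddle-point step inside the induction: controlling error terms uniformly across iterations, and checking that the inductive form of $f_{d-1}$ is regular enough (monotonicity, smoothness of $\log f_{d-1}$) for the Mellin/Meinardus machinery to apply without losing the sharp constant. A cleaner route, and the one I would ultimately favor, is to invoke a suitably iterated version of Meinardus' theorem for partition generating functions whose exponents grow at a prescribed subexponential rate; this bundles the transfer from Dirichlet series singularities to coefficient asymptotics, so that the cascade of iterated logarithms falls out directly from the iterated Mellin transform and the induction closes without having to redo the saddle analysis from scratch at each level.
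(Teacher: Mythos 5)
First, a point of comparison: the paper does not prove this statement. Theorem~\ref{thm:equiv} is imported verbatim from the cited reference~\cite{pach2013number} and used as a black box, so there is no internal proof to measure your attempt against. Your overall strategy --- the P\'olya/Euler-product recursion $F_d(x)=x\prod_{k\ge1}(1-x^k)^{-f_{d-1}(k)}$, Hardy--Ramanujan at the bottom level, and an iterated saddle-point analysis that gains one iterated logarithm per level of height --- is indeed the standard route in the literature and is the right skeleton for this result.

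There is, however, a genuine error in the middle of the sketch. Under your own convention ($f_1(k)=1$, hence $f_2(k)=p(k-1)$), the $\pi\sqrt{2k/3}$ regime of the theorem is exactly the partition level, i.e.\ your $f_2$, so the theorem's height index is shifted by one relative to yours; more importantly, your claim that $\log f_3(k)\sim\pi\sqrt{2k/3}$ is false, and the justification (``the exponents $f_2(k)$ grow subexponentially, so the singularity is governed by the same mechanism'') does not hold. Exponents growing like $e^{c\sqrt{k}}$ produce a far more violent singularity at $x=1$ than constant exponents: with $c=\pi\sqrt{2/3}$ one gets $\log F_3(e^{-s})\approx\sum_k p(k-1)e^{-sk}\approx e^{c^2/(4s)}$, and optimizing $ns+e^{c^2/(4s)}$ forces $c^2/(4s)\approx\log n$, hence $\log f_3(n)\sim\tfrac{c^2}{4}\cdot\tfrac{n}{\log n}=\tfrac{\pi^2}{6}\cdot\tfrac{n}{\log n}$. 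So the transition from the $\sqrt{k}$ regime to the $k/\log k$ regime occurs precisely at the step you wave away, and it is also where the constant really comes from: $\pi^2/6=c^2/4$ is inherited from the Hardy--Ramanujan constant of the previous level (equivalently, from the original $\zeta(2)$ residue), not re-extracted as a fresh zeta factor by a Mellin transform at each stage --- at the higher levels the relevant sums are dominated by large $k$ and handled by Laplace's method, not Meinardus' theorem as stated. With the indexing realigned and this base case corrected, the inductive step you describe for larger $d$ (each level replacing $\log^{(j)}$ by $\log^{(j+1)}$ in the denominator while preserving $\pi^2/6$) is correct in spirit, though, as you acknowledge, the uniform error control across iterations is the substantive work of the cited paper.
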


We are now ready to conclude the proof of Theorem~\ref{thm:gap_result}.

\begin{proof}[Proof of Theorem~\ref{thm:gap_result}]
Let $L\in\mathcal{L}_{c,d}$ and $T$ be a $c$-colored $n$-vertex tree accepted by $L$ with diameter more than $(4d^2+4d+1)\cdot S_{c,d}(n)$. Let $P$ be a path of length more than $(4d^2+4d+1) \cdot S_{c,d}(n)$ in $T$. 

Consider the trees obtained when removing the edges of $P$. Less than $2d\cdot S_{c,d}(n)$ of them contain at least $n/2d\cdot S_{c,d}(n)$ vertices. Moreover, each such tree intersects the view of at most $2d+2$ edges of $P$. In particular, $P$ contains more than $S_{c,d}(n)$ edges whose view consists of $2d$ trees hanging from $P$, all of size less than $n/2d\cdot S_{c,d}(n)$. In particular, their view at distance $d$ contain at most $n/S_{c,d}(n)$ vertices. 

When $d=2$, there are at most $c$ choices for the color of each vertex $v$, and at most $\deg^c(v)$ choices for the colors of its neighbors. In particular, we found more than $S_{c,2}(n)$ edges on $P$ that can have $c^2\cdot n^{2c}/S_{c,2}(n)^{2c}=S_{c,2}(n)$ different views. We can thus find two edges with the same view and conclude using Lemma~\ref{lem:pumping_lemma}.

When $d>2$, the view of each endpoint of our edges is a tree of height $d$ with $k:=n/S_{c,d}(n)$ nodes, whose $k$ nodes are colored with $c$ colors, hence there are at most $k^c\cdot t(d,k)$ such views. Therefore, the number of views of our edges is at most $k^{2c}\cdot t(d,k)^2$. Using Theorem~\ref{thm:equiv}, we get (for large enough $k$) that $k^{2c+1}\cdot t(3,k)^2\leqslant e^{6\sqrt{k}} \leqslant n$ since $k=\log^2 n/36$ in that case. Similarly, for $d>3$, we get $k^{2c+1}\cdot t(d,k)^2\leqslant n$ since $k=g_d(\log n)/4$ in that case. Therefore, in both cases, we get $k^{2c}\cdot t(d,k)^2\leqslant n/k = S_{c,d}(n)$, and we can again conclude using Lemma~\ref{lem:pumping_lemma}.
\end{proof}


\section{Construction of local checkers of prescribed exact diameter for $d \ge 2$}\label{sec:prescribed}

The goal of this section is to prove Theorem~\ref{thm:no_gap}.

\ThmNoGap*

The case of distance $1$ is very easy: one can easily construct a local checker in $\mathcal{L}_{2,1}$ recognizing stars, which have constant exact diameter.
So in the rest of the section, we focus on the case $d\ge 2$. We will actually prove that the statement holds even restricted to a class of local checkers that will allow to associate the trees they recognize with strings over some alphabet.
Then we will provide local checkers whose maximum diameter is precisely $S_{c,d}(n)$. We will finally explain how we can adapt our construction using a general tool to accept only trees of diameter $f(n)$ for any function  $f(n)=O(S_{c,d}(n))$.

\subsection{Encoding sequences and locally testable languages.} 

A \emph{caterpillar} is a tree $T$ such that the removal of the leaves of $T$ yields a path $P$, called \emph{the backbone} of $T$. A \emph{$d$-caterpillar} is a tree $T$ such that if we iteratively remove all the leaves $d$ times then the remaining graph is a path. Equivalently, all the vertices of $T$ are at distance at most $d$ from $P$. Note that a caterpillar is a $1$-caterpillar.

Let us now explain how we can encode a word (with an infinite alphabet) into a $d$-caterpillar and conversely. Let $\Sigma=(a_n)_{n\in\mathbb{N}}$ be an infinite alphabet together with an ordering on the letters.
Let $d$ be an integer and $f$ be a bijection associating a colored rooted tree of depth at most $d$ to each letter. Given a string $s=s_1\cdots s_p$ over the alphabet $\Sigma$, the tree $T[s]$ is obtained by taking the trees $f(s_1),\ldots,f(s_p)$, adding an edge between the root of $f(s_i)$ and the root of $f(s_{i+1})$ for $i\in[0,p]$, where by convention $f(s_0)$ and $f(s_{p+1})$ are paths of length $d+1$ (rooted at a leaf). Note then that $T[s]$ is a $d$-caterpillar whose backbone is formed by the roots of $f(s_0)\ldots,f(s_{p+1})$. For every vertex of the backbone, we say that $a_i$ is \emph{associated} to it if $f(a_i)$ is attached on it. Also note that every $d$-caterpillar can be seen as $T[s]$ for some choice of $s$ since $f$ is bijective.

In order to fully control the word, we would like to enforce that we read it in the right direction. And with the caterpillar defined above, both $s$ and its mirror provide the same caterpillar. Moreover, vertices of the middle of backbone of $T$ have a priori no reason to know in which direction they are supposed to "read" the word. To enforce one direction of reading, we actually slightly change the definition of $T[s]$: instead of creating a copy of $f(s_i)$ for each $s_i$, we create three of them, identify their roots, and add $i\mod 3$ pending leaves to the root. (Note that this only changes the number of vertices by a constant multiplicative factor, and the diameter is not affected since $d \ge 1$). Now, the numbers of neighbors outside the backbone of vertices in the backbone mod $3$ yields the sequence $0,1,2,0,1,2,\ldots$. 
We say that a caterpillar is \emph{special} if this property is satisfied and paths of length $d+1$ are attached to the first and last vertices of the backbone. In particular, since $f$ is bijective, each special caterpillar can be uniquely written as $T[s]$ for some string $s$ over $A$.

We are interested in local checkers, called \emph{special checkers} (or $d$-special checkers), that only accept special $d$-caterpillars. These can be easily enforced as shown by the following lemma, whose proof is deferred to the appendix.

\begin{lemma}\label{lem:d-caterpillar}
    A local checker at distance $d+1$ can check:
    \begin{itemize}
        \item if a tree is a special $d$-caterpillar and,
        \item every vertex can determine if it is on the backbone or not as well as its neighbors on the backbone and,
        \item every vertex can determine if it is an endpoint of the backbone and,
        \item every vertex can determine the letter associated to it as well as the letter associated to its neighbors on the backbone. 
    \end{itemize}
\end{lemma}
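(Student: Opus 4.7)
The plan is to design a radius-$(d+1)$ checker that exploits the rigid local structure of a special $d$-caterpillar. The key observation is that every backbone vertex $v$ associated with letter $a_i$ has a very specific neighborhood inside radius $d+1$: its incident subtrees consist of exactly three identical copies of the forest $f(a_i)\setminus\{\mathrm{root}\}$, together with $i \bmod 3$ isolated pending leaves, together with edges to one or two other backbone vertices (replaced by a path of length exactly $d+1$ at the endpoints). Because each $f(a_j)$ has depth at most $d$, the checker at distance $d+1$ sees each attached copy of $v$, as well as the attached copies of its backbone neighbors, in full.

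First I would have each vertex $v$ examine its view at distance $d+1$ and classify its neighbors into three groups: pending leaves (neighbors of degree $1$), candidate backbone neighbors, and roots of candidate copies of some $f(a)$. The vertex accepts the label \emph{backbone-interior}, with letter $a$ and pending count $c$, if and only if (i)~exactly three of the candidate subtrees are pairwise isomorphic, as colored rooted trees, to the forest obtained by deleting the root of $f(a)$, (ii)~exactly $c \in \{0,1,2\}$ pending leaves are present, and (iii)~the two remaining neighbors $u,w$ themselves exhibit, inside the same view, the same three-copies-plus-pending-leaves fingerprint; this last test is local because the subtrees attached to $u$ and $w$ have depth at most $d$ and hence lie within distance $d+1$ of $v$. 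Endpoint backbone vertices are handled analogously, with one backbone neighbor replaced by a path of length exactly $d+1$, which still fits entirely in the view. Non-backbone vertices accept iff they sit inside one of the three identical copies hanging from a unique backbone vertex at distance at most $d$, or inside an endpoint path; both conditions are again testable at radius $d+1$.

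Next I would impose orientation and global consistency through small additional local tests. Each interior backbone vertex $v$ with backbone neighbors $u,w$ requires $(c_w - c_v) \equiv (c_v - c_u) \equiv 1 \pmod 3$, which forces the sequence of pending-leaf counts along the backbone to be globally $0,1,2,0,1,2,\ldots$ in one consistent direction, simultaneously resolving the left/right ambiguity; endpoints additionally check that their pending count matches being the first or last letter and that the attached terminal path has length exactly $d+1$. Adjacent backbone vertices agree on each other's letter because the view of $v$ contains the full three-copies fingerprint of both $u$ and $w$, so each backbone vertex directly reads off the letters of its backbone neighbors from its own view, and a non-backbone vertex reads off the letter of the unique backbone vertex it is attached to.

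The main obstacle is showing that these purely local checks force the global structure to be a special $d$-caterpillar. The argument is that in a tree, the vertices accepting the backbone fingerprint declare at most two backbone neighbors each, with reciprocity, so they form a vertex-disjoint union of paths; the monotone mod-$3$ progression then glues these segments into a single path, while the endpoint tests rule out extra terminal branches. Every remaining vertex has certified, via its own local check, that it lies within distance $d$ of this path, which is exactly the $d$-caterpillar condition. Since $f$ is a bijection, reading the letters along the backbone recovers a unique word $s$ with the tree equal to $T[s]$, and by construction every vertex has locally extracted its role (backbone or not), its backbone neighbors when applicable, its endpoint status, and the letters associated to it and to its backbone neighbors, yielding the four bulleted claims of the lemma.
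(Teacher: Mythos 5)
Your decoding mechanism (three identical copies of $f(a)$ hanging from each backbone vertex, the $0,1,2\bmod 3$ pending-leaf counts for orientation, reading a neighbor's letter from inside the radius-$(d+1)$ view) matches the paper's, and that part is essentially right. Where you diverge is in how the backbone is identified, and that is where your argument has a genuine gap. The paper does not use a self-declared fingerprint: each vertex simulates $d$ rounds of leaf removal inside its view (this is computable at radius $d+1$, since a neighbor's survival after $d$ rounds depends only on its ball of radius $d$), and the backbone is \emph{defined} as the set of survivors, each of which checks that it has at most two surviving neighbors. Because leaf removal preserves connectivity of a tree, the survivor set is automatically a single path, and any vertex eliminated within $d$ rounds is automatically within distance $d$ of that path — so the first two bullets come essentially for free.

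Your route has to establish these global facts by hand, and the two steps you give do not work as stated. First, ``the monotone mod-$3$ progression then glues these segments into a single path'' is not an argument: two vertex-disjoint fingerprint-certified segments can each carry a perfectly consistent $0,1,2$ progression. What actually forces a single segment is that each certified backbone vertex fully verifies its pending subtrees (they lie within its radius-$(d+1)$ view, including the degrees of their deepest nodes), so a segment together with its pendants admits no edge leaving it, and connectivity of the tree then excludes a second segment; you need to say this explicitly. Second, your claim that a non-backbone vertex can certify at radius $d+1$ that it ``sits inside one of the three copies hanging from a unique backbone vertex'' fails for $d\ge 2$: a vertex at distance $k\ge 2$ from its backbone root would need radius $d+k$ to see that root's full fingerprint. (This is repairable, since the backbone vertices' own checks already cover every pendant vertex, but your soundness argument leans on it.) Relatedly, with a self-declared fingerprint you must also rule out that a vertex deep inside a copy accidentally passes the backbone test; otherwise the second bullet (``every vertex can determine if it is on the backbone'') is not established. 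The canonical leaf-removal definition sidesteps all of these issues, which is why the paper adopts it.
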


We denote by $\mathcal{L}_{c,d+1}^*$ the set of local checkers that only accept special $d$-caterpillars.
We may now associate with each $L\in\mathcal{L}_{c,d+1}^*$ the language of strings $s$ such that $T[s]$ is accepted by $L$. Observe that the membership of a string in such languages depends only on its set of substrings of size at most $2d-1$. Such languages are said to be \emph{locally testable} (or \emph{$d$-testable}). In particular, for every possible encoding $f$ and every locally testable language, there is a local checker in $\mathcal{L}_{c,d}^*$ accepting exactly the trees $T[s]$ for $s\in L$.  In the rest of this section we will need two simple locally testable languages. Let $\Sigma=(a_n)_{n \in \mathbb{N}}$ be a (possibly infinite) alphabet where letters are ordered.

Then one can easily check that the following language is locally testable for every $d\ge 2$ (since a substring of $2d-1$ characters has to be a consecutive sequence of letters in $\Sigma$):

\begin{remark}
 The language $L_1$ of prefixes of the infinite word $a_1a_2\cdots$ is $2$-testable.
\end{remark}

Let us denote by $W_1$ the word $a_1a_2$ and by $W_i$ for every $i \ge 2$ the word $a_1a_p\cdots a_{p-1}a_p$. We denote by $L_2$ the language of all words $W_1\cdots W_p$ for every $p$ (that is $L_2$ contains the concatenations of the the $p$ first words $W_i$ for all integer $p$).
Observe that strings in $L_2$ are exactly the ones starting with $a_1a_2$, ending with $a_{p-1}a_p$, and whose subsequences of length $3$ are of the following shape:
\begin{itemize}
    \item $a_{k-1}a_\ell a_k$ for some $k<\ell$
    \item $a_\ell a_k a_\ell$ for some $1<k<\ell$
    \item $a_{\ell-1}a_1a_{\ell}$ for some $1<\ell$
    \item $a_{k-1}a_ka_1$ for some $1<k$
\end{itemize}

\begin{remark}
 The language $L_2$ is $2$-testable.
\end{remark}

\subsection{Constructions reaching the bound $\Theta(S_{c,d}(n))$.}\label{subsec:maxdiam_constr}

The goal of this part is to provide a local checker whose maximum diameter is exactly the one of Theorem~\ref{thm:no_gap}. We distinguish two cases depending the value of $d$.
\smallskip

\noindent
\textbf{Case $d\geqslant 3$.} 
We start with the easier case $d\geqslant 3$. Since $L_1$ is locally testable, for every possible encoding $f$, there is a special local checker in $\mathcal{L}_{c,d}^*$ accepting exactly the trees $T[s]$ for $s\in L$. 

To obtain the bound of Theorem~\ref{thm:no_gap}, it remains to describe $f$. Consider an enumeration of all trees of height $d-1$ by increasing number of vertices, and define $f(a_i)$ as the $i$-th such tree. By construction, the number of vertices of $f(a_i)$ is the smallest $k$ such that $i\leqslant t(d-1,k)$. By Theorem~\ref{thm:equiv}, the size of $f(a_i)$ is $k=\Theta(\log^2 i)$ if $d=3$ and $k=\Theta(g_d(\log i))$ otherwise. 
Now let $T$ be an accepted tree. By Lemma~\ref{lem:d-caterpillar} it is a $d$-caterpillar and by construction of the local checker, the nodes of the backbone are associated to the sequence of letters $a_1\ldots a_p$ of $L_1$ for some integer $p$. 
Now observe that $T[a_i]$ has diameter at most $d$ and contains at most $3 \cdot f(a_p)+2$ vertices (since we attach on each vertex three copies of the same tree plus at most $2$ leaves). Thus $T$ has diameter $\Theta(p +d)$ and its number of vertices is $\Theta(p\cdot |f(a_p)|)$.
Plugging in the estimates for $|f(a_p)|$ concludes.

\smallskip
\noindent
\textbf{Case $d=2$.} 
Since $L_2$ is locally testable,
and for every choice of $f$, there is a local checker in $\mathcal{L}_{c,2}^*$ accepting exactly the trees $T[s]$ for $s\in L_2$. 

To obtain the bound from Theorem~\ref{thm:no_gap}, it remains to describe the encoding $f$ of each letter and to prove that it gives the required bound. 

Let us start with the colorless case $c=1$. We define $f(a_p)$ as the star on $p+1$ vertices, rooted at its center. Consider the word $s_p=W_1\cdots W_p$. Observe that since $s_p$ has length $p(p-1)$, $T[s_p]$ is a tree of diameter $\Theta(p^2)$. Moreover, each letter $a_i$ appears $(p-1)$ times in $s_p$, hence $T[s_p]$ has $\Theta((p-1)\cdot (1+\cdots+p))=\Theta(p^3)$ vertices. In particular, our local checker has exact diameter $n^{2/3}$, as expected.  

To extend this result to the colored case $c>1$, we just have to adapt the encoding of each letter. We define $f(a_i)$ as a star whose center gets color $1$, and with $x_j$ leaves of color $j$ for $j\in[1,c]$, where $(x_1,\ldots,x_c)$ is the $i$-th (in lexicographic order) $c$-tuple whose entries are non-increasing. Observe that the number of such tuples whose first element is at most $p$ is ${ p+c\choose c}=\Theta(p^c)$, hence each letter in $s_{p^c}$ is encoded with at most $cp$ leaves. In particular, the trees accepted by our local checker have diameter $\Theta(p^{2c})$ and $\Theta(cp\cdot p^{2c})$ nodes, yielding the bound $\Theta(n^{2c/(2c+1)})$ as required.


\subsection{Extension to exact diameter $O(S_{c,d}(n))$}\label{sec:exactdiam}

The goal of this part is to prove Theorem~\ref{thm:no_gap} that is to obtain a checker of exact diameter $D(n)$ for every function $D(n)=O(S_{c,d}(n))$. The case $d=1$ being trivial, we will assume in the rest of this section that $d \ge 2$.

Let $D$ be any function such that $D(n)=O(S_{c,d}(n))$. We consider a slightly modified version of the local checkers $L\in\mathcal{L}^*_{c,d}$ constructed in Section~\ref{subsec:maxdiam_constr}. Roughtly speaking, all the nodes have exactly the same behavior but the last node of the backbone, corresponding to the last letter of the word. This node will be allowed to have arbitrarily many leaves (as long as the modulo $3$ condition stays satisfied), and will carry an additional verification. Namely it will check that we added the right amount of leaves to get the correct dependency between diameter and number of nodes. Unfortunately, such a modification does not work directly, and we need to slightly modify our checkers to make it work.

For a node of the backbone corresponding to a letter $a$, we create six copies of $f(a)$ instead of three. This at most doubles the total number of vertices. Now, we link the last node $u$ of the backbone with the center of three stars of the same size. (And we moreover add $0,1$ or $2$ leaves as before depending on the position on the backbone). 

Note that all the nodes must have degree equal to $0,1$ or $2$ modulo $6$ except the last node of the backbone which can have degree $3,4$ or $5$ modulo $6$. And all the nodes of the backbone can check the validity of their degree. Moreover, all the nodes can indeed recover their own letter as well as the letter of their neighbors as in the proof of Lemma~\ref{lem:d-caterpillar}. Indeed, since we add to the last node of the backbone only three copies of the same star, this star will be ignored by the penultimate node of the backbone when it reconstructs the subtree. In the case $d=2$, the argument is a bit more subtle: indeed the neighbor on the backbone does not fully see the subtree attached on the last vertex of the backbone but sees that the number of vertices attached to it is $3$ more than what it should be and does not reject in that case.

To conclude, let us explain when the last vertex of the backbone accepts (the acceptance rule is not modified for the other vertices). The goal is to artificially increase the number of vertices of the trees $T[w]$ for $w\in L_2$, without changing their diameter, in order to reach the target diameter $D$. Note that, in both languages $L_1$ and $L_2$, the last node of the path is able to determine the length of the path by simply knowing the last two letters. So the last node of the backbone computes the diameter $\delta$ of $T[w]$ and the number $k$ of vertices in the tree before it. 
It then accepts when its number $r$ of leaves satisfies $D(r+k)=\delta$.



\section{Minimum diameter at distance $1$ -- Proof of Theorem~\ref{thm:radius-1-MIN}}
\label{sec:min-diameter}

\ThmRadiusOneMin*

A first idea consists in just applying our pumping argument similarly to the proof of Lemma~\ref{lem:pumping_lemma}, but in reverse, in order to shorten long enough paths until they get constant size. However, this may actually decrease a lot the total number of vertices at the same time, impeding us to construct an infinite family of trees with the claimed diameter. To solve this issue, we have to get more control on which subtrees we remove while de-pumping. More precisely, we will only de-pump subtrees that are almost paths. To ensure that long enough paths exist, we first prove that local checkers accepting trees of arbitrarily large degree have constant minimum diameter. We then consider local checkers $L$ accepting trees of bounded degree (which already have super-logarithmic diameter), and investigate the edges with a given view to exhibit some structure in the trees accepted by $L$. We then use this structure to show that $L$ must accept trees with a very specific shape, namely that look like complete binary trees or rakes (defined below). 

\subsection{Constructions}

Before diving into the proof, we will prove that Theorem~\ref{thm:radius-1-MIN} is essentially tight, meaning that we can obtain local checkers that accept trees of constant diameter, linear diameter or diameter $\Theta(n^{1/k})$ for every $2 \le k \le c/3$.
\medskip

\noindent\textbf{Linear diameter.} The local checker in $\mathcal{L}_{1,1}$ where every node accepts whenever its degree is at most $2$ accepts exactly the class of paths, hence has linear exact (and then minimum) diameter.
\medskip

\noindent\textbf{Constant diameter.}
 The local checker in $\mathcal{L}_{1,1}$ that where every node accepts regardless of its neighborhood accepts trees of diameter at most $2$ for every possible size of trees.
\medskip

\noindent\textbf{Logarithmic diameter.}
The local checker in $\mathcal{L}_{1,1}$ where a node accepts if and only if its  degree is $1$ or $3$ accepts exactly binary trees. Hence its minimum diameter is logarithmic (even if its maximum diameter is linear) since the minimum diameter of such a tree is reached when the tree is complete.
\medskip

\begin{figure}[!ht]
    \centering
    \includegraphics[scale=0.7]{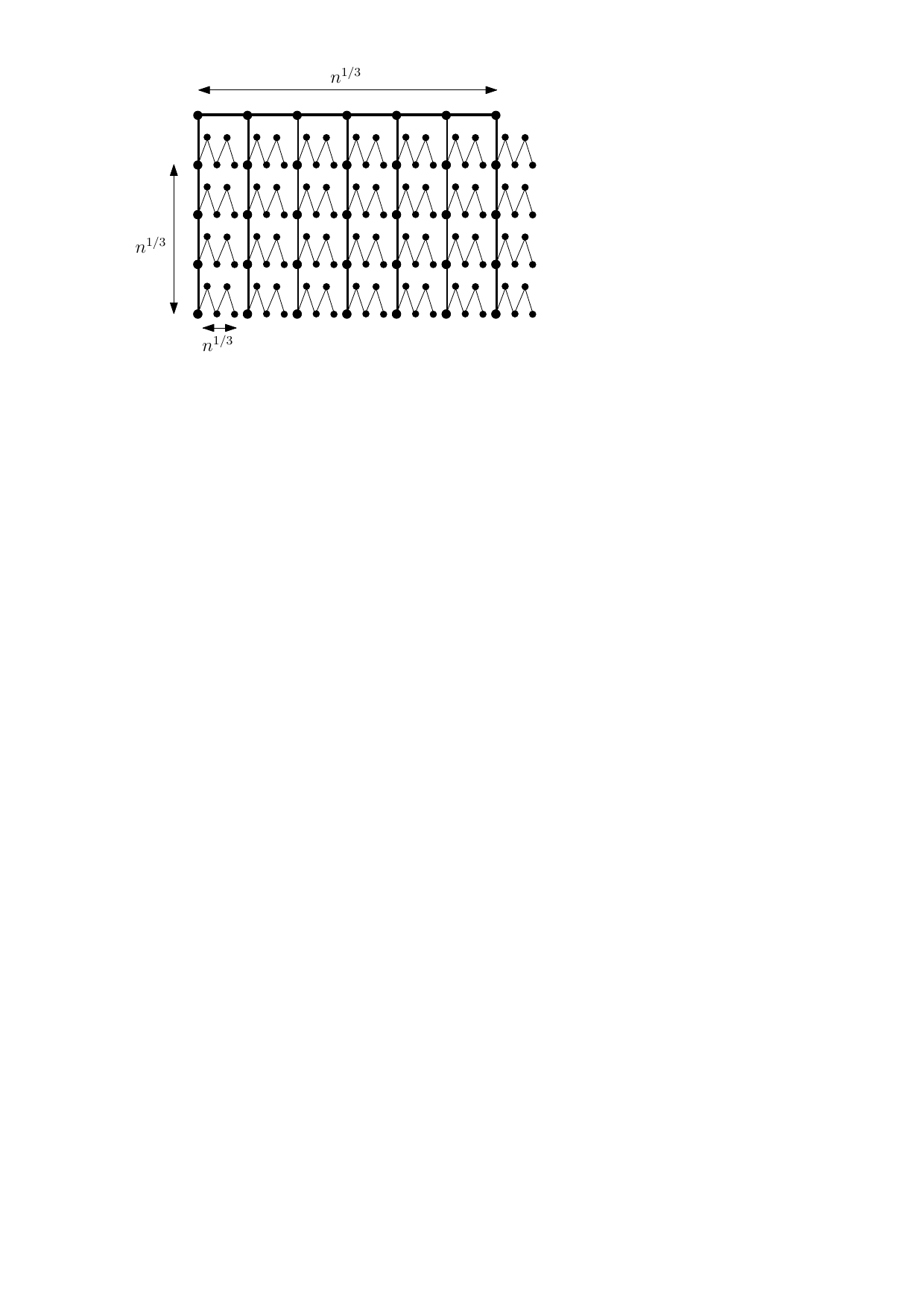}
    \caption{A $3$-rake on $n$ vertices of diameter $\Theta(n^{1/3})$. The deletion of the top path leaves $n^{1/3}$ $2$-rakes.}
    \label{fig:rake}
\end{figure}

\noindent\textbf{Polynomial diameter.}
For each integer $k$, let us introduce the class of $k$-rakes. The \emph{$1$-rakes} are paths, and \emph{$k$-rakes} are trees $T$ containing a path $P$ whose removal yields a forest whose connected components are $(k-1)$-rakes and such that each vertex of $P$ is attached to at most one connected component of $T \setminus P$ (see Figure~\ref{fig:rake} for an illustration). We prove the following in the appendix.
\begin{lemma}
\label{lem:krakes}
    The class of $k$-rakes has minimum diameter $\Theta(n^{1/k})$ and is accepted by a local checker in $\mathcal{L}_{3k,1}$. 
\end{lemma}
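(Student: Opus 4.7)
The proof has two parts: the minimum-diameter estimate and the construction of the local checker.

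For the diameter, I would induct on $k$. The base case $k=1$ corresponds to paths, which have diameter $n-1=\Theta(n)$. For the inductive step, take a $k$-rake $T$ on $n$ vertices; let $\ell$ be the length of its top spine and $n_1,\ldots,n_s$ (with $s\leq \ell$ and $\ell+\sum_i n_i=n$) the sizes of the attached $(k-1)$-rakes. If $\ell\geq n^{1/k}/2$, then $\diam(T)\geq \ell-1=\Omega(n^{1/k})$; otherwise, by pigeonhole some $n_i\geq (n-\ell)/\ell\geq n^{(k-1)/k}/2$, and by the induction hypothesis the corresponding component has diameter $\Omega(n^{1/k})$, which is a lower bound on $\diam(T)$. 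For the matching upper bound, I would construct a balanced $k$-rake with a spine of $\lceil n^{1/k}\rceil$ vertices on each of which is attached a balanced $(k-1)$-rake of size roughly $n^{(k-1)/k}$; the induction yields diameter $O(n^{1/k})$.

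For the local checker, the idea is to label each vertex by its level in a nested decomposition: level~$1$ vertices lie on the top spine, level~$2$ vertices on the top spines of the attached $(k-1)$-rakes, and so on up to level~$k$. I would restrict attention to the subclass of $k$-rakes in which each attachment happens at an endpoint of the lower-level spine; this subclass still realizes the $\Theta(n^{1/k})$ minimum diameter, and, crucially, every edge of the tree now joins vertices whose levels differ by at most one. The $3k$ colors are pairs $(i,\tau)$ with $i\in\{1,\ldots,k\}$ and $\tau$ among three roles: \emph{internal} (two same-level neighbors), \emph{free endpoint} (one same-level neighbor, no upward edge), and \emph{anchor} (one same-level neighbor plus one upward edge to level $i-1$). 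The checker verifies at each vertex that the number of same-level neighbors matches $\tau$, that a level-$(i-1)$ neighbor is present if and only if $\tau=\text{anchor}$, and that at most one level-$(i+1)$ neighbor is present, which when present must itself be colored as an anchor.

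The main obstacle is to verify that these purely local constraints globally force the tree to be a $k$-rake. I would argue by induction on $k$: the level-$1$ vertices have same-level degree at most $2$ and so induce a disjoint union of paths; since the ambient graph is a tree and each attached component touches the level-$1$ spine through a unique anchor edge, these paths cannot be joined by detours through higher levels without creating a cycle, hence they merge into a single level-$1$ spine $P$. Removing $P$ leaves a forest whose components each satisfy the shifted color conditions and are therefore $(k-1)$-rakes by the induction hypothesis; the at-most-one-downward-neighbor rule then ensures that each such component attaches to $P$ by exactly one edge at a spine endpoint, matching the recursive definition of a $k$-rake.
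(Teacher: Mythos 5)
Your diameter analysis (both the balanced construction for the upper bound and the induction "either the spine is long or some sub-rake is large" for the lower bound) is correct and is essentially the paper's argument. The checker, however, has a genuine gap. With your three roles, nothing prevents a level-$(i+1)$ spine from having an \emph{anchor at each of its two endpoints}, attached upward to two \emph{different} level-$i$ spines. Your claim that distinct level-$1$ paths "cannot be joined by detours through higher levels without creating a cycle" is false: a level-$2$ path whose two endpoints are both anchors joins two level-$1$ paths and the result is still a tree. Iterating this breaks the minimum-diameter bound. For $k=2$, take a level-$1$ path on $3$ vertices, hang a $3$-vertex level-$2$ path off each of them (attached at one anchor endpoint), let the other endpoint of each level-$2$ path also be an anchor attached to a fresh $3$-vertex level-$1$ path, and recurse. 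Every local constraint you listed is satisfied, yet the tree branches by a constant factor every constant distance, so it has $n$ vertices and diameter $O(\log n)$, not $\Omega(\sqrt{n})$.

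The underlying issue is that your three values $\tau$ encode \emph{roles} and therefore cannot distinguish the two endpoints of a spine, so a radius-$1$ checker cannot enforce "at most one anchor per spine." The paper instead spends the factor $3$ on a position counter: colors are pairs $(i,j)$ with $j\in\{1,2,3\}$, the $j$-coordinate increases by $1$ modulo $3$ along each level-$i$ spine, and the checker only allows an upward edge at the vertex whose unique same-level neighbor carries $j+1$ (the head of the spine), while the tail has only a $j-1$ neighbor and no upward edge. This mod-$3$ orientation (mod $2$ would not suffice) is exactly what forces each sub-spine to attach to the level above at a single vertex, recovering the rake structure; the rest of your verification sketch (level-$1$ vertices induce a single path, remove it and induct) then goes through.
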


\subsection{Proof of Theorem~\ref{thm:radius-1-MIN}}

We now proceed with the proof of Theorem~\ref{thm:radius-1-MIN}, and first take care of trees of arbitrary large degree.

\begin{lemma}
    Let $c$ be an integer and $L\in\mathcal{L}_{c,1}$ accepting trees of arbitrary large degree. Then the minimum diameter of $L$ is constant.
\end{lemma}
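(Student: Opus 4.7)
The proof rests on the fact that at radius $d=1$ the view of an edge $uv$ reduces to the pair of endpoint colors $(\mathrm{col}(u),\mathrm{col}(v))$, so Lemma~\ref{lem:grafting} applies whenever two edges share the same pair of endpoint colors. My plan is to start from an accepted tree $T_N$ containing a vertex $v_N$ of very high degree and to use grafting to simultaneously replace the subtree hanging from every neighbor of $v_N$ by a \emph{fixed} small reference subtree of constant depth. The resulting tree stays accepted, has diameter bounded by a universal constant, and has a size that grows with the degree of~$v_N$.

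\textbf{Step 1: pigeonhole extraction.} By hypothesis there exists an infinite family of accepted trees $(T_N)_{N\in\mathbb{N}}$ with vertices $v_N\in V(T_N)$ of degree at least $N$. Since the number of colors is~$c$, I iteratively extract a subsequence such that (i)~every $v_N$ carries the same color $\chi_0$, and (ii)~the set $P\subseteq[c]$ of colors that actually occur in the neighborhood of $v_N$ is independent of $N$. Because $\deg(v_N)\ge N$ and $|P|\le c$, at least one color in $P$ must appear an unbounded number of times around $v_N$.

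\textbf{Step 2: reference subtrees and iterated grafting.} Fix any $N_0$ in the extracted subsequence. For every $\chi\in P$, pick a $\chi$-colored neighbor $u_\chi$ of $v_{N_0}$ and let $S_\chi:=(T_{N_0})_{u_\chi}$; this is a finite rooted tree of some height $D_\chi$. For every $N\ge N_0$ in the subsequence, process the neighbors $w$ of $v_N$ in $T_N$ one at a time: the color of $w$ belongs to $P$, and the edges $(v_N,w)$ and $(v_{N_0},u_{\mathrm{col}(w)})$ share the edge-view $(\chi_0,\mathrm{col}(w))$ at distance $1$. Lemma~\ref{lem:grafting} then allows us to replace $(T_N)_w$ by a fresh copy of $S_{\mathrm{col}(w)}$ while staying in $\mathcal{C}(L)$. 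Since the graft does not alter the color of $v_N$ or of any of its neighbors, the edge-views of the still-pending edges are preserved and the lemma can be reapplied. After treating every neighbor of $v_N$ we obtain an accepted tree $T'_N$ in which every neighbor of $v_N$ is the root of a copy of some $S_\chi$.

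\textbf{Conclusion and main obstacle.} The diameter of $T'_N$ is at most $2(1+\max_{\chi\in P}D_\chi)$, a constant depending only on $L$ (through the fixed tree $T_{N_0}$), while $|V(T'_N)|\ge\deg(v_N)+1\ge N+1$, which tends to infinity. This exhibits arbitrarily large accepted trees of bounded diameter, so $L$ has constant minimum diameter. The only subtle point is the pigeonhole of Step~1: I need \emph{a single} fixed tree $T_{N_0}$ to supply a reference subtree for every color that will appear around a high-degree vertex of the family, which is exactly why I first stabilize the color support $P$ and only then pick $N_0$. Once this is arranged, the iterated grafting of Step~2 is routine bookkeeping.
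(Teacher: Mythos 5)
Your proof is correct and follows essentially the same strategy as the paper's: graft a fixed, bounded-height reference subtree onto every edge incident to the high-degree vertex, which preserves acceptance (since at radius $1$ an edge's view is just its ordered pair of endpoint colors) and yields arbitrarily large accepted trees of constant diameter. The only difference is cosmetic: the paper sources its reference subtrees from minimum-sized accepted witnesses $T_{a,b}$ for each color pair, whereas you pigeonhole the family to stabilize the center color and the neighborhood color support and then extract all references from a single $T_{N_0}$; both yield the same constant bound.
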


\begin{proof}
    Given two colors $a,b$, define $T_{a,b}$ as a minimum-sized tree accepted by $L$ containing an edge $v_av_b$ such that $v_a$ has color $a$ and $v_b$ has color $b$ (if such a tree exists). Denote by $\alpha$ the largest size of $T_{a,b}$ when $a,b$ run across all possible colors. 

    By hypothesis, $L$ accepts an infinite family of $c$-colored trees $(T_n)_n$ such that each $T_n$ contains a vertex $u_n$ of degree at least $n$. Root each $T_n$ at $u_n$. Now, for each edge $u_n u$ of $T_n$, let $a$ be the color of $u_n$ and $b$ the color of $u$, and successively graft $T_{a,b}$ in $T_n$ at $(uu_n,v_av_b)$. By Lemma~\ref{lem:grafting}, the resulting tree $T'_n$ is accepted by $L$. Moreover, every tree $T'_n$ has at least $n$ nodes and diameter at most $2\alpha$, which concludes.    
\end{proof}

We may now assume that each local checker accepts only graphs of bounded degree $\Delta$. In particular, each accepted tree $T$ satisfies $|T|\leqslant \Delta^{\diam(T)}$, hence the diameter is at least logarithmic (and at most linear) with respect to the size\footnote{Note that even if we restrict to the case of bounded degree, we cannot directly use the LCL machinery since the context is not exactly the same. We are looking at possible diameters of accepted trees and not looking at the checking of some property.}. 
To prove the remaining parts of Theorem~\ref{thm:radius-1-MIN}, we study the structure of the edges with the same view, and get a criterion proving that $L$ accepts trees that look like complete binary trees or $k$-rakes for some $k$, and thus must have minimum diameter $O(\log n)$ or $O(n^{1/k})$. 

Given a local checker $L$, we say that a pair of colors $(c_1,c_2)$ is \emph{useful} if $L$ accepts some tree containing a path $u_1,\ldots,u_p$ ($p\geqslant 4$) where $u_1,u_{p-1}$ have color $c_1$ and $u_2,u_p$ have color $c_2$. 

We define the binary relation $<_L$ on the useful pairs of colors as follows: $(c_1,c_2)<_L (d_1,d_2)$ if $L$ accepts a $c$-colored tree $T$ that can be rooted such that, when orienting the edges towards the leaves, there are three arcs  $u_1u_2, v_1v_2$ and $w_1w_2$ such that $u_2$ is an ancestor of $v_1$ and $w_1$, $v_2$ is not an ancestor of $w_2$, both $u_i,v_i$ have color $c_i$ for $i=1,2$, and $w_i$ has color $d_i$ for $i=1,2$. When this condition is satisfied, we moreover say that the rooted tree $T$ \emph{witnesses} that $(c_1,c_2)<_L (d_1,d_2)$. (In other words, it means that $u_1,u_2$ are ancestors of the four other vertices while $v_1v_2$ and $w_1w_2$ lie in different subtrees and are incomparable in $T$) (see Figure~\ref{fig:partialorder} for an illustration).

\begin{figure}[!ht]
    \centering
    \includegraphics[scale=0.6]{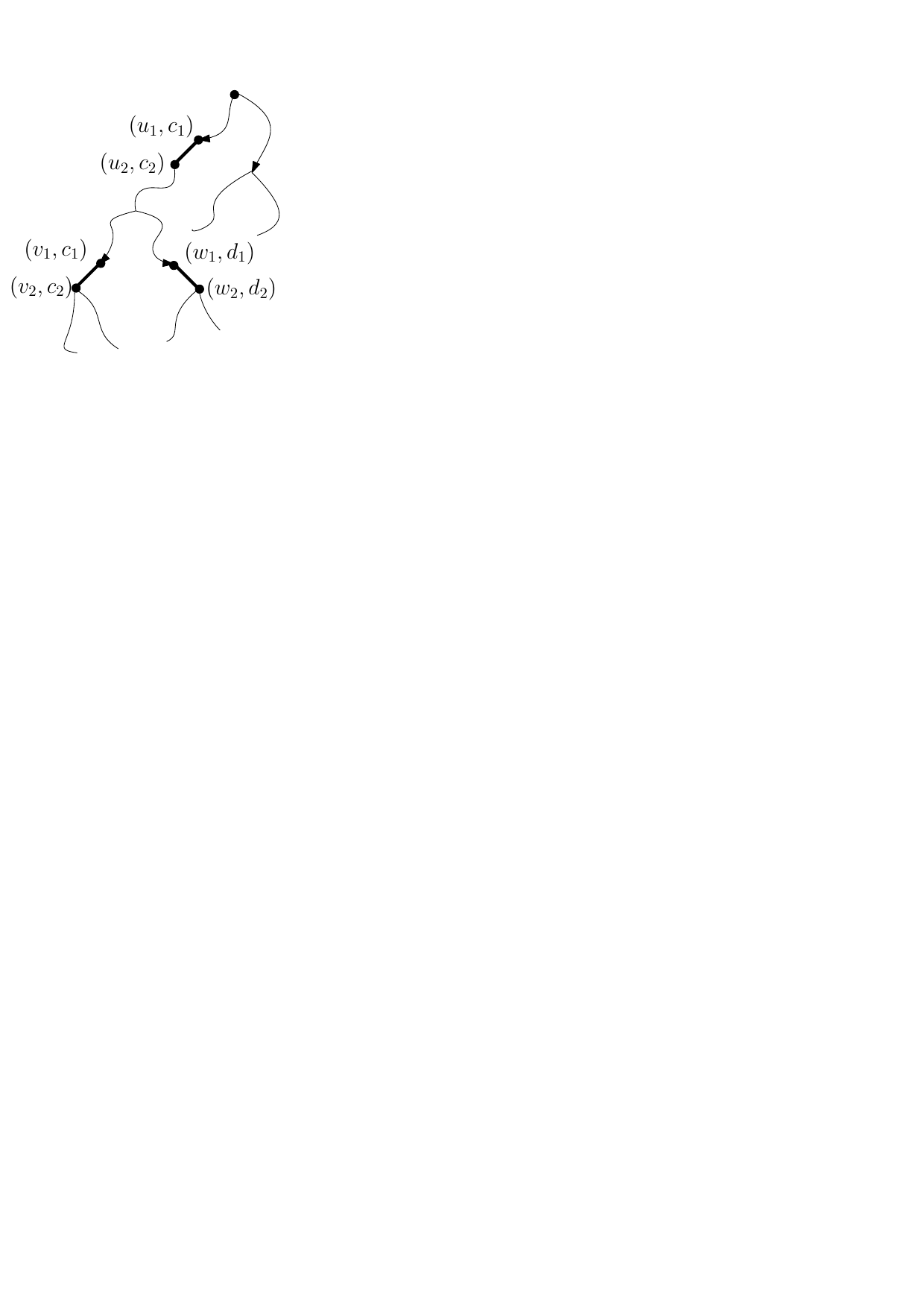}
    \caption{A tree $T$ witnessing  $(c_1,c_2) <_L (d_1,d_2)$. Vertices are labeled with their name and color.}
    \label{fig:partialorder}
\end{figure}

One can easily check that $<_L$ is transitive (a proof of this statement is given in Appendix~\ref{app:transitive}). To get the remaining cases of Theorem~\ref{thm:radius-1-MIN}, we prove in Appendix~\ref{app:log} that if $<_L$ is not a strict partial order, then $L$ has minimum diameter $\Theta(\log n)$. Otherwise, we claim that $L$ has minimum diameter $\Theta(n^{1/k})$ where $k$ is the length of the longest chain for $<_L$ (in particular $k\leqslant c^2$). This is a consequence of the two following results, whose proofs are deferred to the appendix. 

\begin{lemma}
    \label{lem:chaintomindiam}
    If $<_L$ has a chain of size $k$, then the minimum diameter of $L$ is $O(n^{1/k})$.
\end{lemma}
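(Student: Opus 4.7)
The plan is to build, for each large integer $m$, an accepted tree $R_1(m)$ of size $\Theta(m^k)$ and diameter $O(m)$; setting $m\sim n^{1/k}$ will then produce trees of linear size and diameter $O(n^{1/k})$, which suffices since $k\leqslant c^2$ is a constant. Throughout, let $P_j := (c_1^{(j)}, c_2^{(j)})$ denote the $j$-th pair of the chain. At distance $1$, the view of an edge is determined by its color pair, so Lemma~\ref{lem:grafting} will permit grafting freely between any two $P_j$-edges.

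I would construct trees $R_i(m)$ accepted by $L$ by decreasing induction on $i\in\{k,\dots,1\}$, maintaining the invariants $|R_i(m)|=\Theta(m^{k-i+1})$, $\diam(R_i(m))=O(m)$, and that $R_i(m)$ contains $\Theta(m)$ designated $P_i$-edges. For the base case $i=k$, the pair $P_k$ is useful, so $L$ accepts some tree $B_k$ containing a path whose first and last edges are $P_k$-edges; grafting $B_k$ into itself $m$ times between these two edges in the size-increasing direction would yield $R_k(m)$ of the required form.

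For the inductive step, let $T_i$ be the witness tree for $P_i <_L P_{i+1}$, rooted so that $u_1u_2$, $v_1v_2$ are $P_i$-edges and $w_1w_2$ is a $P_{i+1}$-edge, with $v_1v_2$ and $w_1w_2$ lying in incomparable branches of $u_2$. Iterating the self-graft at $(v_1v_2, u_1u_2)$ in the size-increasing direction $m$ times would produce an accepted tree $T_i^{(m)}$: the nested $v$-copies trace out a backbone of length $\Theta(m)$, off which hang $\Theta(m)$ pairwise disjoint copies of the $w$-subtree, each carrying a designated $P_{i+1}$-edge. I would then graft a fresh copy of $R_{i+1}(m)$ onto each of these $P_{i+1}$-edges, replacing only the small $w$-dangle; each such graft matches colors and is localized to one dangle, so acceptance is preserved by Lemma~\ref{lem:grafting} and the grafts commute. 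The resulting tree $R_i(m)$ has size $\Theta(m)\cdot\Theta(m^{k-i})=\Theta(m^{k-i+1})$ and diameter at most $\Theta(m)+2\cdot O(m)=O(m)$, closing the induction.

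The main obstacle will be verifying that the $m$ self-grafts of $T_i$ really yield $\Theta(m)$ pairwise disjoint $P_{i+1}$-dangles, each also disjoint from the backbone. This rests on the incomparability of the $v$ and $w$ branches below $u_2$ in $T_i$: each self-graft replaces only the subtree strictly below the current $v_2$, leaving previously created $w$-dangles and the backbone itself untouched. Once this structural claim is established, the size and diameter recursions telescope without damage since $k$ is a constant.
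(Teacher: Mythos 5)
Your construction is essentially identical to the paper's: pump each witness tree $T^{(i)}$ along its two $(c_i,c'_i)$-edges to create a backbone with $\Theta(N)$ hanging $(c_{i+1},c'_{i+1})$-edges, then recursively graft the level-$(i{+}1)$ structure onto those dangles, yielding trees with $\Theta(N^{k-i+1})$ vertices and diameter $O(N)$ at each level. The size/diameter accounting and the use of the incomparability of the $v$- and $w$-branches to keep the dangles disjoint match the paper's argument, so the proposal is correct.
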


\begin{lemma}
    \label{lem:mindiamtochain}
    If $L$ accepts an $n$-vertex tree of diameter at most $n^{1/k}/\Delta^{(1+1/k)c^2}$, then $<_L$ has a chain of size $k+1$.
\end{lemma}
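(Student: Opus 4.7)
The plan is to prove the contrapositive by induction on $k$: if $<_L$ has no chain of length $k+1$, then any $c$-colored tree $T$ of diameter $D$ accepted by $L$ has $n<D^k\Delta^{(k+1)c^2}$ vertices. The base case $k=0$ is immediate: since no useful color pair exists, by the pigeonhole principle on the $c^2$ possible directed color pairs every path of $T$ has length at most $c^2$, giving $n\le \Delta^{c^2}$.

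For the inductive step, pick a diameter path $P$ in $T$ and root $T$ at one endpoint $p_0$. By pigeonhole some directed color pair $(c_1,c_2)$ occurs as at least $D/c^2$ arcs along $P$; for $D$ sufficiently large this automatically forces $(c_1,c_2)$ to be useful, and its rank $\rho(c_1,c_2)$ is at most $k$ by assumption. The key structural observation is the following: for any two consecutive $(c_1,c_2)$-arcs $e_i=p_{j_i}\to p_{j_i+1}$ and $e_{i+1}$ on $P$, and any useful-pair arc $w_1w_2$ lying either as a path edge between them or inside a subtree hanging off a path vertex strictly between $p_{j_i+1}$ and $p_{j_{i+1}}$, the triple $(u_1u_2,v_1v_2,w_1w_2):=(e_i,e_{i+1},w_1w_2)$ witnesses $(c_1,c_2)<_L (d_1,d_2)$ where $(d_1,d_2)$ denotes the color pair of $w_1w_2$. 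Indeed, $u_2=p_{j_i+1}$ is the common ancestor of $v_1=p_{j_{i+1}}$ and $w_1$, and $v_1v_2$ and $w_1w_2$ sit in different child-subtrees of $u_2$. Consequently every useful-pair arc in these ``internal'' hanging subtrees has rank at most $\rho(c_1,c_2)-1\le k-1$, so by a strengthened induction hypothesis---any tree whose useful-pair arcs all have rank at most $r$ has at most $D^r\Delta^{(r+1)c^2}$ vertices---each such subtree has size at most $D^{k-1}\Delta^{kc^2}$. Summing over the at most $\Delta D$ internal subtrees and adding the $D+1$ vertices of $P$ yields a bound of order $D^k\Delta^{kc^2+1}$ for the central region of $T$, comfortably within the target.

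The main obstacle is handling the two boundary regions: the subtree $T_{\mathrm{top}}$ above $e_1$ and the subtree $T_{\mathrm{bot}}$ below $e_m$, where no $(c_1,c_2)$-arc on $P$ sits above (resp. below) the region to serve as the required $u_1u_2$, so the rank-reduction argument does not apply directly. I would address this by re-rooting $T$ at an interior vertex of $P$---for instance at the centroid, or at the midpoint of the segment between $e_1$ and $e_m$---so that both boundary regions become subtrees rooted at the new root, to which the same decomposition can be applied, possibly using the reverse color pair $(c_2,c_1)$ on one side to symmetrize the analysis. Bounding these subtrees recursively and combining with the central estimate, the boundary terms are absorbed into the $\Delta^{c^2-1}$ slack between the central bound $D^k\Delta^{kc^2+1}$ and the target $D^k\Delta^{(k+1)c^2}$, completing the induction.
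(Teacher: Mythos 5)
Your core structural observation is sound and is essentially the same one the paper exploits: two arcs $e_i,e_{i+1}$ of the same color pair on a root-to-leaf path, together with a useful-pair arc $w_1w_2$ hanging off the segment between them, witness $(c_1,c_2)<_L(d_1,d_2)$, so the pairs occurring in the sandwiched subtrees have strictly larger $<_L$-rank. However, your proof has two genuine gaps. The first is the boundary problem, which you flag but do not solve: the regions of $T$ before the first occurrence and after the last occurrence of the frequent pair are not sandwiched, so the rank of their arcs is not reduced, and these regions can contain a constant fraction of the vertices (the occurrences of the frequent pair need not be spread out along $P$). Re-rooting at the centroid or midpoint does not help, because the same uncovered ``far end'' reappears on each half of the path; and recursing on the uncovered parts at the \emph{same} rank gives a recursion that does not terminate (or, if one uses the fact that hanging subtrees have depth at most $D/2$, a recursion whose branching factor $c^2$ can overwhelm the gain). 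The second gap is the base case of your ``strengthened induction hypothesis'': the recursion bottoms out at trees all of whose useful pairs have rank $0$, not at trees with \emph{no} useful pair, and the former can be as large as $\Theta(D)$ (e.g.\ a long path, all of whose arcs carry a useful pair of rank $0$), so the claimed bound $D^0\Delta^{c^2}$ is false there; this costs an extra factor of $D$ that you must track through the recursion and reconcile with the exponent $k$ in the target.

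For comparison, the paper argues the statement directly (not by contrapositive) by induction on $k$, and sidesteps both issues with a different decomposition. It roots $T$ arbitrarily and contracts it into an auxiliary tree $T_f$: for each color pair in increasing $<_L$-order, whenever a branch carries two arcs of that pair, the subtree below the lower one is grafted onto the upper one. Since no pair can then repeat on a branch, $T_f$ has height at most $c^2$, hence at most $\Delta^{c^2}$ nodes; every deleted vertex is charged to an arc of $T_f$, and the pigeonhole principle yields one arc $u_1u_2$ charged with at least $n/\Delta^{c^2}$ vertices. Because the contracted segment is short (it lies on a branch of length less than the diameter bound), one pending subtree $T'$ is large, and crucially its size-to-diameter ratio satisfies the hypothesis with parameter $k-1$, so a \emph{single} recursive call on $T'$ produces a chain of length $k-1$ above the pair of $u_1u_2$. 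This avoids summing over many hanging subtrees, has no boundary regions, and its base case ($k=1$: a subtree of size at least $\Delta^{c^2}$ has depth at least $c^2$, forcing a repeated, hence useful, pair on a branch) matches what the recursion needs. If you want to salvage your route, you should prove the per-tree rank statement with the corrected base case $O(D\Delta^{O(c^2)})$ and find a genuine argument for the uncovered regions; as written, the proof is incomplete.
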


\bibliographystyle{plain}
\bibliography{biblio}

\appendix 


\section{Further related work}
\label{app:further-related-work}

\paragraph*{Network science perspective}

Network science has a long history of linking local and global properties of networks. A typical example are scale-free networks where the power-law degree distribution is related to clustering and small world phenomenon, and for which generative models have been introduced (such as preferential attachment).
We refer to the book~\cite{Lewis11} for an introduction to the topic. In general, this perspective differs from ours in two ways: the properties considered are global (\emph{e.g.} the degree distribution) thus cannot be checked locally, and the generative models are dynamic processes, while we study static processes (in other words, we are interested in maintaining a structure, not in creating it). 
Papers outside of network science also consider processes to ``grow a graph'', see in particular~\cite{MertziosMSST22}.

\paragraph*{Graph theory perspective}

Graph theory also studies relations between local conditions and global behavior, in a more combinatorial way. For example, there is a wealth of works showing that forbidding small structures in a graph implies the existence of nice decompositions, of bounds on the coloring number etc. Another related direction are theorems à la Dirac: if all nodes have degree at least some function of the number of nodes, then the graph is Hamiltonian or connected. Again, this is quite different from our direction, in particular the diameter is not a topic of interest in this area.

\paragraph*{Distributed computing perspective}

We have already mentioned several works on LCLs, and how they are related to our paper.
For completeness, let us also mention that the complexity landscape has also been studied in rooted trees~\cite{BalliuBCOSST23} and in trees for node-averaged complexity~\cite{BalliuBKOS23}. 
The topic of checking the configuration of a network is central in self-stabilization~\cite{AltisenDDP19}, where it is essential to be able to detect inconsistencies in the computed data-structure, but not in the network topology, which differs from our work. A related field, that does study the network itself is local certification, where one considers labels helping the nodes to check given properties. (We refer to \cite{Feuilloley21} for an introduction to the topic, and to \cite{BousquetFZ24} for a recent paper surveying the works related to graph structure.)

\section{Proofs of Section~\ref{sec:maxdiam}}

\subsection{Proof of Lemma~\ref{lem:grafting}}

    Using the notations of the definition, suppose that $T$ and $T'$ are accepted by $L$, and consider the graft $T''$ of $T'$ in $T$ at $(uv,u'v')$. We claim that each node $w$ of $T''$ has the same view as its copy in $T$ or $T'$. Assume by symmetry that $w$ was in $T$, and root $T''$ at $w$. Then its neighborhood at distance $d$ in $T''$ consists in a copy of its neighborhood at distance $d$ in $T$, except if it contains $v'$. In that case, the subtree rooted at $v'$ is $T'_{v'}$. Recall that $uv$ (in $T$) and $u'v'$ (in $T'$) have the same view at distance $d$, hence the view of $v'$ in $T'_{v'}$ and of $v$ in $T_v$ at distance $d-1$ are the same. Thus the the view of $w$ in the subtree rooted at $v'$ (in $T''$) is the same as the view of subtree rooted at $v$ (in $T$), hence it accepts.
    Therefore, $T''$ is accepted by $L$.

\subsection{Proof of Lemma~\ref{lem:pumping_lemma}}

    Let us denote by $C_1, C_2$ and $C_3$ the connected components of $T \setminus \{ uv,xy \}$ containing respectively $u$ for $C_1$, $v$ and $x$ for $C_2$ and $y$ for $C_3$.

    Let us now define an infinite collection of trees $(T_i)_{i \in \mathbb{N}}$ such that, for every $i$, the diameter of $T_{i+1}$ is larger than the  one of $T_i$ and the size of $T_{i+1}$ is the size of $T_i$ plus the size of $C_2$.

    We set $T_1=T$. We define $T_2$ as the tree obtained by grafting $T$ in $T$ at $(xy,uv)$. Note that, abusing notations, the resulting tree $T_2$ contains the three following edges: $uv$ (in one of the trees $T$), $xv$ (corresponding to $x$ in the first $T$ and $v$ in the second) and $xy$ (in the second tree). We denote by $x_2y_2$ the edge $xy$. Now let $i \ge 3$.  and we define $T_{i+1}$ as the tree obtained by grafting $T$ in $T_i$ at $(x_2y_2,uv)$. 
    Observe that this operation creates a new copy of $xy$ in $T_{i+1}$ that we denote by $x_{i+1}y_{y+1}$ (lying in the part of $T$ that is added to $T_i$). By Lemma~\ref{lem:grafting}, each $T_i$ is accepted by $L$. 
    
    The conclusion follows from the fact that $T_i$ has at most $|T|+(i-1)\cdot |C_2|$ nodes and diameter at least $i\cdot d_T(x,v)$.

\section{Proof of Lemma~\ref{lem:d-caterpillar}}

Consider a local checker that can see at distance $d+1$. In particular, a vertex $u$ can see all the neighbors of all its neighbors at distance at most $d$. The vertex $u$ can then iteratively remove all the leaves $d$ times. After all these deletions, if the vertex is not eliminated, it should have at most $2$ neighbors (otherwise it rejects) which are its (at most) two neighbors on the backbone. This proves the first and second points. 

Let us denote by $T_u$ the view of $u$ rooted at $u$ where the subtrees rooted on the neighbors of $u$ in the backbone are removed. The vertex $u$ first checks its degree in $T_u$ and checks that it is coherent with the degree of its neighbors on the backbone (if it has degree $1$ mod $3$, its two neighbors must have degree $0$ and $2$).

After removing leaves attached to $u$ in $T_u$ in order to reach a degree $0$ mod $3$, the vertex $u$ can check that its pending subtrees can be partitioned into three copies of the same forest in order to determine its letter. Since it sees at distance $d+1$ it can run a similar process to determine the letter of its neighbor. Note that this is not completely immediate since, even if it sees the trees of depth $d$ rooted on its neighbors it cannot check that leaves are indeed leaves. However, after the removal of 0, 1 or 2 attached leaves on its neighbor, the attached trees must consist of three times the same forest plus an additional tree that corresponds to the beginning of the subtree attached on the neighbor at distance $2$ on the backbone. In particular the degree is equal to $1$ modulo $3$. So after grouping the trees three by three, it only remains one tree which corresponding to the rest of the backbone. Removing it allows $u$ to know the tree associated with its neighbor, and thus its letter, which completes the proof of the fourth point.

For the endpoints of the backbone, there is only one long path attached to them, which can indeed be easily determined. This proves the third point.

\section{Consequences on generalized LCL -- Proof of Corollary~\ref{coro:LCL}}
\label{app:coroLCL}

\CoroLCL*

    Let us first discuss slight modifications of the checkers we have manipulated to get our upper bounds in Section~\ref{sec:exactdiam}.
    We will prove that we can modify checkers to ensure that if our original special checker $L$ accepts a tree $T[w]$, then the modified checker accepts all the trees $T[w']$ where $w'$ is obtained from $w$ by repeating a constant number of times some letters.

    Let $k \ge 1$. A word $W'$ is a $k$-subdivision of $W=a_1\cdots a_n$ if it is a subword of $a_1^k\cdots a_n^k$ and it contains $W$ as a subword. The $k$-subdivision of a language $S$ is the language of all the $k$-subdivisions of words of~$S$. (Note that $S$ is the $1$-subdivision of $S$). Corollary~\ref{coro:LCL} relies on the fact that $k$-subdivision preserves recognition by special local checkers, up to increasing the checkability radius.
    \begin{claim}\label{clm:generalizedLCL}
         Let $r\ge k \ge 2$ and $L\in \mathcal{L}^*_{c,d}$ accepting a $r$-testable language $S$ such that no word of $S$ has identical consecutive letters. Then there is a special local checker $L^*$ in  $\mathcal{L}^*_{c,d+rk}$ accepting the $k$-subdivision of $S$.
    \end{claim}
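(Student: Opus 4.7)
The idea is that $L^*$, operating at distance $d + rk$, will essentially simulate $L$ on the word $w$ obtained by collapsing each maximal run of equal letters in $w'$, while additionally verifying that the subdivision itself is legitimate (every run has length between $1$ and $k$). The claim then reduces to showing that a backbone vertex can reconstruct enough of the collapsed window around its own position to do both jobs locally.

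First, since $d + rk \geq d + 1$, Lemma~\ref{lem:d-caterpillar} lets every backbone vertex $v$ of $T[w']$ detect that the input is a special $d$-caterpillar and recover its own associated letter, the letters of its backbone neighbors, and whether it is an endpoint of the backbone. By chaining this recovery over the $d + rk$-neighborhood, $v$ also learns the letters of all backbone vertices at backbone-distance up to $rk - 1$, i.e.\ it reads a window $w'[i - rk + 1, \ldots, i + rk - 1]$ of the subdivided word around its own position $i$. Then $v$ partitions this window into maximal runs of equal letters and rejects if any run fully contained in its window has length strictly greater than $k$.

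Second, since by assumption no word in $S$ has two consecutive identical letters, distinct runs in $w'$ must collapse to distinct consecutive letters of $w$, so the collapse $w' \mapsto w$ is unambiguous. Because each (legitimate) run has length at most $k$, any window of $rk$ consecutive positions of $w'$ contains at least $r$ complete runs; hence $v$ sees at least $r$ complete collapsed letters on each side, giving it a substring of $w$ of length at least $2r+1$ centered at the collapsed position of $v$. Since $S$ is $r$-testable, membership in $S$ is decided by the presence of forbidden substrings of some $O(r)$-bounded length, all of which fit inside this collapsed window; so $v$ can perform the corresponding local check of $L$. The conjunction of all local accepts over backbone vertices of $T[w']$ is then equivalent to every such short substring of $w$ being allowed, which by $r$-testability is equivalent to $w \in S$. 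Boundary conditions at the backbone endpoints are handled exactly as in the definition of special local checkers, using the endpoint-detection guaranteed by Lemma~\ref{lem:d-caterpillar}.

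I expect the main obstacle to be the careful bookkeeping of incomplete runs at the boundary of $v$'s window: if the window ends in the middle of a run, $v$ cannot certify that run's length. The radius $d + rk$ (rather than $d + r(k-1)$ or similar) is chosen precisely so that $v$'s view contains $r$ \emph{complete} runs on each side, and the hypothesis that no word of $S$ has two consecutive identical letters is what makes the collapse unambiguous. With these two ingredients in place, everything $v$ needs to simulate $L$'s check on $w$ is visible within distance $d + rk$, and $L^*$ can be defined as the conjunction, over all backbone vertices, of the caterpillar-structure check, the run-length check, and the local $r$-testability check, yielding a checker in $\mathcal{L}^*_{c, d+rk}$ accepting exactly the $k$-subdivision of $S$.
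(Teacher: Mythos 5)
Your proposal matches the paper's proof: the paper's argument is exactly to have each backbone node recover the letters within backbone-distance $rk-1$, collapse the repetitions (unambiguously, since no word of $S$ has identical consecutive letters and each letter is repeated at most $k$ times), and thereby obtain the radius-$(r-1)$ window of the un-subdivided word needed to run $L$'s $r$-testability check. Your version is somewhat more careful about verifying run lengths and handling truncated runs at the window boundary, but the idea is the same.
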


\begin{proof} 
We consider the same checker $L$ with slight modifications. Each node $u$ now can recover all letters at distance $kr-1$, and remove all the repetitions it sees. Since no word of $S$ has identical consecutive letters, and each can be repeated at most $k$ times in the subdivision, $u$ now has access to the letters at distance $r-1$ from it in the un-subdivided word and run the verification process from $L$ (since $S$ is $r$-testable). 
\end{proof}
    
    We may now proceed with the proof of Corollary~\ref{coro:LCL}. 
    \begin{proof}[Proof of Corollary~\ref{coro:LCL}]
    Let us now consider the languages $L_1$ and $L_2$ of the previous sections which are $2$-testable. Note that there is no repetition of letters in neither $L_1$ nor $L_2$. By Claim~\ref{clm:generalizedLCL}, there is a special checker that can recognize their $2$-subdivisions. 
    
    Now, we define a generalized-LCL, which is a generalization of $2$-coloring, in the following way. Each node must choose an output that is either 0, 1, or empty.  
    The verification algorithm checking the correctness of the output is the following. On a node $v$,
    \begin{itemize}
        \item The output of all the vertices that are not in the backbone is empty while all the vertices in the backbone should output $0$ or $1$. 
        \item If $v$ outputs $0$ or $1$, it cannot have a neighbor with the same output (i.e. the non-empty output labels define a 2-coloring.)
    \end{itemize}
    
    We claim that in the trees accepted by the modified local checker, this is a global problem. 
    Consider two nodes at the endpoints of the maximum path, and suppose they have a view asymptotically smaller than the diameter. 
    Now, these nodes cannot recover the middle of the word encoded by the tree. 
    In this middle part of the word, either delete a letter that was repeated or delete a repetition. This leads to a new tree, where the endpoints of the backbone have still the same view, hence should output the same. The $2$-colorings of one of the two trees  must be incorrect, which is a contradiction.
\end{proof}

Finally note that in the proof above, we have decided to increase a bit the checkability radius of the local checker. 
If we allow oursleves to have two input colors, then we can keep the same radius. 
In the instances we consider, all nodes will have input 0, except possibly the first node of the backbone which can have either 0 or 1.
Then we enforce that first node should have the same input and output label. Since the nodes on the other side of the tree do not know about this input, 2-coloring is again a global problem. 

\section{Proofs of Section~\ref{sec:min-diameter}}
\subsection{Proof of Lemma~\ref{lem:krakes}}

Constructing inductively a $k$-rake by choosing each time a path $P$ on $\ell$ vertices, we obtain a $k$-rake with $\ell^k$ vertices and of diameter $k\ell$. Hence the class of $k$-rakes has minimum diameter $O(n^{1/k})$. This bound is actually tight: this is clear for $1$-rakes. For $k \ge 2$, let $R$ be a $k$-rake and $P$ be a path whose deletion leaves a collection of $(k-1)$-rakes. If $P$ has length at least $n^{1/k}$, the conclusion follows. So we can assume that $|P|<n^{1/k}$. Since $R \setminus P$ contains at most $|P|$ connected components, one of the $(k-1)$-rakes has size at least $n^{(k-1)/k}$. And by induction hypothesis, such a $(k-1)$-rake has diameter at least $n^{1/k}$.

Let us now prove that for every $k>0$, there exists a local checker $L\in\mathcal{L}_{3k,1}$ accepting exactly the class of $k$-rakes as long as $c \ge 3k$. The $3k$ colors will be represented as pairs of integers $(i,j)$ with $1\leqslant i \leqslant k$ and $j\in\{1,2,3\}$. Each node checks that it has degree at most three. Moreover each node of color $(i,j)$ checks that it has a neighbor $(i+1,1)$ (if $i<k$) and that its other neighbors have colors either $(i,j-1\mod 3)$ and $(i,j+1\mod 3)$, or $(i,j+1\mod 3)$ and $(i-1,*)$ or only $(i,j-1\mod 3)$.

\subsection{$<_L$ is transitive}
\label{app:transitive}
    Assume that $(c_1,c_2) <_L (d_1,d_2)$ and $(d_1,d_2) <_L (e_1,e_2)$. Denote by $T,T'$ the (rooted) trees that witness the relation $(c_1,c_2) <_L (d_1,d_2)$ (resp. $(d_1,d_2) <_L (e_1,e_2)$). In particular, $T$ contains two arcs $u_1u_2,v_1v_2$ of colors $(c_1,c_2)$ and an arc $w_1w_2$ of colors $(d_1,d_2)$, and $T'$ contains the arcs $x_1x_2,y_1y_2$ of colors $(d_1,d_2)$ and $z_1z_2$ of color $(e_1,e_2)$ with the ancestor relation of the definition of $<_L$.

    Construct the tree $T''$ by grafting $T'$ in $T$ at $(x_1x_2,w_1w_2)$. By Lemma~\ref{lem:grafting}, $T''$ is accepted by $L$. Moreover, by construction of $T''$, the copy of $u_2$ is an ancestor of $z_1$ but not $v_2$, hence $(c_1,c_2)<_L (e_1,e_2)$. 

\subsection{The logarithmic case}
\label{app:log}
    If $<_L$ is not a strict partial order, since it is transitive, there must be a pair of colors $(c_1,c_2)$ such that $(c_1,c_2)<_L (c_1,c_2)$. In particular there is a rooted tree $T$ with three arcs $u_1u_2, v_1v_2, w_1w_2$, all of colors $(c_1,c_2)$ such that $u_2$ is an ancestor of $v_1$ and $w_1$ but $v_2,w_2$ are not ancestors of each other. 

\begin{figure}
    \centering
    \includegraphics[scale=.65]{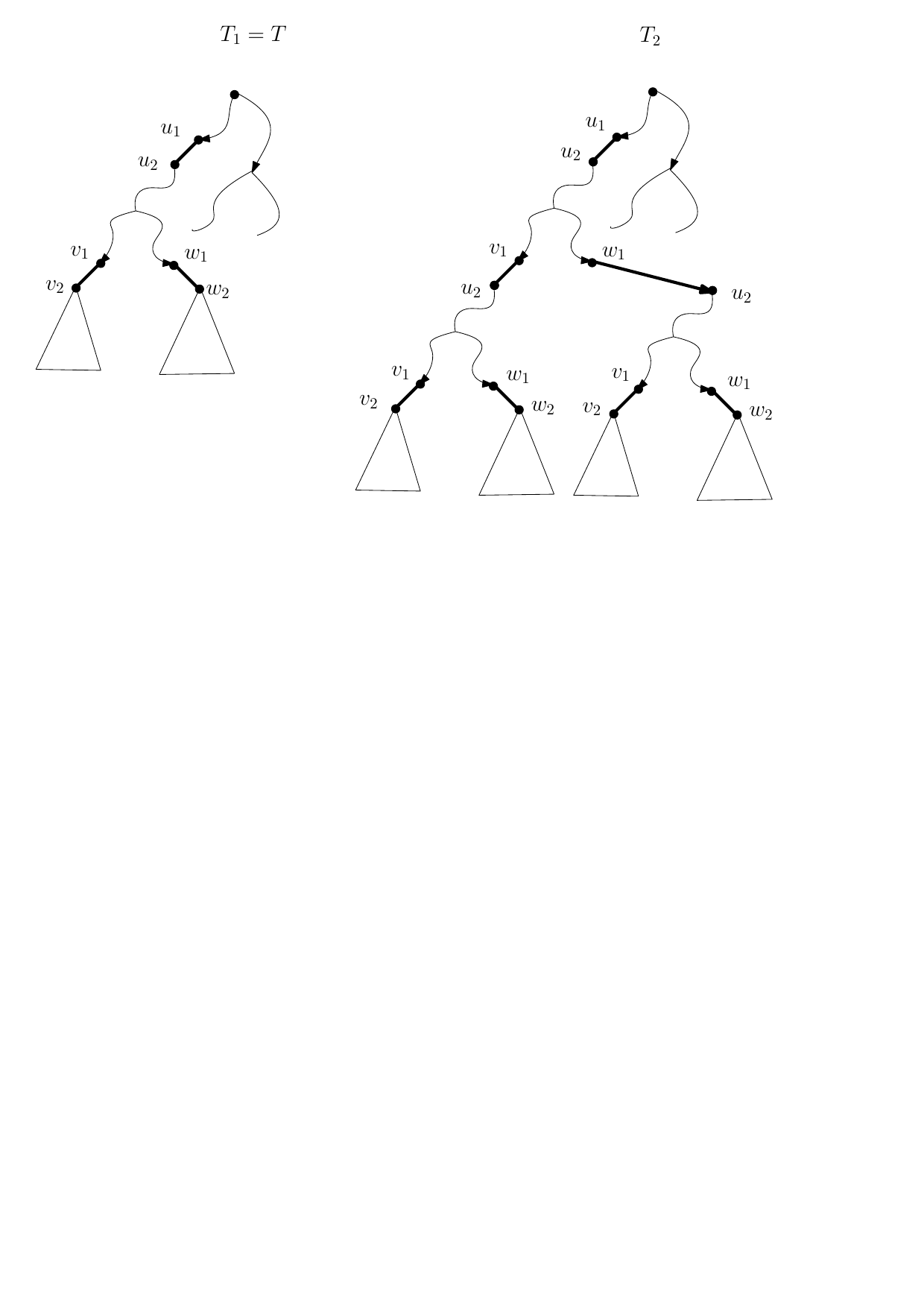}
    \caption{Illustration of the proof of Appendix~\ref{app:log}. The trees $T_1$ and $T_2$ are depicted.}
    \label{fig:logcase}
\end{figure}

    From $T$, we construct a sequence of trees $(T_i)_i$ that look like binary complete trees. Each $T_i$ will contain $2^i$ copies of the subtrees of $T$ rooted at $u_2$. Let $T_1=T$ and define $T_{i+1}$ by successively grafting $T$ in $T_i$ at $(xy,u_1u_2)$, where $xy$ runs across all $2^i$ copies of $v_1v_2$ or $w_1w_2$ in $T_i$. Observe that $L$ accepts each $T_i$ by Lemma~\ref{lem:grafting} (see Figure~\ref{fig:logcase} for an illustration). Moreover, the diameter of $T_i$ increases by a constant (at least one and at most the diameter of $T$) compared to $T_{i-1}$ and the number of nodes increases by at least $2^i$ and at most $|T| 2^i$. Thus $T_i$ has diameter $\Theta(i)$ and $\Theta(2^i)$ nodes, which concludes.

\subsection{Proof of Lemma~\ref{lem:chaintomindiam}}

We basically show that if $<_L$ has a chain of size $k$, then $L$ accepts trees with a $k$-rake-like structure, hence its minimum diameter is $O(n^{1/k})$.

Consider a chain of $k$ pairs of colors $(c_1,c'_1) <_L \cdots <_L (c_k,c'_k)$. Denote by $T^{(i)}$ a rooted tree and $u_1^{(i)}u_2^{(i)},v_1^{(i)}v_2^{(i)},w_1^{(i)}w_2^{(i)}$ its vertices witnessing that $(c_i,c'_i)<_L (c_{i+1},c'_{i+1})$. Finally, let $T^{(k)}$ be a tree witnessing that $(c_k,c'_k)$ is useful, that is $T^{(k)}$ contains a path between two edges $u_1^{(k)}u_2^{(k)}$ and $v_1^{(k)}v_2^{(k)}$ of colors $(c_k,c'_k)$. 

\begin{figure}
    \centering
    \includegraphics[scale=0.6]{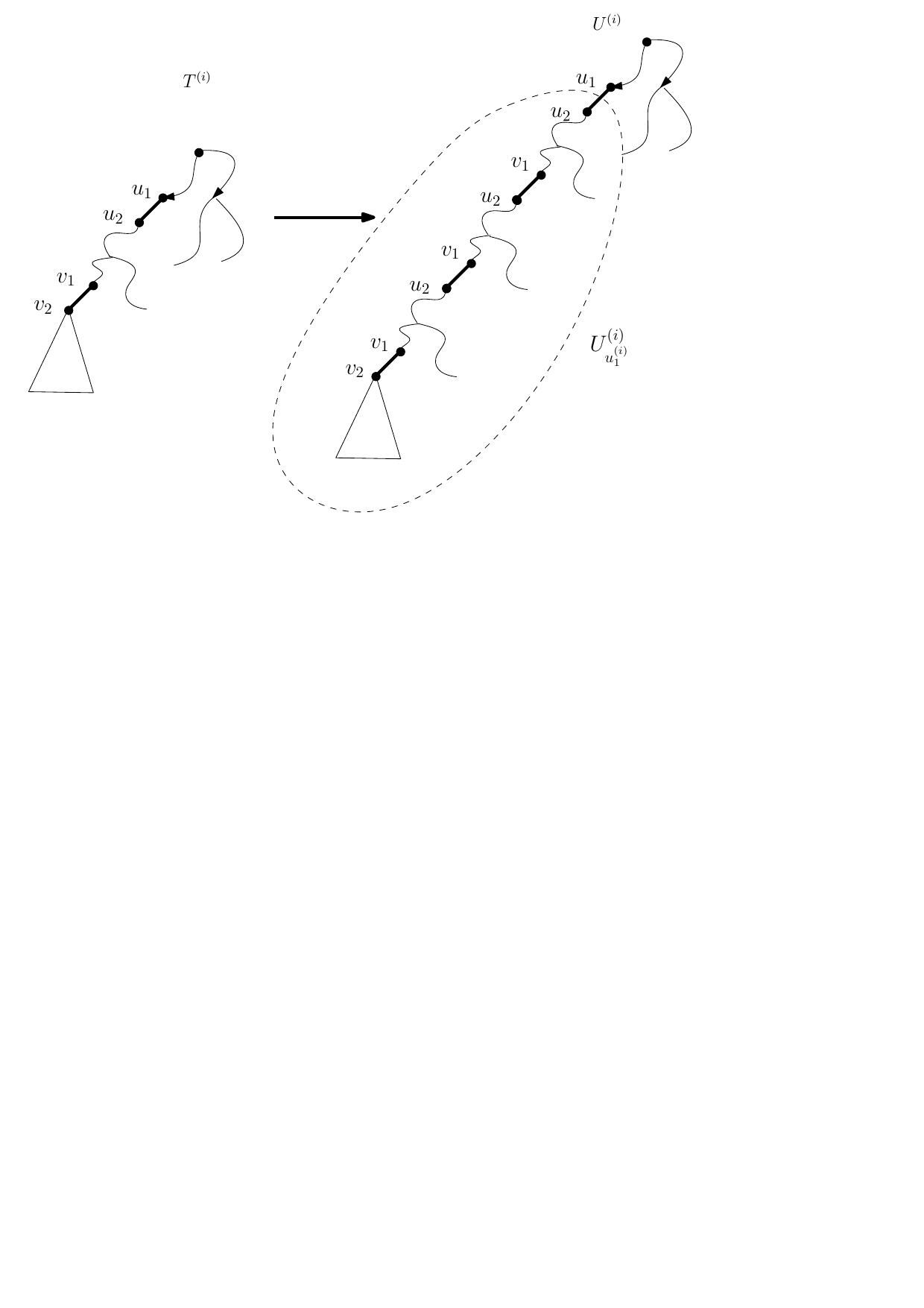}
    \caption{The construction of $U^{(i)}$ from $T^{(i)}$ (some exponents have been removed for readability).}
    \label{fig:Uk}
\end{figure}

Let $N>0$. For every $i$, we start by grafting $T^{(i)}$ in $T^{(i)}$ at $(v_1^{(i)}v_2^{(i)},u_1^{(i)}u_2^{(i)})$ $N$ times, as in the proof of Lemma~\ref{lem:pumping_lemma} (see Figure~\ref{fig:Uk}). This yields a tree $U^{(i)}$ containing a copy of the edge $u_1^{(i)}u_2^{(i)}$ of $T^{(i)}$ and $N$ copies of $w_1^{(i)}w_2^{(i)}$, that is still accepted by $L$ by Lemma~\ref{lem:grafting}.

\begin{figure}[!ht]
    \centering
    \includegraphics[scale=0.65]{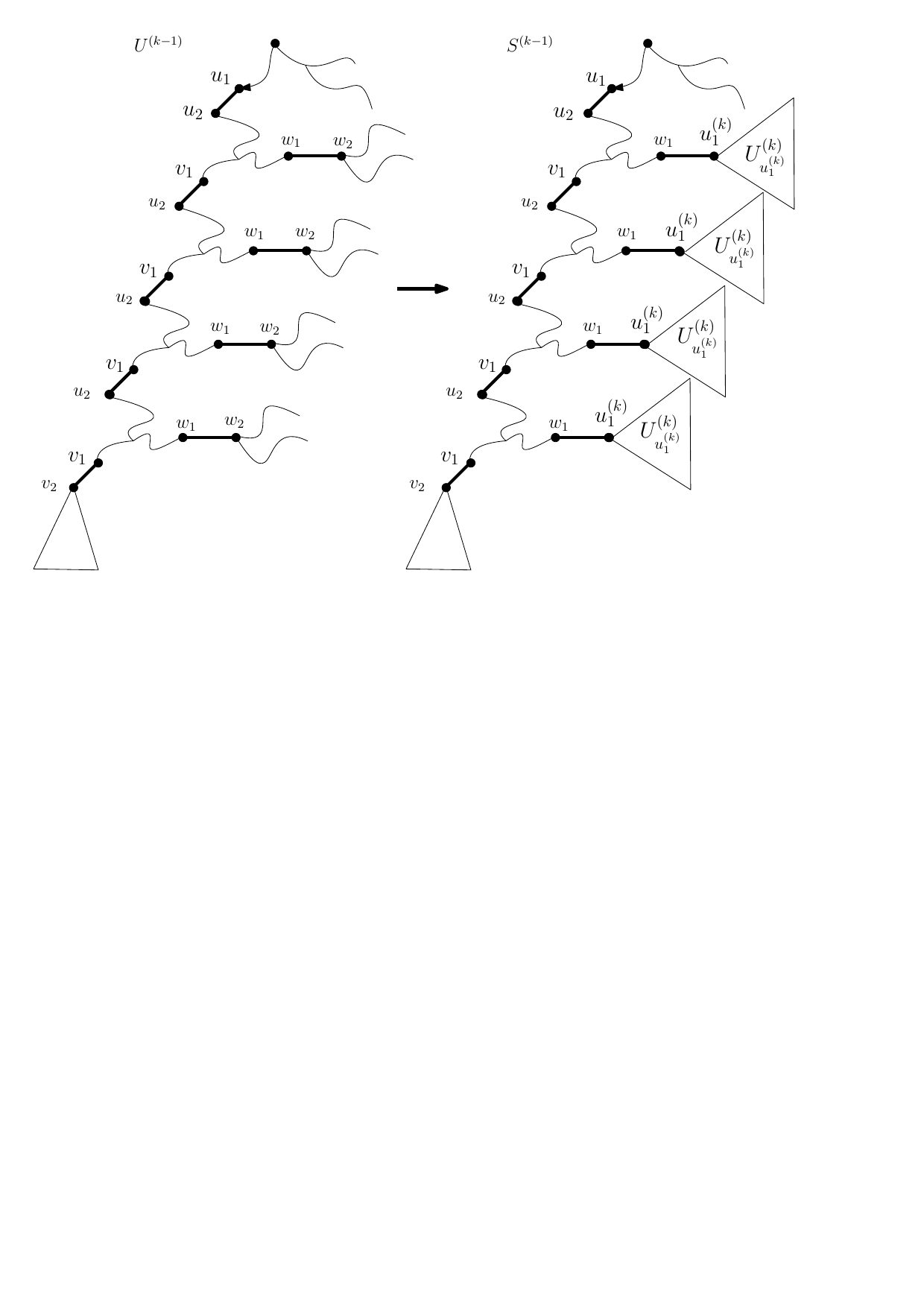}
    \caption{The graph $S^{(k-1)}$ built from $S^{(k)}$. The graph $U^{(k)}_{u_1^{(k)}}$ is the one of Figure~\ref{fig:Uk}.}
    \label{fig:orderproof2}
\end{figure}

Set $S^{(k)}:=U^{(k)}$ and for $i=k-1$ down to $1$ define $S^{(i)}$ as the tree obtained by successively grafting $S^{(i+1)}$ in $U^{(i)}$ at $(u_1^{(i+1)}u_2^{(i+1)}, xy)$ where $xy$ ranges across the $N$ copies of $w_1^{(i)}w_2^{(i)}$ in $U^{(i)}$ (see Figure~\ref{fig:orderproof2}. By Lemma~\ref{lem:grafting}, all these trees are accepted by $L$.


For large enough $N$, each $U^{(i)}$ has diameter and size $\Theta(N)$. Therefore, when constructing $S^{(i)}$ from $S^{(i+1)}$, the number of nodes is multiplied by $\Theta(N)$ while the diameter increases by an additional $\Theta(N)$. In particular, $S^{(i)}$ has diameter $\Theta((k-i+1)\cdot N)$ and contains $\Theta(N^{k-i+1})$ vertices. Taking $i=1$ proves that $L$ has minimum diameter at most $O(n^{1/k})$.




\subsection{Proof of Lemma~\ref{lem:mindiamtochain}}

We prove by induction a slightly stronger and technical statement: if $L$ accepts an arbitrarily tree with an $n$-vertex (possibly non-rooted) subtree $T$ of diameter less than $n^{1/k}/\Delta^{(1+1/k)c^2}$, then $T$ contains an edge whose pair of colors is smaller than $k$ other useful pairs (for $<_L$). 

Assume that there is a tree accepted by $L$ with an $n$-vertex subtree $T$ of diameter less than $n^{1/k}/\Delta^{(1+1/k)c^2}$. Root $T$ arbitrarily. We construct an auxiliary tree $T_f$, starting with $T_f:=T$ and successively modifying it as follows. For each pair of colors $(c_1,c_2)$ (by increasing order of $<_L$), while there exists a branch going through two arcs $u_1u_2$ and $v_1v_2$ with $u_i,v_i$ colored with $c_i$, we assume that $u_1u_2$ and $v_1v_2$ are respectively the closest and furthest to the root among all arcs of the branch with these colors, and we graft at $u_1$ in $T$ the subtree of $T$ rooted at $v_1$. Observe that each time this operation is applied, some vertices, namely the descendants of $u_1$ but not of $v_1$, are deleted from $T$. Denote by $T_f$ the tree obtained after this process is finished, and observe that $T_f$ has height at most $c^2$ since no two pairs of colors can repeat on edges of the same branch. In particular, $T_f$ has at most $\Delta^{c^2}$ nodes.

Since we considered the pairs of colors by increasing order of $<_L$, due to the choice of $u_1u_2$ as closest to the root, all the vertices chosen as $u_1$ and $u_2$ at some point during this process are chosen only once, and moreover they cannot be deleted afterwards. In other words, each deleted vertex of $T$ can be associated with an arc $u_1u_2$ of $T_f$. By the pigeonhole principle, there is one arc $u_1u_2$ associated with at least $n/\Delta^{c^2}$ deleted nodes. 

Consider the branch $b$ and its arc $v_1v_2$ chosen when $u_1u_2$ was selected. Observe that the distance between $u_2$ and $v_1$ is less than $n^{1/k}/\Delta^{(1+1/k)c^2}$, hence one of the trees $T'$ pending from $b$ whose root is between $u_2$ and $v_1$ has size more than $n'= n^{1-1/k} \cdot \Delta^{c^2/k}$. 

First assume that $k>1$. Note that $T'$ has diameter less than $n^{1/k}/\Delta^{(1+1/k)c^2} \leqslant {n'}^{1/(k-1)}/\Delta^{(1+1/(k-1))c^2}$. By induction, $T'$ contains an arc $w_1w_2$ whose pair of colors $(d_1,d_2)$ is smaller than $k-1$ other pairs. Denoting by $(c_1,c_2)$ the colors of $u_1u_2$, we get that $(c_1,c_2) <_L (d_1,d_2)$, hence $(c_1,c_2)$ is smaller than $k$ other useful pairs, which concludes the induction.

We may thus assume that $k=1$. Since $n'\geqslant \Delta^{c^2}$, $T'$ must have depth at least $c^2$. On a longest branch, some pair of colors $(d_1,d_2)$ must be repeated. Hence $(d_1,d_2)$ is useful and $(c_1,c_2)<_L(d_1,d_2)$, which concludes.

\section{Restricting the expressivity: degree-myopic local checkers}
\label{sec:myopic}
This section is devoted to the proof of Theorem~\ref{thm:classification-myopic}. 

\ThmClassificationMyopic*

The proof follows a two-step approach: first, we restrict the possible values for the parameters $a_i,b_i$'s, by discarding values that either yield degree-myopic checkers with constant or linear maximum diameter or can be modified without changing the maximum diameter. For example, each tree contains leaves, hence we must have $b_{leaves}\geq 1$. Then, we classify the possible diameter depending on the remaining free parameters.

\subsection{Structural properties}
\label{subsec:structural-properties}

First, we note that similarly to what happened earlier in the paper, the regimes $O(1)$ and $\Omega(n)$ are uninteresting. It is easy to find degree-myopic local checkers in these regimes: a checker where non-leaf nodes can only see leaves leads to stars; and a checker where every non-leaf node accepts only two neighbors of equal-degree leads to paths.
Therefore, we focus on the set of degree-myopic local checkers whose maximum diameter lies between $\omega(1)$ and $o(n)$, that we denote by $\midd{}$. And prove a series of claims on that set.

\begin{claim}
    \label{clm:aD+1=0}
    For any local checker in $\midd{}$, we must have $a_{Degree+1}=0$. 
\end{claim}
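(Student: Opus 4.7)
The plan is to argue by contradiction: suppose $a_{Degree+1}\geq 1$, and then follow ``Degree+1'' neighbors iteratively through any accepted tree to exhibit an infinite strictly increasing degree sequence, which is impossible in a finite tree.

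More concretely, since $L\in \midd{}$ has maximum diameter $\omega(1)$, it accepts some tree $T$ of diameter at least $2$, which must contain at least one non-leaf vertex $v_0$. The key observation, stressed in the definition of the model, is that the ``near-leaf exception'' is granted only for the lower bound $a_{Degree-1}$, not for $a_{Degree+1}$. So every non-leaf vertex $v$ of an accepted tree is unconditionally required to have at least $a_{Degree+1}\geq 1$ neighbor of type Degree+1, i.e., of degree $\deg(v)+1$. I would then build a sequence $v_0,v_1,v_2,\dots$ by letting $v_{i+1}$ be any Degree+1-neighbor of $v_i$. Since $\deg(v_0)\geq 2$, an induction gives $\deg(v_i)=\deg(v_0)+i\geq i+2$, so every $v_i$ is itself non-leaf and therefore admits a Degree+1-neighbor $v_{i+1}$; the construction never stalls.

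Finally, the degrees along the sequence are pairwise distinct, hence the vertices $v_i$ themselves are pairwise distinct, producing infinitely many vertices in the finite tree $T$ — a contradiction. Therefore $a_{Degree+1}=0$, as claimed.

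I do not foresee a genuine obstacle: the argument is essentially a finiteness check along a monotone ``degree-increasing'' walk. The only point that has to be stated carefully is the model-level asymmetry between $a_{Degree-1}$ and $a_{Degree+1}$; it is what prevents the analogue of this claim from holding for Degree-1 and, dually, what makes the proof above clean.
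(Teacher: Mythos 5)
Your proof is correct and rests on the same core observation as the paper's: in a finite accepted tree, requiring every non-leaf vertex to have a neighbor of strictly larger degree is impossible. The paper phrases it as a one-step extremal argument (a vertex of maximum degree $\Delta>1$ would need a neighbor of degree $\Delta+1$), while you phrase it as an infinite degree-increasing walk; these are the same finiteness contradiction.
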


\begin{proof}
    Consider a tree accepted by a local checker with $a_{Degree+1}>0$. Since it is finite, then there is a node $v$ of maximum degree $\Delta$.
    If $\Delta=1$, then the tree has only leaves, thus it has one or two nodes. Since the local checker is in $\midd{}$ there must be trees with $\Delta>1$. In that case, $v$ should have at least $a_{Degree+1}\geq 1$ neighbors of degree $\Delta+1$, a contradiction. 
\end{proof}

The following results make great use of the grafting operation (Definition~\ref{def:grafting} in Section~\ref{sec:maxdiam}). In our restricted setting, Lemmas~\ref{lem:grafting} and~\ref{lem:pumping_lemma} still hold, where the view of an edge is given by the degrees of its endpoints. Note that in particular, the trees recognized by local checkers in $\midd$ cannot contain any path between edges with the same view. We first use this idea to get rid of the \emph{equal-degree} type.

\begin{claim}
\label{cl:eqdeg}
    There cannot be a path $u_1,u_2,\ldots, v_1,v_2$ with $\deg(u_1)=\deg(u_2)$ and $\deg(v_1)=\deg(v_2)$ in a tree accepted by some $L\in\midd{}$.
\end{claim}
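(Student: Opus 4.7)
The plan is to derive a contradiction by assuming some $L \in \midd$ accepts a tree $T$ containing such a path, and then exhibiting an infinite family $(T_K)_K$ of trees accepted by $L$ whose diameters grow linearly in their sizes. Let $d = \deg(u_1) = \deg(u_2)$ and $d' = \deg(v_1) = \deg(v_2)$, allowing possibly $d \neq d'$. I begin by removing the edges $u_1 u_2$ and $v_1 v_2$ from $T$, which partitions it into three subtrees: the component $A$ of $u_1$, the component $B$ containing both $u_2$ and $v_1$, and the component $C$ of $v_2$. Let $\ell \geq 0$ denote the length of the $u_2$-to-$v_1$ path inside $B$.

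For every odd $K$, I build $T_K$ by chaining $A$, then $K$ copies of $B$ with alternating orientations, then $C$. Odd-indexed copies $B^{(1)}, B^{(3)}, \ldots, B^{(K)}$ are inserted in their natural orientation (with $u_2^{(i)}$ on the $A$-side), while even-indexed copies $\bar B^{(2)}, \bar B^{(4)}, \ldots$ are reversed (with $v_1^{(i)}$ on the $A$-side). The resulting gluing edges are $u_1 u_2^{(1)}$ between $A$ and $B^{(1)}$, then alternately $v_1^{(i)} v_1^{(i+1)}$ and $u_2^{(i)} u_2^{(i+1)}$ between consecutive $B$-copies, and finally $v_1^{(K)} v_2$ between $B^{(K)}$ and $C$. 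Because we alternate, every gluing edge has equal-degree endpoints: the $A$-to-$B^{(1)}$ edge and every ``$u_2$-$u_2$'' join have degree pair $(d,d)$, while every ``$v_1$-$v_1$'' join and the $B^{(K)}$-to-$C$ edge have degree pair $(d',d')$. The odd parity of $K$ is what ensures that the chain terminates on a $v_1$-end, so that the final join to $v_2$ is compatible.

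I then verify that $L$ accepts $T_K$. Every vertex strictly inside one of the pieces inherits from $T$ both its degree and the full multiset of neighbor degrees, so its degree-myopic check is unaffected; at each gluing endpoint, the removed original neighbor (of degree $d$ or $d'$) is replaced by a new neighbor of the same degree, so again the multiset of neighbor types is exactly the one it had in $T$. Since a degree-myopic checker depends only on this multiset, $L$ accepts at every vertex. Finally, $|T_K| = |A| + K|B| + |C| = \Theta(K)$ while $\diam(T_K) \geq K(\ell+1) \geq K$, so $\diam(T_K) = \Omega(|T_K|)$ and $L$ has linear maximum diameter, contradicting $L \in \midd$. The main obstacle is the case $d \neq d'$: a naive self-graft at $(u_1 u_2, v_1 v_2)$ as in Lemma~\ref{lem:pumping_lemma} would create a gluing edge of mixed type $(d, d')$ and break the type counts at both endpoints, and this is exactly what the alternating-orientation trick sidesteps by ensuring every gluing stays at matching degrees, so that the degree-myopic check survives regardless of whether $d$ and $d'$ coincide.
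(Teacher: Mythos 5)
Your proof is correct and is essentially the paper's argument: the paper obtains exactly your alternating chain of $B$ and reversed $B$ by first grafting $T$ into itself at $(v_1v_2,v_2v_1)$ --- the reflection across the equal-degree edge --- and then invoking Lemma~\ref{lem:pumping_lemma} on the two resulting same-view copies of $u_1u_2$, whereas you build the chain explicitly and re-verify acceptance at each gluing point by hand. Your closing remark about why a naive self-graft at $(u_1u_2,v_1v_2)$ fails when $d\neq d'$ is precisely the obstruction that the paper's reflection step (and your alternating orientations) is designed to circumvent.
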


\begin{proof}
Assume that a tree $T$ accepted by $L$ contains such a path. Now graft $T$ in $T$ at $(v_1v_2,v_2v_1)$. By Lemma~\ref{lem:grafting}, the new tree $T'$ is accepted by $L$. Moreover, $T'$ contains two copies of the edge $u_1u_2$, call the second one $u'_1u'_2$. These two edges have the same view, hence one can apply Lemma~\ref{lem:pumping_lemma} to construct trees of linear diameter accepted by $L$, a contradiction.
\end{proof}

Using this result, we can already restrict the study to checkers $L\in\midd$ with $a_{Equal-degree}=b_{Equal-degree}=0$. Indeed, by Claim~\ref{cl:eqdeg}, all accepted trees only use the equal-degree case at most once, hence necessarily $a_{Equal-degree}=0$. Then for every such tree $T$, we cut the equal-degree edge, keep only half of the tree (the one with largest diameter), and fix it in the following way. 
We replace the cut edge by an edge to a leaf, and create $T'$. Observe that $T'$ is accepted by the checker $L'$ which is the same as $L$, except that $b_{leaves}$ is increased by 1, and $b_{Equal-degree}=0$. It is thus sufficient to prove Theorem~\ref{thm:classification-myopic} for $L'$.

From now on, we consider only checkers in $\midd$ with $a_{Equal-degree}=b_{Equal-degree}=0$.
\begin{claim}
    For any local checker in $\midd{}$, $b_{Degree-1}\geq 1$ and $b_{Degree+1}\geq 1$. 
\end{claim}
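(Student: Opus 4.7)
The plan is to prove both inequalities by contradiction, using the reductions already established in this subsection: by Claim~\ref{clm:aD+1=0} we have $a_{Degree+1}=0$, and by the discussion following Claim~\ref{cl:eqdeg} we may assume $a_{Equal-degree}=b_{Equal-degree}=0$. In each case I will show that the forbidden value of $b$ forces every accepted tree to have diameter at most~$2$, contradicting $L\in\midd{}$.

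First, suppose for contradiction that $b_{Degree-1}=0$, and take any non-leaf vertex $v$ of degree $d$ in a tree $T$ accepted by $L$. The Equal-degree and Degree-1 counts around $v$ are forced to be $0$, so the neighbors of $v$ can only be leaves or of type Degree+1. The key step is to rule out the Degree+1 option: if $v$ had a neighbor $u$ of degree $d+1$, then from $u$'s perspective $v$ has degree $\deg(u)-1$ and hence contributes to $u$'s count of Degree-1 neighbors, contradicting $b_{Degree-1}=0$ applied at $u$. Therefore every non-leaf vertex has only leaf neighbors, so no two non-leaves of $T$ are adjacent. In a tree this forces the longest path to have length at most $2$, so $T$ is a star (or smaller), and $L$ has maximum diameter $O(1)$, contradicting $L\in\midd{}$.

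The argument for $b_{Degree+1}\geq 1$ is symmetric. Assuming $b_{Degree+1}=0$, a non-leaf $v$ of degree $d$ can only have leaves or Degree-1 neighbors; but any Degree-1 neighbor $u$ (of degree $d-1$) would see $v$ as a Degree+1 neighbor, contradicting $b_{Degree+1}=0$ at $u$. Again no two non-leaves of $T$ are adjacent, so $T$ is a star and the same contradiction follows.

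The only mildly delicate point is the technical exemption in the definition of the property $C$ for vertices of small degree; however, since both contradictions are obtained purely from the upper bounds $b_{Degree-1}=0$ or $b_{Degree+1}=0$ rather than from the lower bounds $a_i$, this exemption plays no role and the plan goes through directly.
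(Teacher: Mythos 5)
Your proof is correct and rests on the same core observation as the paper's: once equal-degree neighbors are excluded, two adjacent non-leaves must differ in degree by exactly one, so one sees the other as Degree+1 while being seen as Degree-1, forcing both $b$'s to be positive. The paper phrases this directly (membership in $\midd{}$ guarantees a tree with two adjacent non-leaves, whose mutual classification forces both bounds), whereas you run the contrapositive twice; the content is the same.
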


\begin{proof}
    Since the checker is in $\midd{}$, there must be at least two non-leaf nodes in at least one accepted tree. 
    Consider two such non-leaf nodes $u$ and $v$, and w.l.o.g. take them to be adjacent. 
    Since in our restricted setting, we only allow two adjacent non-leaf nodes to have a degree difference of exactly 1, it must be that, up to symmetry, $v$ is categorizing $u$ as Degree+1, and $u$ is categorizing $v$ as Degree-1. Hence, we must allow at least one neighbor of each type.     
\end{proof}


Let a \emph{zigzag} in a tree be six nodes $u_1,u_2,u_3$ and $v_1,v_2,v_3$, such that there exists a path in which they appear in this order, and $\deg(u_1)=\deg(u_3)=\deg(u_2)-1$ and $\deg(v_1)=\deg(v_3)=\deg(v_2)+1$. See Figure~\ref{fig:zigzag}. 

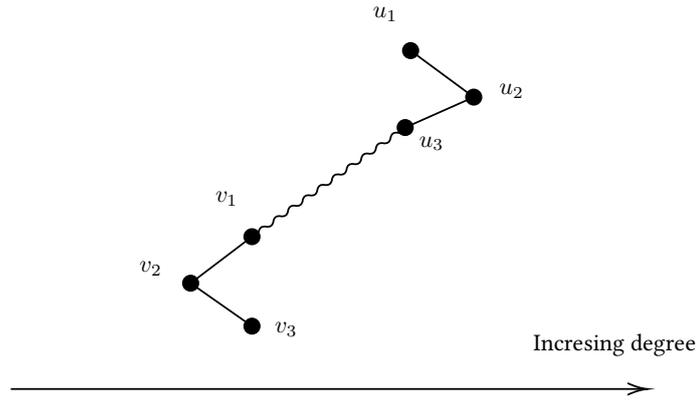
\begin{figure}[!h]
    \centering
    \scalebox{0.9}{
    \tikzset{every picture/.style={line width=0.75pt}} 

\begin{tikzpicture}[x=0.75pt,y=0.75pt,yscale=-1,xscale=1]

\draw  [fill={rgb, 255:red, 0; green, 0; blue, 0 }  ,fill opacity=1 ] (339,87.25) .. controls (339,84.9) and (340.9,83) .. (343.25,83) .. controls (345.6,83) and (347.5,84.9) .. (347.5,87.25) .. controls (347.5,89.6) and (345.6,91.5) .. (343.25,91.5) .. controls (340.9,91.5) and (339,89.6) .. (339,87.25) -- cycle ;
\draw  [fill={rgb, 255:red, 0; green, 0; blue, 0 }  ,fill opacity=1 ] (374,113.25) .. controls (374,110.9) and (375.9,109) .. (378.25,109) .. controls (380.6,109) and (382.5,110.9) .. (382.5,113.25) .. controls (382.5,115.6) and (380.6,117.5) .. (378.25,117.5) .. controls (375.9,117.5) and (374,115.6) .. (374,113.25) -- cycle ;
\draw  [fill={rgb, 255:red, 0; green, 0; blue, 0 }  ,fill opacity=1 ] (336,130.25) .. controls (336,127.9) and (337.9,126) .. (340.25,126) .. controls (342.6,126) and (344.5,127.9) .. (344.5,130.25) .. controls (344.5,132.6) and (342.6,134.5) .. (340.25,134.5) .. controls (337.9,134.5) and (336,132.6) .. (336,130.25) -- cycle ;
\draw  [fill={rgb, 255:red, 0; green, 0; blue, 0 }  ,fill opacity=1 ] (251,191.25) .. controls (251,188.9) and (252.9,187) .. (255.25,187) .. controls (257.6,187) and (259.5,188.9) .. (259.5,191.25) .. controls (259.5,193.6) and (257.6,195.5) .. (255.25,195.5) .. controls (252.9,195.5) and (251,193.6) .. (251,191.25) -- cycle ;
\draw  [fill={rgb, 255:red, 0; green, 0; blue, 0 }  ,fill opacity=1 ] (217,217.25) .. controls (217,214.9) and (218.9,213) .. (221.25,213) .. controls (223.6,213) and (225.5,214.9) .. (225.5,217.25) .. controls (225.5,219.6) and (223.6,221.5) .. (221.25,221.5) .. controls (218.9,221.5) and (217,219.6) .. (217,217.25) -- cycle ;
\draw  [fill={rgb, 255:red, 0; green, 0; blue, 0 }  ,fill opacity=1 ] (251,241.25) .. controls (251,238.9) and (252.9,237) .. (255.25,237) .. controls (257.6,237) and (259.5,238.9) .. (259.5,241.25) .. controls (259.5,243.6) and (257.6,245.5) .. (255.25,245.5) .. controls (252.9,245.5) and (251,243.6) .. (251,241.25) -- cycle ;
\draw    (343.25,87.25) -- (378.25,113.25) ;
\draw    (340.25,130.25) -- (378.25,113.25) ;
\draw    (221.25,217.25) -- (255.25,191.25) ;
\draw    (221.25,217.25) -- (255.25,241.25) ;
\draw    (340.25,130.25) .. controls (339.87,132.58) and (338.52,133.55) .. (336.19,133.17) .. controls (333.86,132.78) and (332.51,133.75) .. (332.13,136.08) .. controls (331.74,138.41) and (330.39,139.38) .. (328.06,139) .. controls (325.73,138.61) and (324.38,139.58) .. (324,141.91) .. controls (323.62,144.24) and (322.27,145.21) .. (319.94,144.83) .. controls (317.61,144.44) and (316.26,145.41) .. (315.88,147.74) .. controls (315.49,150.07) and (314.14,151.04) .. (311.81,150.66) .. controls (309.48,150.27) and (308.13,151.24) .. (307.75,153.57) .. controls (307.37,155.9) and (306.02,156.87) .. (303.69,156.49) .. controls (301.36,156.1) and (300.01,157.07) .. (299.63,159.4) .. controls (299.25,161.73) and (297.9,162.7) .. (295.57,162.32) .. controls (293.24,161.93) and (291.89,162.9) .. (291.5,165.23) .. controls (291.12,167.56) and (289.77,168.53) .. (287.44,168.15) .. controls (285.11,167.76) and (283.76,168.73) .. (283.38,171.06) .. controls (283,173.39) and (281.65,174.36) .. (279.32,173.98) .. controls (276.99,173.59) and (275.64,174.56) .. (275.25,176.89) .. controls (274.87,179.22) and (273.52,180.19) .. (271.19,179.81) .. controls (268.86,179.42) and (267.51,180.39) .. (267.13,182.72) .. controls (266.75,185.05) and (265.4,186.02) .. (263.07,185.64) .. controls (260.74,185.25) and (259.39,186.22) .. (259.01,188.55) -- (255.25,191.25) -- (255.25,191.25) ;
\draw    (121.5,276.37) -- (472.5,276.37) ;
\draw [shift={(474.5,276.37)}, rotate = 180] [color={rgb, 255:red, 0; green, 0; blue, 0 }  ][line width=0.75]    (10.93,-3.29) .. controls (6.95,-1.4) and (3.31,-0.3) .. (0,0) .. controls (3.31,0.3) and (6.95,1.4) .. (10.93,3.29)   ;

\draw (321,61.4) node [anchor=north west][inner sep=0.75pt]    {$u_{1}$};
\draw (391,104.4) node [anchor=north west][inner sep=0.75pt]    {$u_{2}$};
\draw (346.5,133.65) node [anchor=north west][inner sep=0.75pt]    {$u_{3}$};
\draw (233.5,164.65) node [anchor=north west][inner sep=0.75pt]    {$v_{1}$};
\draw (191.5,203.65) node [anchor=north west][inner sep=0.75pt]    {$v_{2}$};
\draw (266.5,237.65) node [anchor=north west][inner sep=0.75pt]    {$v_{3}$};
\draw (410,244) node [anchor=north west][inner sep=0.75pt]   [align=left] {Incresing degree};

\end{tikzpicture}}
    \caption{A zigzag in a (not fully depicted) tree, where the nodes are ordered by increasing degree from left to right. The curvy line represents a path in the tree. On this picture the degree of the $v_i$ is smaller than the degrees of the $u_i$, but this is not necessary. Note that the path in the middle could itself contain zigzags.}
    \label{fig:zigzag}
\end{figure}

Working similarly as Claim~\ref{cl:eqdeg}, we prove that zigzags cannot appear.

\begin{claim}\label{clm:zigzag}
    Let $L$ be a checker in $\midd{}$. The trees accepted by $L$ have no zigzag. 
\end{claim}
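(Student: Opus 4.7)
The plan is to argue by contradiction, following the same grafting-and-pumping strategy as in Claim~\ref{cl:eqdeg}. Suppose $L \in \midd$ accepts a tree $T$ containing a zigzag $u_1, u_2, u_3, v_1, v_2, v_3$, with the prescribed degree conditions and, in particular, $\deg(u_2)=d$ and $\deg(u_1)=\deg(u_3)=d-1$. The key observation will be that the ``peak'' $u_2$ has two non-leaf neighbors $u_1, u_3$ of equal degree, so the subtrees $A_1 := T_{u_1}^{u_2 u_1}$ and $A_3 := T_{u_3}^{u_2 u_3}$ are interchangeable from $u_2$'s degree-myopic viewpoint.

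Concretely, I will graft $T$ into $T$ at $(u_2 u_1, u_2 u_3)$, replacing $A_1$ by an identical copy $A_3^*$ of $A_3$. The matching-degree conditions ($\deg(u_1)=\deg(u_3)=d-1$, and $\deg(u_2)$ matching itself) hold, so by the degree-myopic form of Lemma~\ref{lem:grafting} announced in Subsection~\ref{subsec:structural-properties} the resulting tree $T'$ is still accepted by $L$. In $T'$, the vertex $u_2$ now has both the original $A_3$ (rooted at $u_3$) and an identical copy $A_3^*$ (rooted at a copy $u_3^*$) hanging from it, with the remaining attachment $S_2$ at $u_2$ unchanged.

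Because the zigzag forces the path to continue past $u_3$ towards $v_1$, the vertex $u_3$ is non-leaf and admits some neighbor $w \ne u_2$; let $w^*$ denote its copy in $A_3^*$. I then consider the two edges $e_1 := u_3 w$ and $e_2 := u_3^* w^*$ in $T'$. They are non-adjacent (separated by the length-3 path $w - u_3 - u_2 - u_3^* - w^*$), they lie in order on a common path of $T'$ running from the far end of $A_3$ through $u_3, u_2, u_3^*$ to the far end of $A_3^*$, and by the symmetric copy structure their views at distance 2 in $T'$ are isomorphic subgraphs, under the bijection $u_3 \leftrightarrow u_3^*$, $w \leftrightarrow w^*$, $u_2 \mapsto u_2$, extended to copies on all remaining neighbors. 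Applying Lemma~\ref{lem:pumping_lemma} to $T'$ with the pair $(e_1, e_2)$ then forces $L$ to have linear maximum diameter, contradicting $L \in \midd$.

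The main obstacle will be verifying the view isomorphism carefully, since $u_2$ is shared between the two views. The resolution is that the other neighbors of $u_2$ (those in $S_2$) lie at distance $\ge 2$ from each of $u_3, w, u_3^*, w^*$, so they fall outside the distance-2 balls around $e_1$ and $e_2$ and therefore cannot break the isomorphism; the remainder of each view mirrors the other exactly by the copy construction. One also has to check that the chosen $w$ really exists, which follows from the zigzag's requirement that the path extend strictly beyond $u_3$, so that $\deg(u_3) \ge 2$ in $T$.
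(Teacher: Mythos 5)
Your setup (mirroring at the peak $u_2$ by grafting at $(u_2u_1,u_2u_3)$, which is valid since $\deg(u_1)=\deg(u_3)$) is fine, but the pumping step has a genuine orientation gap. Lemma~\ref{lem:pumping_lemma} needs two edges $uv$ and $xy$ lying \emph{in this order} on a path whose views agree under the path-order-preserving correspondence $u\mapsto x$, $v\mapsto y$ (this is what the proof of Lemma~\ref{lem:grafting} uses when it replaces $T_y$ by $T_v$). On your path $w,u_3,u_2,u_3^*,w^*$ the two edges are traversed as $(w,u_3)$ and $(u_3^*,w^*)$, so the required identity of views is $(\deg(w),\deg(u_3))=(\deg(u_3^*),\deg(w^*))$, i.e.\ $\deg(w)=\deg(u_3)=d-1$. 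Your copy isomorphism sends $w\mapsto w^*$ and $u_3\mapsto u_3^*$, which is the orientation-\emph{reversing} map along the path; it only yields the needed equality when the edge view is a palindrome. But at this stage of the argument $a_{Equal\text{-}degree}=b_{Equal\text{-}degree}=0$, so a neighbor $w\neq u_2$ of $u_3$ has degree in $\{1,d-2,d\}$ and never $d-1$. Concretely, grafting at $(u_3^*w^*,wu_3)$ would hand $u_3^*$ a neighbor of degree $d-1$ in place of one of degree $\deg(w)$, creating an equal-degree neighbor, so $u_3^*$ rejects and the pumping fails. The same obstruction hits any mirror-image pair of edges in $A_3$ and $A_3^*$: their views always match only up to reversal. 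The difficulty you flagged (the shared vertex $u_2$) is not the real one.

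The fix is to use the \emph{valley} of the zigzag, which is exactly what it is there for: it supplies an orientation-preserving repetition. Either keep your peak-mirrored tree $T'$ and observe that it contains two copies of the valley; on the path $v_3,v_2,v_1,\dots,u_3,u_2,u_3^*,\dots,v_1^*,v_2^*$ the edges $(v_3,v_2)$ and $(v_1^*,v_2^*)$ both have ordered view $(\deg(v_2)+1,\deg(v_2))$ and appear in this order, so Lemma~\ref{lem:pumping_lemma} applies. Or, as the paper does, mirror at the valley instead, grafting $T$ into itself at $(v_2v_3,v_2v_1)$ (valid since $\deg(v_1)=\deg(v_3)$); the resulting path $u_1,u_2,\dots,v_2,\dots,u_3',u_2'$ contains the edges $(u_1,u_2)$ and $(u_3',u_2')$, which have the same ordered view $(d-1,d)$ precisely because the peak gives $\deg(u_1)=\deg(u_3)$. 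In both arguments one end of the zigzag provides the place to mirror and the other provides the palindromic pair; using only the peak, as you do, leaves you without the latter.
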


\begin{proof}
    By contradiction, suppose we have a checker in $\midd{}$ accepting a tree $T$ with a zigzag $u_1,u_2,u_3, v_1, v_2,v_3$ as in the definition. 
    By Lemma~\ref{lem:grafting}, the tree $T'$ obtained by grafting $T_{v_1}$ in $T$ at $(v_2v_3,v_2v_1)$ is still accepted by $L$.
    
    The new tree $T'$ contains the original vertices $u_1,u_2,u_3$ from $T$ and also a copy of those from the grafted copy of $T_{v_1}$, that we call $u_1',u_2',u_3'$. In particular, $T'$ contains a path $u_1,u_2,\ldots,u'_3,u'_2$, where $u_1u_2$ has the same view as $u'_3u'_2$. We can thus apply Lemma~\ref{lem:pumping_lemma} to construct trees of linear diameter accepted by $L$ and reach a contradiction.
\end{proof}

Forbidding zigzags basically proves that paths in accepted trees can be split in two parts, each monotonous in degree (up to vertices of degree at most $a_{Degree-1}$). This allows us to restrict the values of $a_{leaves}$ and $b_{Degree+1}$ without changing much the maximum diameter.

\begin{claim}\label{clm:zero-leaf-only}
    Let $L$ be a myopic local checker with $a_{leaves} >0$ and maximum diameter $D(n)$. Let $L'$ be the same checker except that  $a_{leaves} =0$. The maximum diameter function $D'(n)$ for $L'$ is at most $D(a_{leaves}\cdot n)$.
\end{claim}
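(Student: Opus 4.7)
The plan is to show that every $n$-vertex tree accepted by $L'$ can be converted into a tree accepted by $L$ with at most $(a_{leaves}+1)n = O(a_{leaves}\cdot n)$ vertices and diameter at least as large. Since $L$ and $L'$ differ only in the lower bound on the number of leaves, the conversion only needs to ensure that every non-leaf of the new tree has at least $a_{leaves}$ leaf neighbors while preserving all other constraints of $L$.

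The construction is direct: given a tree $T$ accepted by $L'$, build $T^*$ by attaching $a_{leaves}$ fresh pendant leaves to every non-leaf vertex of $T$. Since $T$ is a subtree of $T^*$, we have $\diam(T^*)\geq \diam(T)$; and since $T$ has at most $n$ non-leaves, $|T^*|\leq n+a_{leaves}\cdot n = (a_{leaves}+1)n$. Feeding this into the definition of $D$ immediately yields $D'(n)\leq D(a_{leaves}\cdot n)$.

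The key observation for acceptance is that each non-leaf $v$ of $T$ has its degree in $T^*$ shifted uniformly by $a_{leaves}$, and the same is true of each of its original non-leaf neighbors. Therefore, the classification of the non-leaf neighbors of $v$ into the types \emph{Degree-1}, \emph{Equal-degree}, and \emph{Degree+1} is identical in $T^*$ to what it was in $T$, and the original leaves of $v$ remain leaves. Hence every constraint of $L$ involving these three types is inherited from the acceptance of $T$ by $L'$, while the number of leaf neighbors of $v$ jumps from $k_v$ to $k_v+a_{leaves}\geq a_{leaves}$, meeting the lower bound required by $L$.

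The main obstacle is to check two boundary issues that the argument above glosses over. First, the upper bound $b_{leaves}$ (common to $L$ and $L'$) may be exceeded, since we need $k_v+a_{leaves}\leq b_{leaves}$ for every non-leaf $v$. Second, the small-degree clause in the definition of degree-myopic checkers (according to which nodes of degree at most $a_{Degree-1}$ are exempt from the lower bound on \emph{Degree-1} neighbors) may become binding on some $v$ whose degree, initially at most $a_{Degree-1}$, exceeds that threshold after the uniform shift. Both issues will be handled by a preprocessing step applied to $T$ before adding leaves: the structural results of Subsection~\ref{subsec:structural-properties}, in particular the absence of zigzags and of equal-degree edges on a path, force the leaves of $T$ to cluster away from the path witnessing $\diam(T)$, so one can prune or relocate them so that $k_v\leq b_{leaves}-a_{leaves}$ holds at every non-leaf $v$ and no problematic small-degree node remains, all without decreasing the diameter.
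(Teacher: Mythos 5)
Your first two paragraphs together with the ``key observation'' about the uniform degree shift are precisely the paper's proof of this claim: attach $a_{leaves}$ pendant leaves to every non-leaf of a tree accepted by $L'$, note that all non-leaf degrees shift by the same constant so the classification of neighbors into \emph{Leaf}, \emph{Degree-1}, \emph{Equal-degree}, \emph{Degree+1} is unchanged while the leaf lower bound becomes satisfied, the diameter does not decrease, and the vertex count grows by a factor $O(a_{leaves})$. Up to that point you and the paper coincide (the paper does not even record the two boundary issues you then raise).

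The gap is your final paragraph. The issues you flag — overshooting a finite $b_{leaves}$, and formerly exempt nodes of degree at most $a_{Degree-1}$ losing their exemption after the shift — are genuine, but the ``preprocessing step'' you invoke is asserted rather than proved, and as described it cannot work in general. Concretely, if $b_{leaves}=a_{leaves}$ (a legal parameter choice when $a_{leaves}>0$), you would need every non-leaf of the preprocessed tree to have zero leaf neighbors, which is impossible for any tree on at least three vertices; and the obvious repair of adding fewer than $a_{leaves}$ fresh leaves to the offending vertices destroys the uniform degree shift on which the type-preservation argument rests. Nor do the structural results of Subsection~\ref{subsec:structural-properties} let you ``relocate'' leaves away from a diameter path: Claim~\ref{clm:zigzag} constrains degree sequences along paths, not where leaves attach, and leaves hanging off every backbone vertex are exactly what the $\Theta(\sqrt{n})$ caterpillars in the classification are made of. So either the claim should be proved only for the parameter ranges that actually survive the earlier reductions (which is implicitly what the paper does by ignoring these cases), or the boundary cases require a different argument; your write-up supplies neither, so the last step remains a promissory note rather than a proof.
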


\begin{proof}
    Let $T'$ be an $n$-vertex tree of diameter $D'(n)$ accepted by $L'$. By adding $a_{leaves}$ pending leaves to each non-leaf vertex, we get a tree $T$ accepted by $L$, with at most $a_{leaves}\cdot n$ nodes and whose diameter is still $D'$. In particular, we get $D'(n)\leqslant D(a_{leaves}\cdot n)$.
\end{proof}

Since every tree accepted by $L$ is also accepted by $L'$, we also get $D\leqslant D'$. Therefore, given the diameter functions $D$ we target, we have $D'(n)=\Theta(D(n))$, hence we can assume that $a_{leaves}=0$ in the context of the theorem.  Using slightly more involved arguments, we get that we may also assume that $b_{Degree+1}=1$.

\begin{claim}\label{clm:one-forward-only}
    Let $L$ be a myopic local checker with $b_{Degree+1} >1$ and maximum diameter $D$. Let $L'$ be the same checker except that  $b_{Degree+1} =1$. The maximum diameter function $D'$ for $L'$ is at least $D/2$.
\end{claim}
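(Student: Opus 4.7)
The plan is to produce, for every $n$, a tree $T'$ of at most $n$ vertices, diameter at least $D(n)/2$, that is accepted by $L'$. Since any tree accepted by $L'$ is accepted by $L$, the converse inequality $D' \leq D$ is automatic, so we only need this construction.

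First I would extract a long monotone half-path from $T$. Let $T$ be a tree accepted by $L$ with $n$ vertices and diameter $D$, and let $P=u_1,\ldots,u_D$ be a longest path in $T$; its two endpoints are leaves. By Claim~\ref{clm:zigzag} no zigzag can appear along $P$, and by the reduction $a_{Equal-degree}=b_{Equal-degree}=0$ any two consecutive non-leaf vertices of $P$ have degrees differing by exactly $1$. Consequently the degree sequence along the non-leaf part of $P$ is strictly unimodal with a unique peak $u_k$. Up to reversing the orientation of $P$, we may assume that the increasing half $u_1,u_2,\ldots,u_k$ has length $k-1\geq D/2$.

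Second, I would obtain $T'$ by a surgery that linearizes the D+1-structure along this half. Starting from the connected component of $u_k$ in $T\setminus\{u_ku_{k+1}\}$, I would process each vertex $u_i$ of the preserved half-path: while $u_i$ has more than one Degree+1 neighbor, I pick an excess neighbor $w$ of type Degree+1 that is not on the kept path, prune the whole subtree rooted at $w$ (away from $u_i$), and replace $w$ by a single pendant leaf attached to $u_i$. To compensate for the removal of $u_{k+1}$ at the peak, I would attach one leaf to $u_k$ so that its degree matches $\deg_T(u_k)$. By construction $|V(T')|\leq n$, the kept path has length $k-1\geq D/2$, so $\diam(T')\geq D/2$, and every vertex of $T'$ has at most one Degree+1 neighbor, which is exactly the only constraint of $L'$ that is stronger than that of $L$.

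Third, I would check that $T'$ is accepted by $L'$. Vertices not on the kept path and not neighbors of a pruned root are unchanged, hence accepted. For a modified vertex $u_i$ with $2\leq i\leq k$, the surgery converts each pruned Degree+1 neighbor into a Leaf and, at $u_k$, additionally converts the lost neighbor $u_{k+1}$ into a Leaf; the degree of $u_i$ is preserved. The Degree+1 count becomes at most $1$, which together with $a_{Degree+1}=0$ (Claim~\ref{clm:aD+1=0}) satisfies the Degree+1 constraint of $L'$. The leaf count grows by at most $b_{Degree+1}-1$ at each $u_i$ (and by one more at $u_k$), which is absorbed by $b_{leaves}$ whenever it is $\infty$; the Degree-1 count is unchanged except at $u_k$, where it drops by one. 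The main obstacle is precisely this last bookkeeping at $u_k$ and $u_2$: if $b_{leaves}$ is finite or if the Degree-1 lower bound $a_{Degree-1}$ is not met at the modified $u_k$, the construction must be refined by replacing the compensating leaf at $u_k$ by a \emph{decreasing chain} of length $\deg_T(u_k)-a_{Degree-1}$ that supplies the missing Degree-1 neighbor while preserving the degree; this adds only a constant (depending on $L$, not on $n$) to the total size and does not affect the diameter bound $D/2$.
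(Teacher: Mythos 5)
Your overall strategy (take a longest path, use the no-zigzag property to get a unimodal degree sequence, then trim the excess Degree+1 branches) is the same as the paper's, but two steps do not go through as written. First, ``strictly unimodal with a unique peak'' is not what Claim~\ref{clm:zigzag} gives you: the degree sequence along $P$ may instead have a unique \emph{valley} (degrees maximal near both leaf-endpoints, minimal in the middle), since a valley with no preceding peak is not a zigzag. In that case there is no peak $u_k$ at which to cut, and your construction does not apply; the paper treats this as a separate case (it keeps the \emph{whole} path when the sequence is monotone or peaked, and only halves it at the bottom of a valley). Note also that cutting at the peak is exactly what creates your boundary problem at $u_k$; the paper avoids that problem by not cutting there in the peaked case.

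Second, and more seriously, your surgery replaces each pruned Degree+1 subtree by a pendant leaf so as to preserve $\deg(u_i)$. This inflates the leaf count of $u_i$ by up to $b_{Degree+1}-1$ at \emph{every} vertex of the kept path, which violates the checker whenever $b_{leaves}<\infty$ --- precisely the case that yields the $\Theta(\log n/\log\log n)$ regime, so it cannot be set aside. Your proposed repair (a ``decreasing chain'' of length $\deg_T(u_k)-a_{Degree-1}$, claimed to be of constant size) does not work: the peak degree is $\Theta(D(n))$, not $O(1)$, and a chain whose degrees decrease from $\deg_T(u_k)-1$ down to $1$ is a caterpillar on $\Theta(\deg_T(u_k)^2)$ vertices whose internal vertex of degree $j$ needs $j-2$ further neighbors, so it either reproduces the same $b_{leaves}$ violation one level down or blows up the vertex count by a non-constant amount (which then breaks the bookkeeping $\diam(T')\geq D(|V(T')|)/2$). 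The paper sidesteps all of this by simply \emph{deleting} the excess larger-degree subtrees rather than replacing them with leaves --- legitimate because $a_{Degree+1}=0$ (Claim~\ref{clm:aD+1=0}) and $a_{leaves}=0$ after Claim~\ref{clm:zero-leaf-only} --- an operation that never increases any neighbor count and only removes vertices. To fix your write-up you would need to adopt that deletion step (and argue about the induced degree changes) and add the valley case explicitly.
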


\begin{proof}
    Consider an $n$-vertex tree $T$ accepted by $L$. And take a path $P$ that is maximum, that is, its length is $D(n)$. Because of Claim~\ref{clm:zigzag}, this path either is monotone in terms of the degree of the nodes visited, or the degree sequence changes slope at most once. 
    If it is monotone or is minimum on the endpoints, then for every node having more than one neighbor of larger degree, we prune all such neighbors except the ones used by the path $P$. This is possible because $a_{Degree+1}=0$ by Claim~\ref{clm:aD+1=0}. This tree is accepted by $L'$ and it has the same diameter as $T'$, and at most the same number of nodes. 
    If the degree sequence is maximum on both endpoints, then we do the same operation, leaving exactly one node with two neighbors of larger degree. Then, we take this node and prune its shortest "forward branch". The diameter has at most halved, while the number of nodes has not increased. 

    In both cases, we obtained a tree accepted by $L'$ with $n'\geqslant D(n)$ vertices and diameter at least $D(n)/2\geqslant D(n')/2$. In particular, we get an infinite sequence of trees accepted by $L'$ with diameter at least $D/2$, which proves the claim.
\end{proof}

\subsection{Establishing the classification}
\label{subsec:classification-proof}

The claims above show that we may only prove Theorem~\ref{thm:classification-myopic} for degree-myopic local checkers in $\midd$ satisfying:
\begin{itemize}
    \item $a_{leaves}=0$ and $b_{leaves}\geq 1$
    \item $b_{Degree-1}\geq 1$
    \item $a_{Equal-degree}=b_{Equal-degree}=0$
    \item $a_{Degree+1}=0$ and $b_{Degree+1}\geq 1$
\end{itemize}

We now proceed to the proof of Theorem~\ref{thm:classification-myopic}, by distinguishing some cases based on the values of $b_{leaves}$ and $a_{Degree-1}$. In each case, we provide a lower bound on the maximum diameter by constructing trees with a specific shape that are accepted. Then we provide a complementary upper bound by proving that the same kind of structure necessarily appears in the accepted trees.

\begin{claim}
    If $L\in\midd{}$ satisfies $b_{leaves}=\infty$ and $a_{Degree-1}\leq 1$, then its maximum diameter is $\Theta(\sqrt{n})$.
\end{claim}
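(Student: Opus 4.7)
The plan is to establish matching $\Omega(\sqrt{n})$ and $O(\sqrt{n})$ bounds on the maximum diameter of $L$.

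For the \textbf{lower bound}, I construct an explicit family of accepted trees. Take a ``backbone'' path $v_2, v_3, \ldots, v_k$ and attach pending leaves to each $v_i$ so that $\deg(v_i) = i$: namely, $v_2$ receives one pending leaf, each internal $v_i$ with $3 \leq i \leq k-1$ receives $i-2$ pending leaves, and $v_k$ receives $k-1$ pending leaves. Each non-leaf $v_i$ then has exactly one Degree-$1$ neighbor ($v_{i-1}$), one Degree+$1$ neighbor ($v_{i+1}$, absent at $v_k$), no Equal-degree neighbors, and a variable number of leaves. Using the parameter restrictions collected at the start of Subsection~\ref{subsec:classification-proof} ($b_{leaves}=\infty$, $b_{Degree-1}\geq 1$, $b_{Degree+1}\geq 1$, $a_{Equal-degree}=b_{Equal-degree}=0$, $a_{leaves}=a_{Degree+1}=0$, $a_{Degree-1}\leq 1$), one routinely checks that every vertex is accepted. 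The resulting tree has $\Theta(k^2)$ vertices and diameter $k$, witnessing diameter $\Omega(\sqrt{n})$.

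For the \textbf{upper bound}, let $T$ be an accepted tree with $n$ vertices and diameter $D$, and let $P = v_0, v_1, \ldots, v_D$ be a diametral path, so that $v_0, v_D$ are leaves and $v_1, \ldots, v_{D-1}$ are non-leaves. Since $a_{Equal-degree}=b_{Equal-degree}=0$, and since the myopic setting only allows adjacent non-leaf degrees to differ by $0$ or $\pm 1$, consecutive interior degrees on $P$ must differ by exactly $1$. By Claim~\ref{clm:zigzag}, the degree sequence $\deg(v_1), \ldots, \deg(v_{D-1})$ admits at most one extremum: it is monotone, has a single peak, or has a single valley. In each of these three shapes, since consecutive values differ by $\pm 1$, the sequence spans a range of $\Omega(D)$ while all values are at least $2$, and a short arithmetic computation (summing an arithmetic progression from $\geq 2$ up to the maximum degree) yields $\sum_{i=1}^{D-1}\deg(v_i) = \Omega(D^2)$.

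Finally, in a tree the subtrees pending on distinct vertices of $P$ are pairwise disjoint, and each interior $v_i$ carries $\deg(v_i) - 2$ non-path neighbors. Hence
\[
 n \;\geq\; (D+1) + \sum_{i=1}^{D-1}(\deg(v_i) - 2) \;=\; \Omega(D^2),
\]
which forces $D = O(\sqrt{n})$ and concludes the proof. The main (modest) obstacle is verifying the quadratic lower bound on the degree sum uniformly across the three shapes; the monotone and peak cases follow directly from arithmetic-progression sums of order $M^2/2$ where $M = \Omega(D)$ is the extremal degree, while the valley case uses convexity to see that the symmetric arrangement ($a = b$ at the endpoints) already contributes $\Omega(D^2)$.
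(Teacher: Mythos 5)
Your proof is correct and follows essentially the same route as the paper's: the lower bound uses the same caterpillar with backbone degree sequence $1,2,\ldots,k$, and the upper bound takes a diametral path, invokes the no-zigzag property (Claim~\ref{clm:zigzag}) to control the shape of its degree sequence, and counts the $\Omega(D^2)$ pendant vertices hanging off the path. The only cosmetic difference is that you treat the monotone/peak/valley shapes explicitly, whereas the paper reduces to the monotone case by halving the path.
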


\begin{proof}
Let $L\in\midd{}$ such that $b_{leaves}=\infty$ and $a_{Degree-1}\leq 1$. For every integer $i$, consider the caterpillar $T_i$ whose backbone has $i$ nodes, and its degree sequence is $1,2,\cdots, i$. Observe that $T_i$ is accepted by $L$ and moreover, $T_i$ has diameter $\Theta(i)$ and $\Theta(i^2)$ vertices. Therefore, $L$ has maximum diameter at least $\Omega(\sqrt{n})$.

To conclude, we show that every $n$-vertex tree $T$ accepted by $L$ has diameter at most $O(\sqrt{n})$. Take a maximum path $P$ in $T$, of length $D$. Since there is no zigzag in the tree, the degree sequence of any maximum path changes slope at most once, so up to halving $P$ (similarly to Claim~\ref{clm:one-forward-only}), one can assume that $P$ has a monotone degree sequence, from a vertex of degree $a_{Degree-1}$ to a vertex of degree $a_{Degree-1}+D/2$. In particular, there are at least $\sum_{i=0}^{D/2}(a_{Degree-1}+i)=\Theta(D^2)$ vertices at distance at most one from~$P$, so $D=O(\sqrt{n})$.
\end{proof}

\begin{claim}
    If $L\in\midd{}$ satisfies $b_{leaves}=\infty$ and $a_{Degree-1}\geq 2$, then its maximum diameter is $\Theta(\log{n})$.
\end{claim}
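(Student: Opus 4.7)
The plan is to prove the matching upper and lower bounds separately, in both cases relying on the structure forced by $a_{Degree-1}\geq 2$ together with $b_{Degree+1}=1$ (Claim~\ref{clm:one-forward-only}), $b_{Equal-degree}=0$, and $a_{Degree+1}=0$ (Claim~\ref{clm:aD+1=0}). Set $K=a_{Degree-1}\geq 2$.

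For the lower bound I would construct, for each integer $\ell\geq 1$, a rooted tree $T_\ell$ of diameter $\Theta(\ell)$ and $\Theta(K^\ell)$ vertices. Pick the root $r$ to have degree $D=K+\ell$, and make the depth-$i$ non-leaf vertices have degree $D-i$ for $i=0,\dots,\ell$. At depth $i<\ell$ each non-leaf will have $K$ non-leaf children (of degree $D-i-1$), plus $D-i-1-K=\ell-i-1$ pendant leaves; the depth-$\ell$ nodes have degree $K$, so they are exempt from the Degree-1 lower bound and can just carry $K-1$ pendant leaves. A routine check of the four type constraints at each layer (using $b_{leaves}=\infty$ to pay for the pendant leaves, and the fact that parent/child degrees always differ by exactly $1$) shows $T_\ell$ is accepted. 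Counting gives $n=\Theta(K^{\ell+1})$ and $\diam(T_\ell)=2\ell+2$, hence diameter $\Theta(\log n)$.

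For the matching upper bound I would analyze any accepted tree $T$ with $n$ vertices. Because $b_{Equal-degree}=0$ and $b_{Degree+1}=1$, each non-leaf vertex has at most one non-leaf neighbor of strictly greater degree, so orienting every non-leaf edge towards its higher-degree endpoint yields a functional orientation of the (connected) non-leaf subtree. In a tree this forces a unique sink $r$, the global degree-maximum, from which one can root $T$ so that degrees strictly decrease by $1$ along every root-to-non-leaf path. Write $\delta_r$ for $\deg(r)$. Every non-leaf at depth $i$ then has degree $\delta_r-i$, and whenever $\delta_r-i>K$ the Degree-1 requirement (which cannot be met by the parent or by leaves) forces $\geq K$ non-leaf children. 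Iterating, the number of non-leaf vertices at depth $\delta_r-K$ is at least $K^{\delta_r-K}$, so $n\geq K^{\delta_r-K}$ and $\delta_r\leq K+\log_K n$. Since every vertex of $T$ is at distance at most $\delta_r-1$ from $r$ (leaves included), $\diam(T)\leq 2\delta_r-2=O(\log n)$.

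The only subtle point in the upper bound is verifying that the orientation argument really produces a unique sink (this follows from out-degree $\leq 1$ in a tree together with the $n-1$ edges being distributed among $n$ vertices) and that the exempt layers $i\geq \delta_r-K$ do not break the counting: they simply do not contribute to the forced growth, but by that depth we have already accumulated $K^{\delta_r-K}$ non-leaves, which is enough. This should be the most delicate part to write carefully; the rest is degree bookkeeping.
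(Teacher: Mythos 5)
Your proposal is correct and follows essentially the same route as the paper: the lower bound is the paper's construction (a complete $a_{Degree-1}$-ary tree decorated with pendant leaves so that degrees decrease by exactly one per level), and the upper bound is the same ``orient edges by degree and count descendants'' argument, merely packaged globally (unique sink via the out-degree count under $b_{Degree+1}=1$, then a layered rooting) rather than locally (the paper finds a vertex of degree at least $D/2$ on a maximal zigzag-free path and runs an induction on subtree sizes); both versions are valid. One small slip to fix in the write-up: your pendant-leaf formula $D-i-1-K=\ell-i-1$ accounts for a parent edge, so applied literally at the root it gives degree $D-1$, making the root and its children an equal-degree pair (forbidden since $b_{Equal\text{-}degree}=0$); the root should receive $\ell$ pendant leaves, not $\ell-1$.
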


\begin{proof}
     Let $L\in\midd{}$ such that $b_{leaves}=\infty$ and $a_{Degree-1}\geq 2$. For every integer $i$, denote by $T_i$ the complete $a_{Degree-1}$-ary tree of height $i$ (where each internal node has $a_{Degree-1}$ children). Construct $T'_i$ from $T_i$ by attaching $j$ leaves to every vertex at distance $j+1$ from the leaves. Observe that $T'_i$ is accepted by $L$, because (1) all the leaves and their parents (which have degree $a_{Degree-1}$) accept, and (2) all the other nodes have $a_{Degree-1}$ children (whose degree is precisely one less). 
     
     The tree $T'_i$ has diameter $\Theta(i)$ and a number of nodes $\Theta(\sum_{j=0}^i (j+1)\cdot a_{Degree-1}^{i-j})=\Theta(a_{Degree-1}^i)$ vertices, hence $L$ has maximum diameter $\Omega(\log n)$.

     Let $T$ be an $n$-vertex tree of diameter $D$ accepted by $L$. Orient its edges from $u$ to $v$ if $\deg(v)>\deg(u)$. One can easily prove by induction on $k$ that if $T$ contains a vertex $u$ of degree $k+a_{Degree-1}$, then the subtree rooted at $u$ contains at least $a_{Degree-1}^k$ nodes (it actually contains a copy of $T_k$). Since any maximal path of $T$ contains a vertex of degree at least $D/2$, $T$ has at least $a_{Degree-1}^{D/2-1}$ nodes so $D=O(\log n)$.
\end{proof}

\begin{claim}
    If $L\in\midd{}$ satisfies $b_{leaves}< \infty$,  then its maximum diameter is $\Theta(\log n /\log \log n)$.  
\end{claim}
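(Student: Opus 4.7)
The plan proceeds in three stages, centered on a case split on whether $b_{Degree-1}$ is finite.

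First, I will argue that we may assume $b_{Degree-1}=\infty$. Suppose instead $b_{Degree-1}<\infty$. Let $u$ be a vertex of maximum degree $\Delta$ in an accepted tree. Every neighbor of $u$ has degree in $\{1,\Delta-1,\Delta,\Delta+1\}$; the Equal-degree and Degree+1 possibilities are forbidden (by $b_{Equal-degree}=0$ and by the maximality of $\Delta$), so all neighbors of $u$ are leaves or Degree-1. Hence $\Delta\leqslant b_{leaves}+b_{Degree-1}=O(1)$. Rooting at $u$ and using $b_{Equal-degree}=0$ together with $b_{Degree+1}=1$, every root-to-leaf path is a strictly decreasing degree walk followed by a leaf, so the depth is at most $\Delta-1$. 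Bounded degree and depth force bounded tree size, so $L$ accepts only finitely many trees and has constant maximum diameter, contradicting $L\in\midd{}$.

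For the lower bound, for each sufficiently large integer $\Delta$, I construct a tree $T_\Delta$ accepted by $L$ with diameter $\Theta(\log n/\log\log n)$. Take $T_\Delta$ to be the rooted tree whose root has degree $\Delta$ and in which every non-leaf node at depth $i$ has degree $\Delta-i$, all its non-parent neighbors being non-leaf children of degree $\Delta-i-1$, with leaves appearing only at the deepest level (and a minor adjustment near the bottom if needed to satisfy the $a_{Degree-1}$ constraint via the small-degree exemption). All checks pass: the only neighbor types used are Degree+1 (exactly one parent, respecting $b_{Degree+1}=1$) and Degree-1 (with count at most $b_{Degree-1}=\infty$). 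A direct product computation yields $|T_\Delta|=\Theta((\Delta-1)!)$ while $\diam(T_\Delta)=2(\Delta-1)$, and Stirling's formula then gives $\log n=\Theta(\Delta\log\Delta)$, so $\diam(T_\Delta)=\Theta(\log n/\log\log n)$.

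For the upper bound, let $T$ be accepted with $n$ vertices and diameter $D$, and let $\Delta$ be its maximum degree. First, I claim $\Delta\geqslant D/2+1$. Fix a diametral path $v_0,\ldots,v_D$: its endpoints are leaves, and for each interior $v_i$, $b_{Equal-degree}=0$ forces $\deg(v_{i+1})=\deg(v_i)\pm 1$ as long as $v_{i+1}$ is non-leaf. By Claim~\ref{clm:zigzag}, this $\pm 1$-walk of length $D-2$ between values $d_1,d_{D-1}\geqslant 2$ is unimodal, hence reaches a peak of height at least $D/2+1$. Second, root $T$ at a max-degree vertex; since $b_{Equal-degree}=0$ and $b_{Degree+1}=1$, each non-root has its unique Degree+1 neighbor as parent, and its remaining neighbors are either leaves or Degree-1 children of degree exactly one less. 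Finiteness of $b_{leaves}$ then forces each node of degree $k$ to have at least $k-1-b_{leaves}$ non-leaf children, so the minimum size $f(k)$ of such a subtree satisfies $f(k)\geqslant(k-1-b_{leaves})\cdot f(k-1)$, yielding $f(\Delta)=\Omega(\Delta!/\mathrm{poly}(\Delta))$. Hence $\log n\geqslant\Omega(\Delta\log\Delta)\geqslant\Omega(D\log D)$, and solving gives $D=O(\log n/\log\log n)$.

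The main obstacle is Stage 1: without first observing that a finite $b_{Degree-1}$ forces the checker out of $\midd{}$, one would vainly attempt to prove a $D=O(\log n/\log\log n)$ upper bound in a bounded-degree regime, where in fact $D=\Omega(\log n)$ is forced. The remainder of the argument is a careful factorial-style accounting combined with the unimodality of degree sequences along diametral paths.
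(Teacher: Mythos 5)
Your proof is correct and follows essentially the same route as the paper's: first forcing $b_{Degree-1}=\infty$, then a factorial-tree construction for the lower bound, and for the upper bound combining the existence of a vertex of degree $\Omega(D)$ on a diametral path with a factorial lower bound on the number of descendants of a high-degree vertex. The minor differences (contradicting $L\in\midd$ via bounded tree size rather than bounded path length, rooting at a maximum-degree vertex instead of orienting edges by degree, and spelling out the unimodality of the degree walk) do not change the substance.
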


\begin{proof}
Let $L\in\midd{}$ with $b_{leaves}<\infty$. 
Let us prove that necessarily $b_{Degree-1}=\infty$. Suppose that this is not the case, then $L$ accepts only trees of maximum degree $\Delta$ at most $b_{leaves}+b_{Degree-1}+1$. If $T$ is accepted by $L$, it contains no zigzag, hence any path in $T$ has length at most $2\Delta$. Hence, $L$ has maximum diameter $O(1)$, a contradiction with the definition of $\midd{}$. From now on, assume that $b_{Degree-1}=\infty$.

For every integer $i\geq a_{Degree-1}-1$, we define inductively a rooted tree $T_i$. Let $T_{a_{Degree-1}-1}$ be the subdivided star whose root has degree $a_{Degree-1}-1$ and each branch is subdivided once. Then, construct $T_{i+1}$ by attaching $i+1$ copies of $T_i$ to a new root. Note that $T_i$ has diameter $\Theta(i)$ and $\Theta(i!)$ nodes. 
We check that $L$ accepts all the trees $T_i$, and has maximum diameter $\Omega(\log n/\log\log n)$ (the asymptotic inverse function of factorial).

Let $T$ be an $n$-vertex tree of diameter $D$ accepted by $L$, so in particular $T$ contains a vertex of degree at least $D/2$. Orient each edge of $T$ towards its endpoint of largest degree. Again, one can easily show by induction on $k$ that each vertex $u$ with $k+b_{leaves}$ children has at least $k!$ descendants. In particular, $T$ contains at least $(D/2 -b_{leaves})!$ nodes, hence $D=O(\log n/\log\log n)$.
\end{proof}

\section{Towards graphs with cycles}
\label{sec:cycles}

Up to now, we have focused on trees, because they are central for LCLs, and make sense for our second motivation.  
A natural question at that point is what happens if we go beyond trees. We leave this for further work. 
We just prove now that if there is a cycle in the graph, but no node can see it (\emph{i.e}, the girth is large in comparison with the checkability radius) the maximum diameter is linear. 
Basically, we show that we can do pumping via a crossing argument. 

Consider a local checker that accepts some graph with a cycle. We claim that it also accepts arbitrarily large graphs of linear diameter, relying on the so-called \emph{duplication} operation. Given a colored graph $G$ and an edge $uv\in E(G)$, the duplication of $G$ along $uv$ is the colored graph $G^{uv}$ obtained by taking two copies of $G$, deleting the two copies $u_1v_1$ and $u_2v_2$ of $uv$ and adding $u_1v_2$ and $u_2v_1$. It is easy to see that if $u$ and $v$ are at distance at least $2d$ from each other in $G\setminus uv$, then $u_1$ and $u_2$ (resp. $v_1$ and $v_2$) both have the same view at distance $d$ in $G^{uv}$ as $u$ (resp. $v$) in $G$. In particular, we get the following.
\begin{lemma}
\label{lem:cycle_pump}
    Let $c,d$ be integers, $G$ be a $c$-colored graph containing two adjacent vertices $u,v$ at distance at least $2d$ in $G\setminus uv$. Then every $L\in\mathcal{L}_{c,d}$ accepting $G$ also accepts $G^{uv}$.
\end{lemma}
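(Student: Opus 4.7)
The plan is to prove the stronger statement that, for every vertex $w$ of $G$ and every copy $w_i$ of $w$ in $G^{uv}$, the view of $w_i$ at distance $d$ in $G^{uv}$ is isomorphic, as a colored rooted subgraph, to the view of $w$ at distance $d$ in $G$. Since $L$ accepts every vertex of $G$ and its decision depends only on the view at distance $d$, this immediately implies that $L$ accepts every vertex of $G^{uv}$, and hence $G^{uv}$ itself.

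By symmetry we may fix the copy $w_1$ in the first copy of $G$ and consider the forgetful map $\phi\colon V(G^{uv})\to V(G)$ sending $y_i\mapsto y$. The whole argument rests on the following elementary observation: since $d_{G\setminus uv}(u,v)\geqslant 2d$, any path in $G^{uv}$ of length at most $d$ uses at most one of the two crossing edges $u_1v_2$ and $u_2v_1$. Indeed, a subpath travelling between two distinct crossings would stay inside a single copy of $G\setminus uv$ and connect copies of $u$ and $v$, hence would already have length at least $2d$; adding the two crossing edges pushes the total above $d$. From this it is straightforward that $\phi$ is surjective on balls: given $y\in B_G(w,d)$, lift a shortest $w$-$y$ path to $G^{uv}$, using the crossing $u_1v_2$ (resp.\ $u_2v_1$) if the original path uses the edge $uv$ from $u$ to $v$ (resp.\ from $v$ to $u$); the lift reaches either $y_1$ or $y_2$ in $B_{G^{uv}}(w_1,d)$.

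The main point is injectivity. Suppose that both $y_1$ and $y_2$ lie in $B_{G^{uv}}(w_1,d)$. The path witnessing $y_1$ must use no crossing (otherwise it would use two and exceed length~$d$), so $d_{G\setminus uv}(w,y)\leqslant d$. The path witnessing $y_2$ uses exactly one crossing, say $u_1v_2$, giving $d_{G\setminus uv}(w,u)+d_{G\setminus uv}(v,y)\leqslant d-1$. The triangle inequality in $G\setminus uv$ then forces
\[
d_{G\setminus uv}(u,v)\leqslant d_{G\setminus uv}(u,w)+d_{G\setminus uv}(w,y)+d_{G\setminus uv}(y,v)\leqslant 2d-1,
\]
contradicting the hypothesis; the same kind of argument shows that $\phi$ preserves the distance to the root. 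It then only remains to check that $\phi$ carries edges of the view of $w_1$ bijectively onto edges of the view of $w$: edges that lie inside one copy correspond directly to edges of $G\setminus uv\subseteq G$, while the two crossing edges $u_1v_2$ and $u_2v_1$ both correspond to the edge $uv$ of $G$. Together with the preservation of colors and of the marked center, this yields the required view isomorphism. The delicate step is injectivity, which is the only place where the distance hypothesis $d_{G\setminus uv}(u,v)\geqslant 2d$ is actually used; the remaining verifications are routine bookkeeping.
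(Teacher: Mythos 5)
Your proof is correct and follows exactly the route the paper has in mind: the paper itself offers no formal argument for this lemma, merely asserting before its statement that views are preserved because $u$ and $v$ are far apart in $G\setminus uv$, and your write-up is a valid, complete filling-in of that assertion, correctly extended to all vertices $w$ (not just $u$ and $v$) via the key observation that a path of length at most $d$ in $G^{uv}$ can use at most one crossing edge. The injectivity step via the triangle inequality in $G\setminus uv$ is exactly where the hypothesis $d_{G\setminus uv}(u,v)\geqslant 2d$ is needed, and the same inequality also rules out both crossing edges appearing in a single view, so the edge bookkeeping you defer is indeed routine.
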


The following structural result shows that this operation creates long paths, which is the main ingredient for constructing graphs of linear diameter.

\begin{lemma}
\label{lem:cycle_diam}
    Let $uv$ be an edge of a graph $G$. Denote by $u'v'$ the edge between copies of $u$ and $v$ in $G^{uv}$. Then  $d_{G^{uv}\setminus u'v'}(u',v')=2d_{G\setminus uv}(u,v)+1$.
\end{lemma}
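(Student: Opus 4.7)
\textbf{Proof plan for Lemma~\ref{lem:cycle_diam}.} Write $G_1$ and $G_2$ for the two copies of $G$ used to build $G^{uv}$, with $u_i,v_i$ the copy of $u,v$ in $G_i$. By construction, $E(G^{uv})$ consists of the edges of $G_1$ and $G_2$ except $u_1v_1$ and $u_2v_2$, plus the two \emph{cross edges} $u_1v_2$ and $u_2v_1$. Up to relabelling the copies, I may assume $u'v'=u_1v_2$, so after removing $u'v'$ from $G^{uv}$ the only cross edge available is $u_2v_1$. Set $k:=d_{G\setminus uv}(u,v)$; the goal is to show that $d_{G^{uv}\setminus u'v'}(u_1,v_2)=2k+1$.

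The plan for the upper bound is a direct construction. I would take a shortest $u$-$v$ path $P$ in $G\setminus uv$, of length $k$, and concatenate three pieces in $G^{uv}\setminus u'v'$: the copy of $P$ in $G_1$ from $u_1$ to $v_1$, then the cross edge $v_1u_2$, then the copy of $P$ in $G_2$ from $u_2$ to $v_2$. None of these pieces use the deleted edges $u_1v_1$, $u_2v_2$, or $u'v'=u_1v_2$, so this is a valid walk in $G^{uv}\setminus u'v'$ of length $2k+1$, yielding $d_{G^{uv}\setminus u'v'}(u_1,v_2)\le 2k+1$. (If $k=\infty$, this bound is vacuous and we directly turn to the other direction.)

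For the matching lower bound, I would take any shortest $u_1$-$v_2$ path $Q$ in $G^{uv}\setminus u'v'$; in particular $Q$ is a simple path. Since $u_1\in V(G_1)$ and $v_2\in V(G_2)$, $Q$ must use at least one cross edge, and the only cross edge in $G^{uv}\setminus u'v'$ is $u_2v_1$. Simplicity forces $Q$ to use this edge exactly once, splitting $Q$ canonically into a prefix $Q_1$ from $u_1$ to $v_1$ using only edges of $G_1$, then the edge $v_1u_2$, then a suffix $Q_2$ from $u_2$ to $v_2$ using only edges of $G_2$. Since $u_1v_1$ is absent from $G_1$ in $G^{uv}$, the prefix $Q_1$ is a $u_1$-$v_1$ path in the copy of $G\setminus uv$ living on $G_1$, hence has length at least $k$; symmetrically $|Q_2|\ge k$. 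Adding the cross edge gives $|Q|\ge 2k+1$, which combined with the upper bound proves the equality (and covers $k=\infty$, where the argument shows no such $Q$ exists).

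The proof is essentially a clean parity/connectivity observation, so there is no serious obstacle; the one point to keep clean is the justification that a shortest path is simple and therefore traverses the unique remaining cross edge exactly once, which is why the decomposition into a $G_1$-part and a $G_2$-part is forced.
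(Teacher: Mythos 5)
Your proof is correct and is exactly the natural argument: the paper itself leaves this lemma unproved, and your two-sided bound (explicit $2k+1$ walk for the upper bound; uniqueness of the remaining cross edge $u_2v_1$ forcing any simple $u_1$--$v_2$ path to decompose into a $u_1$--$v_1$ path in $G_1\setminus u_1v_1$, the cross edge, and a $u_2$--$v_2$ path in $G_2\setminus u_2v_2$) is what the authors implicitly rely on. The handling of the $k=\infty$ case is a welcome extra.
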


Using these two results, we can conclude that iterating this duplication operation starting from a nice enough graph $G$ yields an infinite sequence of graphs of linear diameter, all accepted by every local checker accepting $G$.

\begin{theorem}
    Let $c,d$ be integers and $L\in \mathcal{L}_{c,d}$ accepting a $c$-colored graph $G$ containing two adjacent vertices $u,v$ at distance at least $2d$ in $G\setminus uv$. Then $L$ has maximum diameter $\Theta(n)$.
\end{theorem}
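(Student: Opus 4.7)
The plan is to iterate the duplication operation starting from $G$, producing an infinite family of graphs all accepted by $L$ whose diameter grows linearly in their size. Define $G_0 := G$ and $u_0 v_0 := uv$, and let $n_i := |V(G_i)|$ and $D_i := d_{G_i \setminus u_i v_i}(u_i, v_i)$. Inductively, having defined $G_i$ together with an edge $u_i v_i$, set $G_{i+1} := G_i^{u_i v_i}$ and let $u_{i+1} v_{i+1}$ be the edge between the copies of $u_i$ and $v_i$ in $G_{i+1}$, as provided by Lemma~\ref{lem:cycle_diam}.

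First I would verify that the construction stays within the hypotheses of the two lemmas at every step. By hypothesis $D_0 \geq 2d$, and Lemma~\ref{lem:cycle_diam} gives the recurrence $D_{i+1} = 2 D_i + 1$, hence $D_i = 2^i(D_0 + 1) - 1 \geq 2d$ for every $i \geq 0$. Therefore the distance hypothesis of Lemma~\ref{lem:cycle_pump} is satisfied at each step, so $L$ accepts $G_{i+1}$ whenever it accepts $G_i$; by induction $L$ accepts every $G_i$. The duplication doubles the vertex count, so $n_i = 2^i n_0$.

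Next, I would lower bound $\diam(G_{i+1})$ by exhibiting a concrete pair of far-apart vertices. For any vertex $x \in V(G_i)$, let $x_1, x_2$ denote its two copies in $G_{i+1}$. Every $x_1$-$x_2$ walk in $G_{i+1}$ must cross from one copy of $G_i$ to the other, and the only such transitions are via the cross edges $u_i^{(1)} v_i^{(2)}$ and $u_i^{(2)} v_i^{(1)}$; after pruning cycles, an optimal path uses exactly one cross edge and its length is at least $d_{G_i \setminus u_i v_i}(x, u_i) + 1 + d_{G_i \setminus u_i v_i}(x, v_i) \geq D_i + 1$, where the last inequality is the triangle inequality in $G_i \setminus u_i v_i$. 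Hence $\diam(G_{i+1}) \geq D_i + 1 = 2^i(D_0+1)$.

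Combining these estimates, $\diam(G_{i+1}) / n_{i+1} \geq 2^i(D_0+1) / (2^{i+1} n_0) = (D_0+1)/(2 n_0)$, a positive constant, so $(G_i)_{i \geq 1}$ certifies that the maximum diameter of $L$ is $\Omega(n)$; together with the trivial bound $\diam(G) \leq |V(G)|$ this yields $\Theta(n)$. The main subtlety in the argument is the diameter lower bound: Lemma~\ref{lem:cycle_diam} only provides a large distance \emph{inside} $G_{i+1} \setminus u_{i+1} v_{i+1}$, not in $G_{i+1}$ itself, so one cannot directly read off $\diam(G_{i+1})$ from it; considering two copies of the same vertex bypasses this obstacle, since any connecting path is forced to leave copy~1 through one of the cross edges and re-enter copy~2 through the other endpoint.
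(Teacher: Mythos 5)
Your proof is correct and follows essentially the same route as the paper: iterate the duplication $G_{i+1}=G_i^{u_iv_i}$, use Lemma~\ref{lem:cycle_pump} for acceptance and Lemma~\ref{lem:cycle_diam} for the distance recurrence, and conclude that diameter grows linearly with size. The only (harmless) difference is cosmetic: the paper extracts the diameter bound from the long $u_i$--$v_i$ path inside a single $G_i$, whereas you bound the distance between the two copies of a vertex in $G_{i+1}$ via the cross-edge argument; both yield the same $\Omega(n)$ conclusion.
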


\begin{proof}
    Consider the sequence defined by $G_0=G$, $u_0=u$, $v_0=v$ and for every $i$, $G_{i+1}=G_i^{u_iv_i}$, and $u_{i+1}v_{i+1}$ is an edge in $G_{i+1}$ between the copies of $u_i$ and $v_i$. By Lemma~\ref{lem:cycle_pump}, all the graphs $G_i$ are accepted by $L$. Moreover, $|G_i|=2^i\cdot |G|$ and by Lemma~\ref{lem:cycle_diam}, the distance between $u_i$ and $v_i$ in $G_i\setminus u_iv_i$ is at least $2^i\cdot d$. Therefore, there are two vertices in $G_i$ at distance at least $2^{i-1}\cdot d$ hence $\diam(G_i)=\Theta(|G_i|)$, which concludes.
\end{proof}

\end{document}